\documentclass[11pt]{article}

\usepackage[
  hmargin=\dimexpr1in-1pt\relax,
  vmargin=\dimexpr1in-3pt\relax
]{geometry}

\usepackage{etoolbox}

\usepackage{titlesec}
\titlespacing*{\section}      {0pt}{2.25ex plus 0.55ex minus 0.30ex}{1.40ex plus 0.25ex}
\titlespacing*{\subsection}   {0pt}{1.80ex plus 0.45ex minus 0.30ex}{0.50ex plus 0.12ex}
\titlespacing*{\subsubsection}{0pt}{1.40ex plus 0.35ex minus 0.20ex}{0.20ex plus 0.10ex}
\titleformat  {\paragraph}   [runin]{\normalfont\normalsize\bfseries}{\theparagraph}   {1em}{}
\titlespacing*{\paragraph}    {0pt}{0.70ex plus 0.15ex minus 0.10ex}{1em}
\titleformat  {\subparagraph}[runin]{\normalfont\normalsize\itshape} {\thesubparagraph}{1em}{}
\titlespacing*{\subparagraph} {0pt}{0.35ex plus 0.08ex minus 0.05ex}{1em}

\usepackage{setspace}
\usepackage[T1]{fontenc}
\usepackage{mathpazo}
\usepackage{microtype}

\usepackage{amsmath, amssymb, amsthm}
\usepackage{bm}
\usepackage{xfrac}
\AtBeginDocument{
  \setlength{\abovedisplayskip}{5pt plus 1pt minus 1pt}
  \setlength{\belowdisplayskip}{5pt plus 1pt minus 1pt}
  \setlength{\abovedisplayshortskip}{5pt plus 1pt minus 1pt}
  \setlength{\belowdisplayshortskip}{5pt plus 1pt minus 1pt}
}

\usepackage{booktabs}
\usepackage{threeparttable}
\AtBeginEnvironment{table}  {\setlength{\belowcaptionskip}{3pt}}
\AtBeginEnvironment{table*} {\setlength{\belowcaptionskip}{3pt}}
\AtBeginEnvironment{figure} {\setlength{\abovecaptionskip}{6pt}}
\AtBeginEnvironment{figure*}{\setlength{\abovecaptionskip}{6pt}}
\usepackage{placeins}
\usepackage{graphicx}
\usepackage{siunitx}
\graphicspath{{figures/}}

\usepackage{algorithm}
\usepackage{algpseudocode}
\usepackage{fvextra}

\usepackage{xcolor}
\usepackage{tikz}
\usetikzlibrary{arrows.meta,positioning,calc}
\definecolor{steelblue}{HTML}{4682B4}
\definecolor{negred}{HTML}{B55D60}

\definecolor{DeepGreen}{HTML}{228B22}   
\definecolor{LinkMaroon}{HTML}{950000}
\definecolor{ChadBlue}{rgb}{0.10, 0.10, 0.50}
\newcommand{\YES}{\textsc{Yes}}
\newcommand{\NO} {\textsc{No}}
\newcommand{\NOtoYES}{$m$-\NO-to-$(n\!-\!m)$-\YES}
\newcommand{\YEStoNO}{$m$-\YES-to-$(n\!-\!m)$-\NO}
\newcommand{\bid}{b}
\newcommand{\ask}{a}

\newcommand{\auditbox}[1]{\par\nopagebreak
  {\noindent\setlength{\fboxsep}{5pt}%
   \fbox{\parbox{\dimexpr\textwidth-2\fboxsep-2\fboxrule\relax}{\small #1}}\par}}
\newcommand{\medsum}{%
  \mathop{\scalebox{0.85}{$\displaystyle\sum$}}\displaylimits
}
\newcommand{\medstar}{\scalebox{1.25}{$\scriptstyle\star$}}
\newcommand{\asterisk}{\scalebox{1.25}{$\scriptstyle *$}}

\usepackage{enumitem}

\usepackage[round,authoryear]{natbib}
\usepackage{xurl}
\usepackage{hyperref}
\hypersetup{
    hidelinks,
    colorlinks=true,
    linkcolor=LinkMaroon,
    urlcolor=LinkMaroon,
    citecolor=LinkMaroon,
    anchorcolor=LinkMaroon,
    linktocpage=true
}

\theoremstyle{plain}

\newtheorem{proposition}{Proposition}

\theoremstyle{definition}
\newtheorem{definition}{Definition}

\newtheorem{remark}{Remark}

\title{
  Combinatorial Arbitrage in Prediction Markets\thanks{We thank the Yale Center for Research Computing
  and the TUM Chair of Mathematical Statistics
  for providing computational resources and support.}
}
\author{
  Hongjian Shi\thanks{Department of Mathematics, Technical University of Munich.
  Email: \texttt{hongjian.shi@tum.de}.}
  \and
  Daojing Zhai\thanks{Baruch College, City University of New York.
  Email: \texttt{daojing.zhai@baruch.cuny.edu}.}
}
\date{
  First version: September 15, 2026\\This version: {October 25, 2026}
}

\begin{document}
{
\setlength{\parskip}{0em}
\maketitle
\vspace{-1em}
}

\begin{abstract}
\noindent We develop a framework to identify and quantify arbitrage across logically related
prediction-market contracts.  We prove that prices outside the convex hull of admissible payoff
vectors admit arbitrage and identify optimal arbitrage portfolios through integer programming.
Using venue metadata and semantic analysis of resolution criteria, we recover
1.9 million logical relations and compile them into constraint systems covering
more than 255{,}000 contracts.
We replay full-depth Polymarket order books to simulate an arbitrageur executing these
portfolios against resting orders.  Among settled portfolios, an ideal arbitrageur that
takes the full displayed depth of every opportunity and reuses its capital at resolution
earns a risk-free return of 3.5\% a week after spreads and fees.
We find that arbitrage opportunities persist longer and attract less observed participation where
relation discovery and execution are more costly.
\end{abstract}

\vspace{1em}
\noindent\textit{Keywords:} Prediction markets; combinatorial arbitrage; market microstructure;
large language models; limit order books.

\noindent\textit{JEL codes:} C61; C63; G12; G13; G14.
\clearpage

\section{Introduction}
\label{sec:intro}

Prediction markets are emerging as a new asset class.
Like conventional financial assets, prediction contracts support price discovery, hedging,
and speculation.  Their market structure, however, creates a particular challenge for consistent
pricing.  Exchanges such as Polymarket and Kalshi list many contingent claims that resolve on the
same underlying events, so logical relations tie their payoffs together.
Each contract trades in its own order book, and the exchanges do not generally enforce the price
constraints implied by these relations.  This fragmentation can create 
\textit{combinatorial~arbitrage}:\linebreak a portfolio of related contracts that guarantees 
a profit under every admissible joint resolution.

This paper develops a unified framework to identify these opportunities and to measure their
executable size and profit.  The framework has three parts.
First, we mathematically characterize when the prices of logically related contracts admit arbitrage
and find the portfolio with the largest guaranteed profit by integer programming.
Second, we recover the logical relations among contracts from interpreting venue metadata and 
from analyzing resolution text using structured parsers and large language models (LLMs).
Third, we collect Polymarket's full-depth order books and simulate an arbitrageur who trades these
portfolios against displayed depth.

More specifically, we formalize the arbitrage opportunity by representing each admissible joint
resolution as a binary payoff vector.  Quoted prices are arbitrage-free if and only if they lie in
the coherent set, the convex hull of these payoff vectors.
When prices lie outside this set, the largest guaranteed profit per unit position equals their
$\ell_1$ distance from it.  With many related contracts, the payoff vectors are too numerous to
list.  We therefore encode logical relations as a linear constraint system and solve for the
optimal portfolio by integer programming.
Quoted prices give the direction of the trade but not its size.
We then scan the displayed depth at each price level of the order book to determine how much the
arbitrageur can trade after spreads and taker fees.

The empirical analysis covers Polymarket from 23 June to 3 August 2026 and combines three data
sources: roughly 745{,}000 contracts with their metadata and resolution texts retrieved from Polymarket's
own market catalog, 73.22 billion order-book messages, and 229.9 million on-chain fill \linebreak events.
The catalog tells us how contracts relate, the order books measure what an arbitrageur can trade,
and the fills show what traders did.  We recover the logical relations in three layers.
\emph{Venue-asserted} relations come from the groups of mutually exclusive contracts that Polymarket
designates itself and from the resol\-ution rules of these groups.
\emph{Venue-derived} relations follow from the match, contract type,
and line that the venue reports for sports contracts.
\emph{Semantic-inferred} relations are
read from contract text, by structured parsers or by large language models.  In these three layers,
we recover 1.9~million relations
and reduce them to about 570{,}000 relations covering more than 255{,}000 contracts; 
we compile each layer into a separate constraint system and evaluate
interacting relations jointly within it.  An invalid relation can create a false profit guarantee,
whereas omitting a valid relation can only hide an opportunity.
We therefore screen relations using checks appropriate to each source before admitting them to the
corresponding~system.

We report three findings.  First, combinatorial arbitrage survives spreads and fees.
We detect nearly 492{,}000 arbitrage windows, the intervals over which a portfolio has positive guaranteed
profit at the best quotes, and the arbitrageur can trade in about 125{,}000 of them.
In the windows whose contracts have all resolved, an ideal arbitrageur that
takes the full displayed depth of every opportunity \textls[-1]{and reuses its capital at resolution
accumulates \$2.50 million of guaranteed profit,} a risk-free return of 3.5\% a week 
on the \$11.1 million of capital the strategy requires at its peak.
The first execution in each window alone yields \$1.10 million in guaranteed profit.
Venue-asserted and venue-derived relations contribute 34\% and 59\% of the guaranteed profit,
and semantic-inferred relations the remaining 7\%.
These profits are robust to the execution assumptions.
With a 1-second order delay, the filled orders retain 79\% of the profit at resolution.
Returns on capital also change little when re-posted depth does not count as new liquidity and when the
arbitrageur fills only part of the displayed depth. \looseness=-1

Second, opportunities persist longer where the logical relation is harder to find.
Among the windows that last at least 0.3 seconds, the median duration is 2.9 seconds for venue-asserted
relations, 6.3 seconds for venue-derived relations, and about 31 seconds for semantic-inferred
relations. \linebreak The ordering follows the work each layer requires:
the venue publishes the first layer, the second layer must be reconstructed from match fields,
and the third layer requires reading and comparing resolution text.
The differences remain after controlling for portfolio characteristics,
the number of arbitrage windows the same component opened that day,
and day and hour fixed effects:
relative to venue-asserted windows, venue-derived windows last about 2 times as long and
semantic-inferred windows about 10 times as long.
Each additional leg in a portfolio is also associated with a longer duration.
The layers differ in their contracts and execution mechanisms as well as in the information they
require, so the evidence is consistent with both discovery and execution costs limiting arbitrage.

Third, actual trading follows the arbitrage direction more often when the venue makes the relation
explicit.  In 44\% of the executed windows based on venue-asserted relations,
at least one trader is observed to trade every leg of the portfolio in the arbitrage direction.
For executed windows based on venue-derived and semantic-inferred relations,
the percentage declines to 15\% and 9\%. \linebreak
Across all three layers, these trades arrive quickly:
the median lag from the opening of a window is about 2.6 seconds;
trading is also concentrated among a small number of wallets:
854 addresses make these trades, and the 10 most active ones account for 78\% of the windows
with such a trade.
The semantic-inferred layer---and in particular its LLM component,
whose relations require the most work to identify---is indeed
the one in which opportunities last longest and the fewest are taken. \looseness=-1

\paragraph{Related literature.}
First, the paper contributes to research on prediction markets.
This literature studies how prices aggregate information \citep{wolfers2004prediction},
how price discovery is shared across venues \citep{ng2026price}, how informed traders contribute to
forecast accuracy \citep{gomezcram2026accuracy}, and how adverse selection affects liquidity
provision \citep{bartlett2026adverse}.  A smaller set of papers examines prices across related
contracts.  \citet{sonnemann2013framing} document framing effects across equivalent partitions,
and \citet{rothschild2014misalignment} document misalignment among identical and logically related
contracts and test liquidity through field trades. \linebreak
\citet{saguillo2025arbitrage} use language models to identify dependencies between Polymarket
markets and measure arbitrage within markets and between pairs of markets with equivalent payoffs.
To our knowledge, this paper is the first to evaluate logical relations of different types jointly
in one constraint system and to measure the executable size and profit of the resulting portfolios
on full-depth order books.

Second, the paper relates to the finance literature on violations of the law of one price
\citep{lamont2003anomalies}.  Familiar cases include spreads between Treasury cash and derivative
markets \citep{fleckenstein2020renting,zhai2026treasuries} and deviations from covered interest
parity \citep{du2018deviations}.  These studies attribute price differences to market segmentation, \linebreak
funding costs, and balance-sheet constraints, and models of financially constrained arbitrageurs
explain why the differences persist \citep{gromb2002equilibrium}.
They take the payoff relation as known. \linebreak Prediction markets differ in two ways.
First, the relations are unknown and numerous: they must be discovered among hundreds of thousands
of contracts.  Second, the relations are richer than equivalence:
they include mutual exclusivity, partition, implication, and conjunction.
Both features make it challenging to identify arbitrage opportunities.

Third, the paper connects to the literature on combinatorial market design.
This literature studies how a market maker can keep prices coherent across related claims
\citep{hanson2003combinatorial,abernethy2013efficient}.
Computing coherent prices is hard when the claims are many \citep{chen2008complexity},
and \citet{dudik2012tractable} and \citet{kroer2016arbitrage} make it tractable with constraint
generation and integer programming.  Recent work recovers the relations between contracts from text:
\citet{gebele2026semantic} identify equivalent or subset-related contract pairs across venues,
and \citet{capponi2025agentic} discover semantic relations for statistical trading. \linebreak
We differ in two ways.  First, we take the trader's side: prices have already formed in separate
order books, and no market maker keeps them coherent.
Second, we connect relation recovery to arbitrage: \linebreak the recovered relations become the constraints of
the arbitrage problem, and we evaluate the result-\linebreak{}ing portfolios against full-depth order books.

Finally, the paper contributes to the literature on limits to arbitrage.
Mispricing can persist because arbitrageurs face noise-trader risk \citep{delong1990noise},
constrained capital \citep{shleifer1997limits}, and holding and transaction costs
\citep{pontiff2006costly}.  \citet{moskowitz2021sports} uses sports betting markets to test asset
pricing theories and finds momentum returns that fail to overcome transaction costs,
which prevents arbitrage from eliminating them.
The opportunities we measure remain after spreads and fees, so transaction costs alone do not
explain them.  Our evidence points to the cost of discovering the relation,
an information cost in the sense of \citet{grossman1980impossibility}:
prices adjust more slowly when information requires more complicated analysis
\citep{cohen2012complicated}. \linebreak Our findings are consistent with this evidence:
opportunities persist longest, and the fewest are taken, where the relation must be inferred from
contract text.  Execution adds a second cost.
Because the legs of a portfolio are traded in separate order books,
the arbitrageur bears the risk that one leg fills while another does not.

\paragraph{Road map.}
Section~\ref{sec:market} describes the prediction market structure and the data.
Section~\ref{sec:theory} develops the framework, from quoted prices to finite-depth order books.
Section~\ref{sec:relations} recovers the logical relation system in three layers.
Section~\ref{sec:empirical} measures the opportunities, their persistence,
and the trading around them.  Section~\ref{sec:conclusion} concludes the paper.
Appendix~\ref{app:theory} collects the proofs and derivations.
Appendix~\ref{app:data} documents data collection, relation system construction, and opportunity measurement. 

\section{Prediction Market Structure and Data}
\label{sec:market}

\enlargethispage*{.75\baselineskip}

\subsection{Market Structure and the Source of Combinatorial Arbitrage}
\label{sec:institution}
Polymarket and Kalshi saw their combined monthly notional trading volume rise from under \$5 billion
in mid-2025 to above \$50 billion in July 2026.\footnote{See
\href{https://www.pewresearch.org/short-reads/2026/09/23/prediction-markets-trading-volume-doubled-%
between-may-and-july-largely-driven-by-sports/}{Pew Research Center, September 23, 2026}.}
This growth has drawn retail investors alongside Wall Street hedge funds and professional
market makers, while prediction contracts are increasingly functioning as standardized financial
instruments for hedging event risk, discovering prices, and aggregating dispersed beliefs.

A prediction-market contract is a fully collateralized binary claim through which investors express
a view on an event governed by a prespecified resolution rule.
Each contract creates a \YES{} and a \NO{} outcome token backed jointly by one dollar:
one dollar \emph{splits} into one \YES{} token and one~\NO{} token, the pair \emph{merges} back into the dollar at
any time, and at resolution the winning token pays one dollar while the other pays zero.
Before resolution, the contract trades between zero and one in a central limit order book,
and the price of the \YES{} token is commonly interpreted as the market-implied probability of a
\YES{} resolution.

The two tokens are complementary representations of the same claim.
Although the venue exposes an order book under each token identifier, the two views describe the
same economic order book: a \NO{} bid at \(b^N\) is a \YES{} ask at \(1-b^N\),
while a \NO{} ask at \(a^N\) is a \YES{} bid at \(1-a^N\).
The matching engine crosses complementary orders by splitting a dollar into the token pair or
merging the pair,
so for a taker, selling \NO{} is equivalent to buying \YES{}, and buying \NO{} is equivalent to selling
\YES{}.\footnote{Throughout the paper, we therefore express all positions and quotes in
\YES{}-equivalent units and work with a single canonical \YES{}-side order book.} 
The fee structure of Polymarket also respects the same complementarity:
whether a taker buys or sells, \linebreak the same fee formula applies---the taker fee per share 
is proportional to the product of the share price and its complement.

One distinctive feature of prediction markets is that a single real-world event can support multi\-ple
contracts.  A football match, for example, may generate contracts on the winner, the exact score,
the total number of goals, and combinations of these outcomes.
Because these contracts depend on \linebreak the same underlying realized match outcome, their terminal payoffs must
satisfy the constraints imposed by their logical relations.
This common resolution is not reflected in the trading architecture:
each contract has its own order book, and the matching engine neither enforces coherence across order books 
nor supports atomic execution of multi-leg orders across them.
Prices can therefore \linebreak appear reasonable contract by contract
yet cannot be rationalized jointly by any single belief,
creating combinatorial arbitrage opportunities.
Venues provide little direct enforcement of these constraints, so restoring coherence across order
books is left to arbitrageurs who trade against the mispricing.\footnote{\label{fn:one-exception}The one exception is
Polymarket's
\href{https://docs.polymarket.com/concepts/negative-risk}{negative-risk}
mechanism (Kalshi's
\href{https://help.kalshi.com/en/articles/13823816-collateral-return}{collateral-return}
facility plays a similar role): for venue-designated groups of mutually exclusive contracts,
one dollar of collateral \emph{mints} one \YES{} token of every contract in the group
(a split followed by a \emph{conversion} of the \NO{} token),
so selling the whole group is cheap to execute; buying it still takes one order per contract.
Section~\ref{sec:limits} shows that arbitrage opportunities inside these groups close most quickly.} \looseness=-1

\subsection{Data: Polymarket at Full Depth}
\label{sec:data}

To examine arbitrage opportunities in the prediction market, we assemble a full-depth record of
Polymarket along three dimensions: which contracts are listed (market catalog),
how each contract is quoted (full-depth order books), and what trades take place (real
transactions).\footnote{We focus on Polymarket because all three dimensions are publicly observable: 
quotes stream at full depth, the listing-\linebreak{}time resolution question is recoverable on chain,
and every trade settles on Polygon.
On centrally cleared venues such as Kalshi, settlement occurs inside the exchange's clearinghouse,
so published trades cannot be linked to individual accounts. \looseness=-1}
The record covers 23 June to 3 August 2026, and over these 42 days we observe roughly 745{,}000
binary contracts, 73.22 billion order-book messages, and 229.9 million on-chain fill events resolving
to 145.16 million observed trades that carry \$5.23 billion of volume.
Table~\ref{tab:data-sources} summarizes the three parts of the dataset. 

\paragraph{Market catalog.}
We retrieve the venue's listing catalog and reconstruct its hierarchy from event pages down to
binary conditions and outcome tokens.  For each contract we observe its identifiers, fee schedule,
negative-risk designation, and the displayed resolution text preserved in our catalog snapshot.
The hierarchy tells us which contracts can share an underlying state, and the criteria text is the
raw material from which Section~\ref{sec:relations} recovers their precise logical relations.
Relations are audited against this preserved catalog text rather than against the listing-time
\texttt{QuestionInitialized} record on chain. 
The catalog covers roughly 745{,}000 market records on 178{,}000 event pages and identifies 20{,}866
negative-risk groups (Table~\ref{tab:data-sources}).
Appendix~\ref{app:metadata} details its construction.

\paragraph{Full-depth order books.}
Polymarket provides no retrospective depth feed, so we operated a live collector throughout the
sample, recording venue-wide book snapshots and price-level updates with exchange and receipt
timestamps.  Full depth makes execution measurable.
Across valid two-sided quote states, the pooled median spread is 8.0 cents,
and median depth is 40 shares at the best bid and 30 shares at the best ask.
Orders exceeding the quantity at the best quote execute at additional price levels.
Best quotes alone therefore do not determine their execution costs. 
The full-depth order-book \linebreak data contain 73.22 billion messages, reduced to about 3.6 billion valid two-sided
quote states for venue-wide detection; executable size is always measured on the full order books
reconstructed for each candidate window.  Appendices~\ref{app:collector}
and~\ref{app:reconstruction} detail the collection and the~reconstruction.

\begin{table}[!t]
\centering
\caption{Summary statistics of the Polymarket record}
\label{tab:data-sources}
\begin{threeparttable}
\footnotesize
\setlength{\tabcolsep}{3pt}
\renewcommand{\arraystretch}{0.975}
\begin{tabular*}{\textwidth}{@{\extracolsep{\fill}}l!{\hspace{15pt}}rrrrrrr@{}}
\toprule
 & Sports & Crypto & Politics & Finance & Weather & Other & All \\
\midrule
\multicolumn{8}{@{}l}{\textit{Panel A: Market catalog (daily)}} \\
\addlinespace[2.5pt]
Active contracts & 50{,}603 & 8{,}366 & 4{,}077 & 2{,}727 & 2{,}061 & 1{,}865 & 69{,}698 \\
Active event pages & 5{,}033 & 6{,}042 & 1{,}238 & 497 & 230 & 314 & 13{,}355 \\
\midrule
\multicolumn{8}{@{}l}{\textit{Panel B: Full-depth order books (daily)}} \\
\addlinespace[2.5pt]
Quote states & 30{,}471{,}793 & 12{,}805{,}233 & 11{,}245{,}638 & 9{,}180{,}315 & 13{,}542{,}637 & 8{,}163{,}812 & 85{,}409{,}429 \\
Active order books & 90{,}388 & 15{,}524 & 7{,}449 & 5{,}149 & 3{,}388 & 3{,}335 & 125{,}233 \\
Bid--ask spread (cents) & 31.54 & 4.74 & 8.20 & 14.15 & 3.97 & 12.65 & 9.42 \\
\midrule
\multicolumn{8}{@{}l}{\textit{Panel C: Real transactions (daily)}} \\
\addlinespace[2.5pt]
Trading volume (USD) & 88{,}886{,}882 & 18{,}580{,}218 & 8{,}401{,}972 & 1{,}043{,}160 & 2{,}009{,}231 & 578{,}857 & 124{,}558{,}668 \\
Observed trades & 786{,}286 & 2{,}276{,}899 & 119{,}357 & 27{,}774 & 193{,}715 & 22{,}118 & 3{,}456{,}169 \\
Trader addresses & 50{,}143 & 26{,}848 & 19{,}164 & 5{,}435 & 7{,}435 & 4{,}332 & 94{,}983 \\
\bottomrule
\end{tabular*}
\begin{tablenotes}[flushleft]
\footnotesize
\item\!\textit{Notes:} Daily means over 23 June--3 August 2026 (42 days). Other pools technology, culture and untagged
contracts. Active contracts and event pages are those with a quoted book that day; the spread is the
within-day median over valid two-sided quote states; volume counts each observed trade once. The class columns cover the contracts in the catalog snapshot. \looseness=-1
\end{tablenotes}
\end{threeparttable}
\vspace{-5pt}
\end{table}

\paragraph{Real transactions.}
We decode every \texttt{OrderFilled} settlement event emitted by 
Polymarket's two (regular and negative-risk) exchange smart contracts
on Polygon over our sample period. We refer to these records as \emph{On-Chain Fills}. 
Each~fill reports the traded token, side, price, quantity, fee, 
maker and taker identifiers, transaction hash and log index, and block timestamp;
the token identifiers \linebreak link each~fill back to the catalog and the reconstructed order
books.  These records show whether actual trades follow the detected portfolios.
Section~\ref{sec:takers} matches fills to their contracts and trading directions,
while Sections~\ref{sec:implementation}--\ref{sec:limits} measure opportunities at displayed quotes.
The transaction data contain 229.9 million \texttt{OrderFilled} events, which resolve to 145.16
million observed trades carrying \$5.23 billion of volume among 679{,}000 distinct trader addresses
(Table~\ref{tab:data-sources}).  The class columns of the table cover the trades on contracts in
the catalog snapshot, 99.1 percent of the trades and 95.9~percent of the volume;  the remainder is
on contracts outside the snapshot, chiefly short-lived crypto and in-play sports markets that had
ended before it opens. Appendix~\ref{app:onchain-fills} details the retrieval. 

\FloatBarrier
\section{A Framework for Combinatorial Arbitrage}
\label{sec:theory}

Building on the institutional primitives in Section~\ref{sec:market}, this section develops a
framework for combinatorial arbitrage in three steps.
Section~\ref{sec:coherent} characterizes arbitrage as a failure of price coherence and measures its
profit as a distance.  Section~\ref{sec:ip} compiles the logical relations into a constraint system
and solves for the optimal portfolio by integer programming.
Section~\ref{sec:executable} then replaces quoted prices with order books, where 
executable size and guaranteed profit depend on spreads, depth, and fees. \looseness=-1

\subsection{Characterizing Arbitrage through Price Coherence}
\label{sec:coherent}
We begin with a single quoted price per contract, isolating the payoff constraints imposed by
logical relations.  The no-arbitrage characterization follows the Dutch-book argument
of~\citet{definetti1937prevision}.

\paragraph{No-arbitrage characterization.}
Let $\Omega$ be a finite state space, with $\omega\in\Omega$ denoting a state of the world.
A finite collection of binary \YES{} contracts, indexed by $i \in I$, trades at quoted prices
$\bm{p} = (p_i)_{i \in I} \in [0,1]^{I}$, and the payoff map $\bm{\phi} : \Omega \to \{0,1\}^{I}$
records the amount $\phi_i(\omega)$ that one share of contract $i$ pays at resolution in state
$\omega$.  We maintain binary settlement in the modeled state space.
Section~\ref{sec:institution} establishes the \YES{}--\NO{} payoff equivalence,
so we express all positions and quotes in \YES{}-equivalent units.
A \emph{portfolio} $\bm{y} \in \mathbb{R}^{I}$ has initial cost $\bm{p}^\top\bm{y}$ and terminal
payoff $\bm{y}^\top\bm{\phi}(\omega)$; negative components of $\bm{y}$ are short positions,
and a negative initial cost means cash is received at inception.
The portfolio is held until resolution, at which point its terminal payoff is realized.\footnote{The
empirical guarantees are conditional on binary settlement, so Section~\ref{sec:implementation}
evaluates only windows whose filled legs resolve to zero or one.
In the quoted-price benchmark, the portfolio can be unwound before resolution once quotes return to
the coherent set.  With finite-depth order books, however, future liquidation spreads, depth,
and fees are unknown when the portfolio is formed, so early-exit profit cannot be guaranteed ex
ante.  Appendix~\ref{app:early-liquidation} considers this possibility; \linebreak
throughout the empirical analysis, we therefore hold portfolios to resolution.}
Definition~\ref{def:arb} and Proposition~\ref{prop:coherence} formalize arbitrage and price
coherence.  Proposition~\ref{prop:l1} characterizes the largest guaranteed profit and an optimal
portfolio.

\begin{definition}[Arbitrage]
\label{def:arb}
A portfolio $\bm{y} \in \mathbb{R}^{I}$ is an \emph{arbitrage} at prices $\bm{p}$ if its
\emph{guaranteed profit}, the minimum profit across states, is strictly positive:
\begin{equation}
\label{eq:arb}
\min_{\omega \in \Omega}\bm{y}^\top \bm{\phi}(\omega)\,-\,\bm{p}^\top\bm{y} > 0,
\end{equation}
that is, if its \emph{worst-case terminal payoff} $\min_{\omega}\bm{y}^\top\bm{\phi}(\omega)$
exceeds its initial cost.
\end{definition}

\begin{proposition}[No arbitrage $\iff$ coherent prices]
\label{prop:coherence}
Let $M := \operatorname{conv}\{\bm{\phi}(\omega) : \omega \in \Omega\}$ denote the convex hull of
the payoff vectors across all states.
No arbitrage exists at prices $\bm{p}$ if and only if $\bm{p} \in M$.
\end{proposition}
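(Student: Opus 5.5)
The plan is to prove the two directions separately: the easy direction by averaging over states, the hard direction by strict separation of a point from a compact convex set. Throughout, write $\pi(\bm{y},\bm{p}) := \min_{\omega}\bm{y}^\top\bm{\phi}(\omega) - \bm{p}^\top\bm{y}$ for the guaranteed profit in \eqref{eq:arb}.

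\emph{Coherent prices admit no arbitrage.} Suppose $\bm{p}\in M$. Then $\bm{p}=\sum_{\omega}\lambda_\omega\bm{\phi}(\omega)$ for some weights $\lambda_\omega\ge 0$ with $\sum_\omega\lambda_\omega=1$. For any portfolio $\bm{y}$ the initial cost is an average of terminal payoffs:
\[
\bm{p}^\top\bm{y}=\sum_{\omega}\lambda_\omega\,\bm{y}^\top\bm{\phi}(\omega)\;\ge\;\min_{\omega}\bm{y}^\top\bm{\phi}(\omega).
\]
Hence $\pi(\bm{y},\bm{p})\le 0$ for every $\bm{y}$, so no arbitrage exists.

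\emph{Incoherent prices admit an arbitrage.} Suppose $\bm{p}\notin M$. Since $\Omega$ is finite, $M$ is the convex hull of finitely many points, so it is a nonempty compact convex polytope. The strict separating hyperplane theorem then gives a vector $\bm{y}\in\mathbb{R}^I$ and a scalar $c$ with
\[
\bm{y}^\top\bm{q}>c>\bm{y}^\top\bm{p}\quad\text{for all } \bm{q}\in M.
\]
A concrete choice is $\bm{y}=\bm{q}^\star-\bm{p}$, where $\bm{q}^\star$ is the Euclidean projection of $\bm{p}$ onto $M$. The variational inequality for the projection gives $(\bm{q}-\bm{q}^\star)^\top\bm{y}\ge 0$ for all $\bm{q}\in M$. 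Therefore $\bm{y}^\top\bm{q}\ge\bm{y}^\top\bm{q}^\star=\bm{y}^\top\bm{p}+\|\bm{y}\|_2^2>\bm{y}^\top\bm{p}$.

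Apply this at each vertex $\bm{\phi}(\omega)\in M$. Because the minimum is over finitely many states, $\pi(\bm{y},\bm{p})\ge\|\bm{y}\|_2^2>0$, so $\bm{y}$ is an arbitrage in the sense of Definition~\ref{def:arb}.

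\emph{Main obstacle.} The only substantive step is the separation argument, and specifically ensuring the inequality is strict, so that the profit in \eqref{eq:arb} is strictly positive. This rests on two facts, and I would state both explicitly. First, $M$ is closed: it is compact as the hull of finitely many points, which uses the finiteness of $\Omega$. Second, $\bm{p}\neq\bm{q}^\star$ because $\bm{p}\notin M$, so $\|\bm{y}\|_2^2>0$.

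Equivalently, one could derive the same dichotomy from LP duality, via Farkas' lemma applied to the system $\sum_\omega\lambda_\omega\bm{\phi}(\omega)=\bm{p}$, $\bm{1}^\top\bm{\lambda}=1$, $\bm{\lambda}\ge 0$. Its alternative yields $\bm{y}$ and a scalar $t$ with $\bm{y}^\top\bm{\phi}(\omega)+t\ge 0$ for all $\omega$ and $\bm{y}^\top\bm{p}+t<0$. That formulation is likely the more convenient one later for Proposition~\ref{prop:l1}.
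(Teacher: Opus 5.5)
Your proof is correct and follows the paper's argument. For coherent prices you write $\bm{p}$ as a convex combination of payoff vectors, which the paper phrases as zero expected profit under a belief $q$; for incoherent prices you strictly separate $\bm{p}$ from the compact convex set $M$, and your projection choice $\bm{y}=\bm{q}^\star-\bm{p}$ (with its explicit profit bound $\|\bm{y}\|_2^2$) is a concrete instance of the separating-hyperplane step the paper invokes.
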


\begin{proposition}[Guaranteed profit as $\ell_1$ distance]
\label{prop:l1}
The worst-case terminal payoff of every portfolio $\bm{y}$ is equal to $\min_{\bm{\mu}\in M}\bm{y}^\top\bm{\mu}$.
Under the unit-position normalization $\|\bm{y}\|_\infty\le1$, the largest profit any portfolio can
guarantee equals the $\ell_1$ distance from the quote vector to the coherent set,
\begin{align}
\label{eq:l1-distance}
  \max_{\|\bm{y}\|_\infty\le1}\;\min_{\omega\in\Omega}\big\{\bm{y}^\top\bm{\phi}(\omega)-%
  \bm{p}^\top\bm{y}\big\}
=\min_{\bm{\mu}\in M}\|\bm{\mu}-\bm{p}\|_1,
\end{align}
and an optimal portfolio buys every contract priced below the minimizing coherent price vector
$\bm{\mu}^{\medstar}$ and sells every contract priced above it.
\end{proposition}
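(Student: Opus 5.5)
The first claim holds because a linear function attains its minimum over a polytope at a vertex. Every $\bm{\mu}\in M$ is a convex combination $\sum_\omega \lambda_\omega \bm{\phi}(\omega)$, so
\[
\bm{y}^\top\bm{\mu}=\sum_\omega \lambda_\omega\, \bm{y}^\top\bm{\phi}(\omega)\ge \min_\omega \bm{y}^\top\bm{\phi}(\omega).
\]
Conversely, each $\bm{\phi}(\omega)$ lies in $M$, so the minimum over $M$ is at most the minimum over states. Equality follows. Subtracting the constant $\bm{p}^\top\bm{y}$, the guaranteed profit of $\bm{y}$ equals $\min_{\bm{\mu}\in M}\bm{y}^\top(\bm{\mu}-\bm{p})$.

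For the identity \eqref{eq:l1-distance}, I would apply a minimax theorem to the bilinear function $f(\bm{y},\bm{\mu})=\bm{y}^\top(\bm{\mu}-\bm{p})$ on $\{\|\bm{y}\|_\infty\le 1\}\times M$. Both sets are compact and convex ($M$ is a polytope because $\Omega$ is finite), and $f$ is bilinear. Von Neumann's (or Sion's) minimax theorem then gives
\[
\max_{\|\bm{y}\|_\infty\le1}\min_{\bm{\mu}\in M} f=\min_{\bm{\mu}\in M}\max_{\|\bm{y}\|_\infty\le1} f .
\]
The inner maximum on the right is the dual-norm identity $\max_{\|\bm{y}\|_\infty\le 1}\bm{y}^\top\bm{v}=\|\bm{v}\|_1$, attained at $y_i=\operatorname{sign}(v_i)$. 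The right-hand side is therefore $\min_{\bm{\mu}\in M}\|\bm{\mu}-\bm{p}\|_1$. By the first claim, the left-hand side is exactly the left side of \eqref{eq:l1-distance}.

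For the optimal portfolio, I would use the saddle-point property. Let $\bm{\mu}^{\medstar}$ attain the minimum, which exists because $M$ is compact. The minimax equality guarantees a saddle point $(\bm{y}^{\medstar},\bm{\mu}^{\medstar})$. Since $\bm{\mu}^{\medstar}$ is a best response, $\bm{y}^{\medstar}$ must attain $\bm{y}^\top(\bm{\mu}^{\medstar}-\bm{p})=\|\bm{\mu}^{\medstar}-\bm{p}\|_1$. That forces $y^{\medstar}_i=+1$ (buy) whenever $p_i<\mu^{\medstar}_i$ and $y^{\medstar}_i=-1$ (sell) whenever $p_i>\mu^{\medstar}_i$.

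The main obstacle is this last step. A sign vector $\operatorname{sign}(\bm{\mu}^{\medstar}-\bm{p})$ built from an arbitrary minimizer need not itself be optimal, because the components with $p_i=\mu^{\medstar}_i$ are unspecified. So I would not construct $\bm{y}^{\medstar}$ from $\bm{\mu}^{\medstar}$ directly. Instead I would take $\bm{y}^{\medstar}$ to be a maximin strategy, and show via the chain
\[
\min_{\bm{\mu}\in M} f(\bm{y}^{\medstar},\bm{\mu})\le f(\bm{y}^{\medstar},\bm{\mu}^{\medstar})\le \|\bm{\mu}^{\medstar}-\bm{p}\|_1,
\]
where the outer terms are equal, that it must agree in sign with $\bm{\mu}^{\medstar}-\bm{p}$ on every component where they differ. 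On the remaining components $y^{\medstar}_i$ may be anything in $[-1,1]$.

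An alternative route avoids the minimax theorem and gives Proposition~\ref{prop:coherence} as a corollary. Write the $\ell_1$ projection as a linear program in $(\bm{\mu},\bm{\lambda},\bm{t})$ with constraints $\pm(\bm{\mu}-\bm{p})\le \bm{t}$. Its LP dual is exactly the maximin portfolio problem with $\|\bm{y}\|_\infty\le1$, and complementary slackness delivers the buy/sell signs. I would mention this as a remark, but use the minimax argument as the main proof.
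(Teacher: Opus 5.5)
Your proof is correct and follows the paper's argument essentially step for step. Both proofs use the convex-hull identity for the worst-case payoff, and both exchange the optimizations on $\{\|\bm{y}\|_\infty\le1\}\times M$, with the dual-norm identity turning the inner maximum into $\|\bm{\mu}-\bm{p}\|_1$. Both then pair a maximin $\bm{y}^{\medstar}$ with a minimizer $\bm{\mu}^{\medstar}$ to get the buy/sell signs, including the same caveat that components with $p_i=\mu^{\medstar}_i$ are not pinned down by $\bm{\mu}^{\medstar}$ alone; your chain of inequalities makes the termwise equality in $\sum_i y^{\medstar}_i(\mu^{\medstar}_i-p_i)\le\sum_i|\mu^{\medstar}_i-p_i|$ more explicit than the paper does.
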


\vspace{-.5em}

Proofs of both propositions are in Appendix~\ref{app:proofs}.
The two propositions carry the framework's economic content.
$M$ collects the expected payoff vectors generated by all beliefs over states:
each point of $M$ is $\mathbb{E}_q[\bm{\phi}]$ for some belief $q$, so $\bm{p}\in M$ means that one
probability distribution prices every contract at its expected payoff,
and the quotes reflect a single coherent belief.
If $\bm{p}$ falls outside $M$, no belief rationalizes the quotes and an arbitrage exists;
the best portfolio guarantees exactly the $\ell_1$ distance from the quotes to the coherent set,
and the separating direction $\bm{y}$ is the trade itself.

\begin{figure}[!t]
\centering
\includegraphics[width=\textwidth]{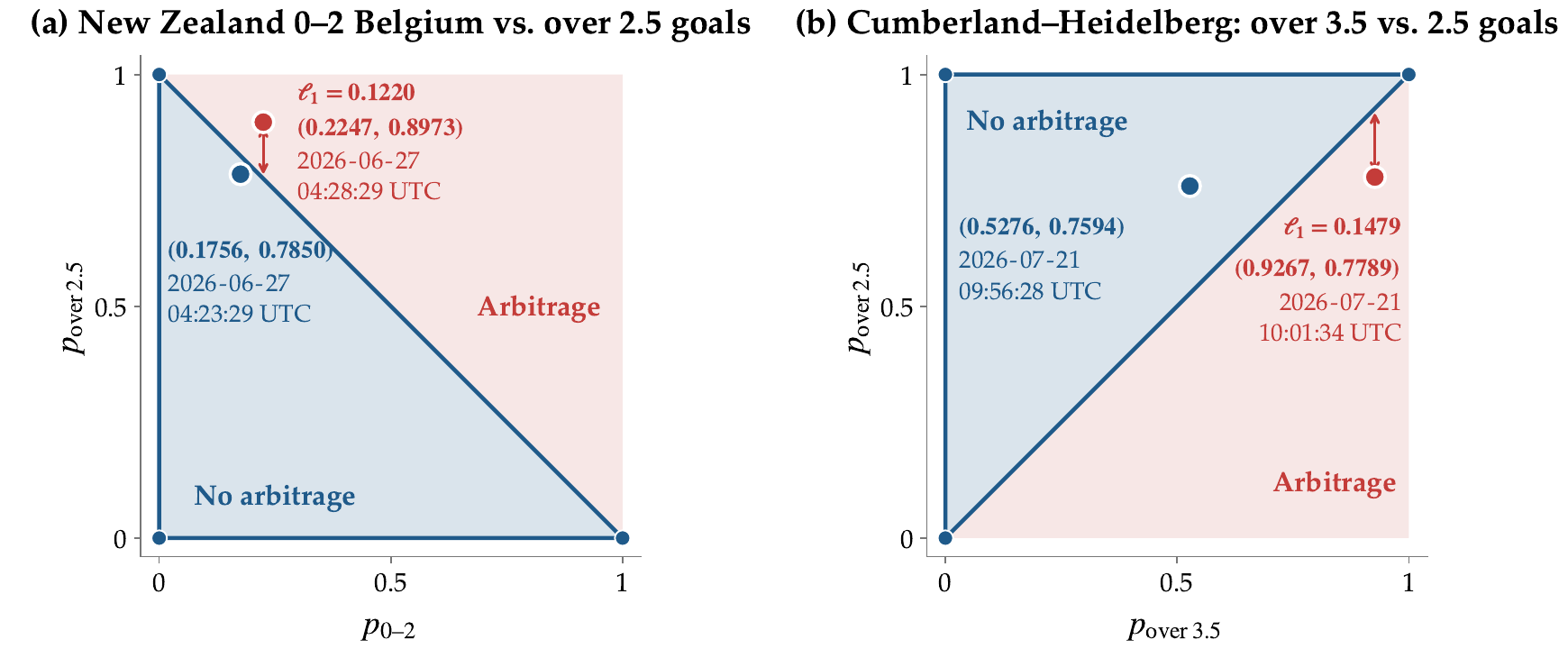}%
\vspace{-3pt}
\caption{Price coherence in two football contract pairs, June and July 2026.
Panels (a) and (b) show the New Zealand--Belgium exact score 0--2 versus over 2.5 goals and the
Cumberland--Heidelberg over 3.5 goals versus over 2.5 goals;
arrows mark the $\ell_1$ distance to coherence.}
\label{fig:sell-buy-asymmetry}
\vspace{-5pt}
\end{figure}

Figure~\ref{fig:sell-buy-asymmetry}
illustrates two contract pairs of different types drawn from the data.
Panel (a) shows two mutually exclusive (not collectively exhaustive) contracts
``\href{https://polymarket.com/sports/world-cup/fifwc-nzl-bel-2026-06-26}%
{Exact Score: New Zealand 0--2 Belgium?}'' \linebreak
and
``\href{https://polymarket.com/sports/world-cup/fifwc-nzl-bel-2026-06-26}%
{New Zealand vs. Belgium: Over 2.5 goals}'':
a 0--2 final score has two goals and cannot coincide with more than 2.5 goals in total,
so the only admissible payoff vectors are $(0,0)$, $(1,0)$, and~$(0,1)$, \linebreak 
whose convex hull is the blue triangle.
The fee-adjusted bid vector $(0.1756,0.7850)$ lies inside it;
five minutes later, the fee-adjusted bid vector $(0.2247,0.8973)$ lies outside the no-arbitrage
region at an $\ell_1$ distance of $0.1220$ from it, 
so selling one share of each contract guarantees \$0.1220 after fees. \linebreak
Panel (b) shows two nested~contracts from an Australia Cup match,
``\href{https://polymarket.com/event/auc-cum-hei-2026-07-21}%
{Cumberland United vs. Heidelberg United: Over 3.5 goals}''
and
``\href{https://polymarket.com/event/auc-cum-hei-2026-07-21}%
{Cumberland United vs. Heidelberg United: Over 2.5 goals}'':
more than 3.5 goals are also more than 2.5 goals, so the admissible payoff vectors $(0,0)$, $(0,1)$,
and $(1,1)$ \linebreak place the coherent set above the diagonal.
The fee-adjusted bid for the over-3.5 contract and ask for the over-2.5 contract form the execution
quote vector:
$(0.5276,0.7594)$ is coherent, whereas five minutes later $(0.9267,0.7789)$ lies $0.1479$
below the diagonal, so selling the over-3.5 contract and buying the over-2.5 contract guarantees
\$0.1479 after fees.  Figure~\ref{fig:additional-relations} in Appendix~\ref{app:logical-examples}
draws the same geometry for partition and conjunction relations.

\subsection{Detecting Combinatorial Arbitrage via Integer Programming}
\label{sec:ip}
When only two or three contracts are involved, the corresponding arbitrage can be easily identified
geometrically, as illustrated in Figure~\ref{fig:sell-buy-asymmetry} for two contracts and
Figure~\ref{fig:additional-relations} for three.  As the number of related contracts grows, however,
this geometric characterization becomes less tractable. 
We therefore formulate the problem as an integer program
\citep{schrijver1986theory,nemhauser1988integer}. \looseness=-1

\enlargethispage*{.5\baselineskip}

The program must first determine a given portfolio's lowest payoff across all payoff vectors
admitted by the logical relation system.  Let $\bm{z}\in\{0,1\}^{I}$ denote a joint payoff vector,
where $z_i=1$ exactly when contract $i$ resolves \YES{}; there are up to $2^{|I|}$ possible binary
payoff vectors.  Each logical relation excludes some vectors and compiles into one or more linear constraint
rows, as in Table~\ref{tab:relation-templates}. \linebreak Stacking these constraint rows in matrix form as $F\bm{z}\le\bm{g}$
defines the \emph{constraint system} $(I,F,\bm{g})$\footnote{%
When no confusion arises, we write $(F,\bm{g})$ or simply $F$ for this constraint system.}
and yields the \emph{admissible payoff set} $\Phi:=\{\bm{z}\in\{0,1\}^{I}:F\bm{z}\leq\bm{g}\}$,
which coincides with $\{\bm{\phi}(\omega):\omega\in\Omega\}$ when the compiled constraint system
exactly characterizes the payoff set induced by the state space.
In that case, the coherent set $M$ of Section~\ref{sec:coherent} is given by $\operatorname{conv}(\Phi)$,
the integer hull (the convex hull of all integer points) of the LP-relaxation polytope
$\Phi_{\mathrm{frac}}\!:=\!\{\bm{z}\in[0,1]^{I}:F\bm{z}\leq\bm{g}\}$.
\looseness=-1

The system in fact changes over time: the admissible payoff set is
$\Phi_t:=\{\bm{z}\in\{0,1\}^{I_t}: F_t\bm{z}\leq\bm{g}_t\}$ \linebreak at instant $t$,
where $I_t$ denotes the set of contracts
listed at $t$ and $F_t\bm{z}\le\bm{g}_t$ stacks the rows compiled from the relations in effect at
$t$, that is, the relations whose contracts are all listed at $t$.
We take the time-aggregated system $(I_*,F_*,\bm{g}_*)$ to be a reduced system such that,
for every $t$, $I_t\subseteq I_*$ and the rows of $F_*\bm{z}\le\bm{g}_*$ in effect at $t$ form a
subset of the rows of $F_t\bm{z}\le\bm{g}_t$ that implies the remaining rows for every binary
$\bm{z}$, so that these rows alone define $\Phi_t$.
Every proposition is stated for a general~$(I,F,\bm{g})$ and therefore holds at each $t$.
Detection at $t$ uses $(I_t,F_t,\bm{g}_t)$, and Section~\ref{sec:relations}
describes the aggregate $F_*$. \looseness=-1

\begin{table}[!t]
\centering
\caption{Logical relations, their constraint rows, and the licensed trades}
\label{tab:relation-templates}
\begin{threeparttable}
\footnotesize
\setlength{\tabcolsep}{3pt}
\renewcommand{\arraystretch}{1.2}
\begin{tabular*}{\textwidth}{@{\extracolsep{\fill}}
                             >{\raggedright\arraybackslash}p{0.12\textwidth}
                             >{\raggedright\arraybackslash}p{0.16\textwidth}
                             >{\raggedright\arraybackslash}p{0.17\textwidth}
                             >{\raggedright\arraybackslash}p{0.17\textwidth}
                             >{\raggedright\arraybackslash}p{0.28\textwidth}
                             @{}}
\toprule
Relation & Meaning & Constraints & Violation at quotes & Licensed trade \\
\midrule
\textbf{Mutual exclusivity} \newline of $A$ and $B$
  & At most one of the two contracts pays
  & $z_A+z_B\le 1$\footnotemark\xdef\pairwiseFn{\thefootnote}
  & $\tilde b_A+\tilde b_B>1$
  & Sell one share of $A$ and one of $B$ \\
\addlinespace[10pt]
\textbf{Partition} \newline of $A_1,\dots,A_n$
  & Exactly one of the \newline $n$ contracts pays
  & $z_{A_1}+\dots+z_{A_n}\le 1$ \newline $z_{A_1}+\dots+z_{A_n}\ge 1$
  & $\sum_i\tilde b_{A_i}>1$ \newline $\sum_i\tilde a_{A_i}<1$
  & Sell one share of every contract \newline Buy one share of every contract \\
\addlinespace[10pt]
\textbf{Implication} \newline $A$ implies $B$
  & Whenever $A$ pays, $B$ pays
  & $z_A\le z_B$
  & $\tilde b_A>\tilde a_B$
  & Sell one share of $A$, buy one of $B$ \\
\addlinespace[10pt]
\textbf{Conjunction} \newline $C=A\wedge B$
  & $C$ pays exactly when both \newline $A$ and $B$ pay
  & $z_C\le z_A$ \newline $z_C\le z_B$ \newline $z_C\ge z_A+z_B-1$
  & $\tilde b_C>\tilde a_A$ (or $\tilde a_B$) \newline $\tilde a_C<\tilde b_A+\tilde b_B-1$
  & Sell $C$, buy $A$ (or $B$) \newline Buy $C$, sell $A$ and $B$ \\
\bottomrule
\end{tabular*}
\begin{tablenotes}[flushleft]
\footnotesize
\item\!\textit{Notes:} $z_i\in\{0,1\}$ is the payoff of contract $i$, and $\tilde b_i$ and $\tilde a_i$ are its
fee-adjusted best bid and best ask.
\end{tablenotes}
\end{threeparttable}
\end{table}
\footnotetext[\pairwiseFn]{\hypertarget{Hfootnote.\pairwiseFn}{}\phantomsection\label{fn:pairwise}%
For binary payoffs $z_i$, $i\in I$,
mutual exclusivity of every pair (pairwise exclusivity) is equivalent to
mutual exclusivity of the whole set (\;\!$\sum_{i\in I} z_i \leq 1$);
see the proof in Appendix~\ref{app:proofs}.}

Given the compiled constraint system $(I,F,\bm{g})$, 
because payoffs depend on the state only through its payoff vector, 
the minimum over the finite set $\Phi=\{\bm{z}\in\{0,1\}^{I}:F\bm{z}\leq\bm{g}\}$
is attained and 
recovers exactly the minimum over $\Omega$ in Definition~\ref{def:arb}. 
Thus the largest guaranteed profit can be written as \linebreak
\(
\label{eq:largest-guaranteed-profit}
\max_{\|\bm{y}\|_\infty\le1}\{(\min_{\bm{z}\in\Phi}\bm{y}^\top\bm{z})-\bm{p}^\top\bm{y}\}.
\)
Arbitrage detection then proceeds in two nested steps. 

\paragraph{Inner problem.}
Fix a proposed portfolio $\bm{y}$. Finding an admissible payoff vector
under which this portfolio pays least is the following integer program:
\begin{equation}
\label{eq:worst-case-payoff}
V(\bm{y})
:=
\min_{\bm{z}}\ \bm{y}^\top\bm{z}
\quad
\text{subject to }
F\bm{z}\leq\bm{g},\qquad
\bm{z}\in\{0,1\}^{I}.
\end{equation}
The portfolio $\bm{y}$ is an arbitrage if and only if its worst-case terminal payoff
$V(\bm{y})$ is larger than $\bm{p}^\top\bm{y}$. 

\paragraph{Outer problem.}
The portfolio $\bm{y}$ is not fixed in advance.
We now optimize over the portfolio itself, searching over normalized portfolios for the one with the
largest guaranteed profit:
\begin{equation}
\label{eq:detector-maxmin}
\pi^{\medstar}
=
\max_{\|\bm{y}\|_\infty\leq1}
\big\{V(\bm{y})-\bm{p}^\top\bm{y}\big\}.
\end{equation}
A positive $\pi^{\medstar}$ identifies an arbitrage, the optimizer $\bm{y}^{\medstar}$ specifies the trade;
the value $\pi^{\medstar}$ equals the $\ell_1$~distance of Proposition~\ref{prop:l1}.

\paragraph{Solving the maximin problem by cutting planes.}
To see how the two problems fit together, let $\eta$ be a portfolio's claimed worst-case
terminal payoff.  Each admissible payoff vector $\bm{z}$ imposes the inequality
$\bm{z}^\top\bm{y}\ge \eta$; imposing this inequality for every $\bm{z}\in\Phi$ and maximizing
$\eta-\bm{p}^\top\bm{y}$ reproduces the outer problem in~\eqref{eq:detector-maxmin}.
Direct evaluation would therefore require enumerating all admissible payoff vectors in $\Phi\subseteq\{0,1\}^{I}$,
whose number can grow exponentially with the number of related contracts.\footnote{Most detected
opportunities involve only a few contracts, but the detector is evaluated at every relevant
best-quote update.  A six-contract connected component has $2^6=64$ possible binary payoff vectors;
at the scale of about 3.59 billion best-quote states in our sample, repeating even this small
enumeration would entail about 230 billion payoff-vector evaluations.
Even small payoff sets can therefore be expensive to enumerate repeatedly across the full sample.} \looseness=-1

Algorithm~\ref{alg:detector} may help avoid exhaustive enumeration:
start with a small working set $\Phi_{\mathrm{work}}\subseteq \Phi$, \linebreak
initialized here as a singleton $\{\bm{z}_0\}$.  First, the \textsc{Master} step chooses a portfolio $\bm{y}$ and its claimed
payoff~$\eta$ that maximize~$\eta-\bm{p}^\top\bm{y}$, subject to the condition that
\(\bm{z}^\top\bm{y}\ge \eta\) holds for every \(\bm{z}\in\Phi_{\mathrm{work}}\). \linebreak
The \textsc{Oracle} step then solves the inner problem in \eqref{eq:worst-case-payoff} and
identifies an admissible payoff vector~$\bm{z}_{\mathrm{new}}$ under which the current portfolio pays least.
If $\bm{y}^\top\bm{z}_{\mathrm{new}}$ is less than $\eta$, 
the \textsc{Master}'s claim fails and $\bm{z}_{\mathrm{new}}$ enters the working set;
otherwise the claimed payoff holds over all of $\Phi$, and the current port\-folio solves the full
problem.  Each failed claim adds a new vector from the finite set $\Phi$,
ensuring finite termination; the worst case may still iterate over every vector.
Algorithm~\ref{alg:detector} is thus the cutting-plane \linebreak method of \citet{kelley1960cutting} adapted
to the concave piecewise-linear function $V$ in \eqref{eq:worst-case-payoff}, with the integer program in
\eqref{eq:worst-case-payoff} acting as the separation oracle in the sense of
\citet{grotschel1988geometric}. 

\vspace{-2.5pt}
\begin{algorithm}[!htbp]
\caption{Cutting-plane arbitrage detector}
\label{alg:detector}
\begin{algorithmic}[1]
\Require quote vector $\bm{p}$; constraints $(F,\bm{g})$; any admissible payoff vector $\bm{z}_0$
\State $\Phi_{\mathrm{work}}\gets\{\bm{z}_0\}$
\Loop
  \State $(\bm{y},\eta)\gets\Call{Master}{\bm{p},\Phi_{\mathrm{work}}}$
  \Comment{best claimed payoff under known payoff vectors}
  \State $\bm{z}_{\mathrm{new}}\gets\Call{Oracle}{\bm{y},F,\bm{g}}$
  \Comment{worst-case payoff vector}
  \If{$\bm{y}^\top\bm{z}_{\mathrm{new}}\ge \eta$}
    \State \Return $\bm{y}^{\medstar}=\bm{y}$ and
    $\pi^{\medstar}=\eta-\bm{p}^\top\bm{y}$
    \Comment{claimed payoff certified}
  \Else
    \State $\Phi_{\mathrm{work}}\gets
    \Phi_{\mathrm{work}}\cup\{\bm{z}_{\mathrm{new}}\}$
    \Comment{add violated constraint}
  \EndIf
\EndLoop
\end{algorithmic}
\end{algorithm}
\vspace{-7.5pt}

\enlargethispage*{.75\baselineskip}
As long as the constraint system remains unchanged, 
reusing previously discovered payoff vectors as \textsc{Master} constraints
across quote states can reduce the number of subsequent \textsc{Oracle} calls. \linebreak
In Appendix~\ref{app:portfolio}, we detail the linear-programming reformulation behind Algorithm~\ref{alg:detector}.

\subsection{Measuring Executable Arbitrage in Finite-Depth Order Books}
\label{sec:executable}

The quoted-price benchmark abstracts from two features of real execution.
First, execution costs depend on bid--ask spreads, depth at each price level, and taker fees.
Second, the unit-position normalization identifies the optimal portfolio direction but not its
executable size.  To address both limitations, we replace the linear quoted-price cost
$\bm{p}^\top\bm{y}$ with the actual execution cost, and we keep the payoff side of
Sections~\ref{sec:coherent} and~\ref{sec:ip} unchanged.
We then answer three questions in turn: when an executable arbitrage exists, how large it is,
and how Algorithm~\ref{alg:detector} computes it. 

\paragraph{The execution cost of one contract.}
For contract~\(i\), we arrange its bids and asks with their quantities into a price ladder
with $B_i$ bid levels and $A_i$ ask levels,
such that the fee-adjusted bid prices~$\tilde b_{i\ell}$ and \linebreak ask prices $\tilde a_{i\ell}$
are ordered from lowest to highest
\(\tilde b_{iB_i} \le \cdots \le \tilde b_{i2} \le \tilde b_{i1} \le
\tilde a_{i1} \le \tilde a_{i2} \le \cdots \le \tilde a_{iA_i}\):
\begin{equation}
\label{eq:price-ladder}
\begin{array}{c@{\qquad}c}
(\tilde b_{iB_i},q_{iB_i}^{\bid}),\enspace \ldots,\enspace
(\tilde b_{i2},q_{i2}^{\bid}),\enspace (\tilde b_{i1},q_{i1}^{\bid}),
&
(\tilde a_{i1},q_{i1}^{\ask}),\enspace (\tilde a_{i2},q_{i2}^{\ask}),\enspace
\ldots,\enspace (\tilde a_{iA_i},q_{iA_i}^{\ask}).
\\[-0.5ex]
\xleftarrow[\text{\qquad\qquad\quad selling more walks down the bids}]{}
&
\xrightarrow[\text{buying more walks up the asks \qquad\qquad\qquad}]{}
\end{array}
\end{equation}
Let $K_i(s)$ denote the total cost of changing the position in contract~$i$ by $s$ shares, with $s>0$ for buying and $s<0$ for selling.
We obtain $K_i(s)$ by moving stepwise up or down the ladder: 
as an order reaches deeper price levels, additional shares are bought at higher asks or sold at lower bids.
For example, a purchase of $s$ shares with
$q^{\ask}_{i1}+q^{\ask}_{i2}< s\le q^{\ask}_{i1}+q^{\ask}_{i2}+q^{\ask}_{i3}$ costs
\(K_i(s)=\tilde a_{i1}q^{\ask}_{i1}+\tilde a_{i2}q^{\ask}_{i2}
+\tilde a_{i3}\bigl(s-q^{\ask}_{i1}-q^{\ask}_{i2}\bigr).\) \linebreak
For $s<0$, a sale of $|s|$ shares walks down the bids in the same way,
and its cost equals the negative of the proceeds, so $K_i(s)<0$.
Trading nothing costs nothing, so $K_i(0)=0$.
A trade beyond the displayed depth cannot be executed, so $K_i(s)=+\infty$
for $s$ outside $[-Q^{\bid}_i,\,Q^{\ask}_i]$,
where \(Q^{\bid}_i:=\medsum\nolimits_{\ell=1}^{B_i} q^{\bid}_{i\ell}\) \linebreak
and \(Q^{\ask}_i:=\medsum\nolimits_{\ell=1}^{A_i} q^{\ask}_{i\ell}\)
are the displayed depths on the bid and ask sides.

\paragraph{Executable portfolios and their profit.}
The execution cost of a portfolio $\bm{y}$ is the sum of the costs of its legs,
\(K(\bm{y}):=\sum_{i\in I}K_i(y_i)\), so $K(\bm{0})=0$.
A portfolio can be executed exactly when its cost is finite, \linebreak so the \emph{executable set} is
\begin{equation}
\label{eq:executable-set}
D:=\operatorname{dom}K=\{\bm{y}\in\mathbb{R}^{I}:K(\bm{y})<+\infty\}
=\prod_{i\in I}\bigl[-Q^{\bid}_i,\,Q^{\ask}_i\bigr],
\end{equation}
a box whose sides are the displayed depths.  For a portfolio $\bm{y}\in D$, the executable profit under payoff vector $\bm{z}\in\Phi$ is
\(\bm{y}^\top\bm{z}-K(\bm{y})\), and its guaranteed executable profit is
\begin{equation}
\label{eq:guaranteed-executable-profit}
  \min_{\bm{z}\in\Phi}\big\{\bm{y}^\top\bm{z}-K(\bm{y})\big\}
  =
  \min_{\bm{z}\in\Phi}\bm{y}^\top\bm{z}-K(\bm{y}).
\end{equation}
The payoff side is unchanged, whereas $K(\bm{y})$ records the prices, depth, and fees consumed along
the order book.  In parallel with Definition~\ref{def:arb}, a portfolio $\bm{y}\in D$ is an
\emph{executable arbitrage} if its guaranteed executable profit is strictly positive.

Proposition~\ref{prop:executable} extends the single-price benchmark of 
Proposition~\ref{prop:coherence} by incorporating trading costs. \linebreak
No executable arbitrage exists if and only if a single belief 
can value every contract within its fee-\linebreak{}adjusted bid--ask spread.
Spreads and fees therefore allow some incoherence among quoted prices without creating
a profitable trade: arbitrage requires discrepancies that persist 
at actual execution prices \citep{jouini1995martingales,pennanen2011arbitrage}.
Appendix~\ref{app:proofs} provides the proof and 
recovers Propositions~\ref{prop:coherence} and~\ref{prop:l1} as the special case of the linear quoted-price cost.

\begin{proposition}[Executable arbitrage]
\label{prop:executable}
\:\!\!Over the executable set $D$, the maximum guaranteed executable profit \looseness=-1
\begin{equation}
  \Pi^{\medstar}
  =
  \max_{\bm{y}\in D}
  \min_{\bm{z}\in\Phi}
  \big\{\bm{y}^\top\bm{z}-K(\bm{y})\big\}
\label{eq:exec-duality}
\end{equation}
is nonnegative and is attained at an optimal executable portfolio $\bm{y}^{\medstar}$.
An executable arbitrage exists, that is, $\Pi^{\medstar}>0$, if and only if no coherent price
vector lies within every fee-adjusted spread:
\begin{equation}
\label{eq:exec-no-arb}
M\cap\prod_{i\in I}\bigl[\tilde b_{i1},\,\tilde a_{i1}\bigr]=\emptyset .
\end{equation}
\end{proposition}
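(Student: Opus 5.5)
The plan is to write the objective as $f(\bm{y}):=V(\bm{y})-K(\bm{y})$ on the box $D$, with $V(\bm{y})=\min_{\bm{z}\in\Phi}\bm{y}^\top\bm{z}$, and to study $f$ through the support function of the spread box $S:=\prod_{i\in I}[\tilde b_{i1},\tilde a_{i1}]$.

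\emph{Existence and nonnegativity.} Since the fee-adjusted asks are nondecreasing and the bids nonincreasing along the ladder \eqref{eq:price-ladder}, each $K_i$ is piecewise linear with nondecreasing slopes on $[-Q^{\bid}_i,Q^{\ask}_i]$. Hence each $K_i$ is convex and continuous there, and so is $K$ on $D$. By Proposition~\ref{prop:l1}, $V(\bm{y})=\min_{\bm{\mu}\in M}\bm{y}^\top\bm{\mu}$ is a minimum of finitely many linear functions, so it is concave and continuous. Therefore $f$ is continuous (indeed concave) on the compact box $D$, and the maximum in \eqref{eq:exec-duality} is attained at some $\bm{y}^{\medstar}$. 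Because $\bm{0}\in D$ and $f(\bm{0})=0$, we get $\Pi^{\medstar}\ge0$.

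\emph{The key comparison.} Let $\sigma_S(\bm{y}):=\max_{\bm{p}\in S}\bm{p}^\top\bm{y}=\sum_i\max\{\tilde a_{i1}y_i,\tilde b_{i1}y_i\}$. Two facts drive the argument.
\begin{enumerate}[label=(\roman*)]
\item $K(\bm{y})\ge\sigma_S(\bm{y})$ on $D$. For $s>0$, every share bought costs at least $\tilde a_{i1}$, so $K_i(s)\ge\tilde a_{i1}s$. For $s<0$, every share sold yields at most $\tilde b_{i1}$, so $K_i(s)\ge\tilde b_{i1}s$.
\item Equality holds whenever $-q^{\bid}_{i1}\le y_i\le q^{\ask}_{i1}$ for every $i$, because then only the top level is consumed on each leg.
\end{enumerate}

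\emph{Direction ``$\Leftarrow$'' (contrapositive).} Suppose some $\bm{\mu}\in M\cap S$. Then for every $\bm{y}\in D$,
\[
V(\bm{y})\le\bm{y}^\top\bm{\mu}\le\sigma_S(\bm{y})\le K(\bm{y}),
\]
so $f\le0$ on $D$ and $\Pi^{\medstar}=0$.

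\emph{Direction ``$\Rightarrow$'' (contrapositive).} Suppose $M\cap S=\emptyset$. Both sets are nonempty, compact and convex, so strict separation gives a $\bm{y}$ with
\[
\min_{\bm{\mu}\in M}\bm{y}^\top\bm{\mu}>\max_{\bm{p}\in S}\bm{p}^\top\bm{y},
\qquad\text{that is,}\qquad V(\bm{y})>\sigma_S(\bm{y}).
\]
Both $V$ and $\sigma_S$ are positively homogeneous. Choose $t>0$ small enough that $t\bm{y}$ stays within the top-level depths; then $K(t\bm{y})=t\,\sigma_S(\bm{y})$ by (ii), and
\[
f(t\bm{y})=t\bigl(V(\bm{y})-\sigma_S(\bm{y})\bigr)>0,
\]
so $\Pi^{\medstar}>0$. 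Setting $K(\bm{y})=\bm{p}^\top\bm{y}$ with the unit box recovers Proposition~\ref{prop:coherence} as a remark.

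\emph{Expected main obstacle.} The delicate step is the boundary bookkeeping in the last direction. The small-scaling argument needs $q^{\ask}_{i1}>0$ whenever $y_i>0$ and $q^{\bid}_{i1}>0$ whenever $y_i<0$. If a side of some book is empty, the convention must be $\tilde a_{i1}=+\infty$ or $\tilde b_{i1}=-\infty$, so that $S$ becomes unbounded in that coordinate. I would handle this by separating $M$ (compact) from the closed convex set $S$, which still yields strict separation. A separating $\bm{y}$ must then have $y_i\le0$ whenever the ask side is empty, since otherwise $\sigma_S(\bm{y})=+\infty$ and the strict inequality fails; symmetrically $y_i\ge0$ whenever the bid side is empty. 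The scaled portfolio therefore trades only on existing depth. I would verify this sign restriction carefully before finalizing.
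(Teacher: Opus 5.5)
Your argument is correct, and it takes a genuinely different route from the paper's. The paper proves a general statement (Proposition~3$^{+}$) for any proper closed convex cost $K$ with $K(\bm{0})=0$ and compact domain. It invokes a minimax theorem to obtain the dual formula $\Pi^{\medstar}=\min_{\bm{\mu}\in M}K^{\asterisk}(\bm{\mu})$. It then notes that $K^{\asterisk}\ge0$, with equality exactly on $\partial K(\bm{0})$, and uses continuity of the finite convex $K^{\asterisk}$ to attain the minimum on $M$. Only at the end does it compute $\partial K(\bm{0})=\prod_{i}[\tilde b_{i1},\tilde a_{i1}]$ for the order-book cost.

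Your inequality $K\ge\sigma_S$ is exactly the inclusion $S\subseteq\partial K(\bm{0})$, so your first implication is the weak-duality half of the paper's argument. The real difference is the converse. You obtain it without any minimax theorem or conjugate: strict separation of $M$ from $S$, combined with the fact that the polyhedral cost equals the sublinear $\sigma_S$ on the top-of-book box, shows that a scaled-down separating direction is an explicit arbitrage.

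Your route has three advantages. It is elementary. It is constructive, exhibiting an arbitrage that uses only best-level depth, which is precisely the sense in which the best quotes alone decide existence. And your treatment of empty book sides (separating $M$ from a closed, possibly unbounded $S$ and deriving the sign restriction on $\bm{y}$) is more careful than the paper, whose bracket notation tacitly assumes both best quotes exist.

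The paper's route has two advantages. It delivers the value $\Pi^{\medstar}=\min_{M}K^{\asterisk}$ rather than just its sign, which is what lets the same result recover the $\ell_1$ formula of Proposition~\ref{prop:l1}; your quoted-price specialization recovers only Proposition~\ref{prop:coherence}. It also holds for any closed convex cost, whereas your identity $K(t\bm{y})=t\,\sigma_S(\bm{y})$ relies on the cost being linear along rays near the origin.

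A cosmetic point: your direction labels are muddled. The argument starting from $M\cap S\neq\emptyset$ proves $\Pi^{\medstar}>0\Rightarrow M\cap S=\emptyset$, and the separation argument is a direct proof of the converse, not a contrapositive. Both implications are nonetheless covered.
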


While the best quotes $\tilde b_{i1}$ and
$\tilde a_{i1}$ determine whether an arbitrage exists, the full order book determines
its optimal scale and total guaranteed profit.  As a portfolio expands, higher purchase prices and
lower sale proceeds erode the gain from additional trades.
A large price discrepancy supported by little depth can therefore offer less profit than a smaller
discrepancy in a deeper market. \looseness=-1

Algorithm~\ref{alg:detector} carries over with the same payoff constraints and worst-case
\textsc{Oracle} step; 
the \linebreak \textsc{Master} step now uses the order-book cost $K$ and the executable set $D$;
because $K$ is piecewise linear, \linebreak it remains a linear program.  The cap on the number of legs in
Section\,\ref{sec:implementation} is the exception: the portfolios that satisfy it do not form a
convex set.
We use auxiliary indicator variables,
so the \textsc{Master} \linebreak step becomes a mixed-integer linear program.
Appendix~\ref{app:portfolio} provides computational details.

Section~\ref{sec:implementation} measures executable arbitrage in two steps before settlement:
the detector uses the best quotes to identify arbitrage windows, and the trade simulation
reoptimizes quantities on the detected legs within each window using the full-depth order book.
This separates the detection of inconsistent prices from the measurement of the profit available
at displayed quantities.

\section{Recovering the Logical Relation System at Scale}
\label{sec:relations}

Section~\ref{sec:theory} takes the constraint system $(F,\bm{g})$ as given.
Prediction-market venues, however, list contracts individually and do not publish the complete
logical structure.  Section~\ref{sec:classification-errors} first defines the Type I and Type II
errors that can arise when recovering these relations and shows why their effects on arbitrage
detection are asymmetric.  Section~\ref{sec:identify} then explains how we recover the logical
structure in three layers. 

\subsection{Scale and Relation Errors}
\label{sec:classification-errors}

In Section~\ref{sec:theory}, we implicitly assumed a complete and valid logical relation system.
We now examine two possible errors in recovering this system.
First, $N$ contracts generate $N(N-1)/2$ possible pairs and up to $2^N$ subsets for multivariate
search.  Exhaustive search is therefore infeasible, so valid relations may remain unidentified.
Second, a proposed relation may itself be incorrect.
We call the admission of an invalid relation a Type I error and the omission of a valid relation a
Type II error.

A Type I error directly changes the objects defined in Section~\ref{sec:theory}.
Adding a false constraint to $F\bm{z}\leq\bm{g}$ can exclude a payoff vector that is feasible under
the true relation system, producing an incorrectly small payoff set $\Phi^{\dagger}\subsetneqq\Phi$
and coherent set $M^{\dagger}=\operatorname{conv}(\Phi^{\dagger})$.
The detector then minimizes over the wrong subset, so the calculated worst-case payoff can exceed
the true worst-case payoff.  The reported guaranteed profit and the portfolio weights
chosen to maximize it may both be wrong; in an excluded feasible state,
the same portfolio can lose money.  Repeated small apparent gains can therefore conceal a large loss
when that state occurs---the ``picking up pennies in front of a steamroller'' problem discussed by
\citet[p.~182]{stulz2007hedge}.

A Type II error instead leaves out a valid constraint and therefore expands the payoff set.
Proposition~\ref{prop:underidentification} shows that this error has a more limited consequence.
Appendix~\ref{app:proofs} gives the proof.

\begin{proposition}[False omission is conservative]
\label{prop:underidentification}
Let $\Phi$ be the payoff set generated by the full set of valid logical relations,
and let $\Phi^{-}\supseteq\Phi$ be the payoff set obtained by omitting valid relations without
adding invalid ones.  For any execution-cost function $K$, executable set $D=\operatorname{dom}K$,
and portfolio $\bm{y}\in D$,
\vspace{-1pt}\begin{equation}\vspace{-1pt}
\label{eq:omitted-relation-payoff}
\min_{\bm{z}\in\Phi^{-}}\{\bm{y}^{\top}\bm{z}-K(\bm{y})\}
\;\leq\;
\min_{\bm{z}\in\Phi}\{\bm{y}^{\top}\bm{z}-K(\bm{y})\}.
\end{equation}
The maximal guaranteed profit computed under $\Phi^{-}$ therefore cannot exceed that computed under
$\Phi$.  Any portfolio classified as an arbitrage under $\Phi^{-}$ remains an arbitrage under the
true payoff set $\Phi$.
\end{proposition}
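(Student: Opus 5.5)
The argument rests on one elementary fact: a minimum taken over a larger set cannot be larger. Fix $K$, $D=\operatorname{dom}K$ and a portfolio $\bm{y}\in D$, so that $K(\bm{y})$ is a finite constant. For each payoff vector, write the executable profit as
\[
h(\bm{z}):=\bm{y}^\top\bm{z}-K(\bm{y}).
\]
Both $\Phi$ and $\Phi^{-}$ are finite subsets of $\{0,1\}^{I}$. The set $\Phi$ is nonempty because it is the payoff set of the nonempty state space $\Omega$. Hence both minima in \eqref{eq:omitted-relation-payoff} are attained. Let $\bm{z}^{\ast}\in\Phi$ attain $\min_{\Phi}h$. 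Since $\Phi\subseteq\Phi^{-}$, the vector $\bm{z}^{\ast}$ is feasible for the left-hand minimum, so
\[
\min_{\Phi^{-}}h\le h(\bm{z}^{\ast})=\min_{\Phi}h.
\]
This proves \eqref{eq:omitted-relation-payoff}. Note that subtracting the common constant $K(\bm{y})$ does not affect the comparison, which is why the claim holds for an arbitrary cost function.

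For the claim about maximal guaranteed profit, I will take the supremum over $\bm{y}\in D$ on both sides. Pointwise domination of one function by another on $D$ carries over to their suprema. It therefore gives
\[
\sup_{\bm{y}\in D}\min_{\Phi^{-}}h\le\sup_{\bm{y}\in D}\min_{\Phi}h.
\]
If attainment is wanted, I will note that Proposition~\ref{prop:executable} applies equally to any finite nonempty payoff set, and in particular to $\Phi^{-}$. For the quoted-price benchmark, the same argument applies with $K(\bm{y})=\bm{p}^\top\bm{y}$ restricted to $\|\bm{y}\|_\infty\le1$. There the inequality reads $\operatorname{dist}_{1}(\bm{p},\operatorname{conv}\Phi^{-})\le\operatorname{dist}_{1}(\bm{p},\operatorname{conv}\Phi)$, which is consistent with Proposition~\ref{prop:l1} because $\operatorname{conv}\Phi\subseteq\operatorname{conv}\Phi^{-}$.

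For the final sentence, suppose $\bm{y}$ is classified as an arbitrage under $\Phi^{-}$, meaning its left-hand side in \eqref{eq:omitted-relation-payoff} is strictly positive. Then its right-hand side is also strictly positive, so $\bm{y}$ is an executable arbitrage under the true set $\Phi$. The true set $\Phi$ equals $\{\bm{\phi}(\omega)\}$ by assumption, so the guarantee holds in every actual state.

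No step is a genuine obstacle. The only points needing care are that the minima exist, which follows from finiteness and nonemptiness, and that $K(\bm{y})<\infty$, which holds because $\bm{y}\in D$. Without the latter, both sides would equal $-\infty$.
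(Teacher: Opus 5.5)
Your proof is correct and is essentially the paper's argument. A minimum over the larger set $\Phi^{-}\supseteq\Phi$ is weakly smaller, the inequality survives maximization over the common set $D$, and strict positivity under $\Phi^{-}$ therefore carries over to $\Phi$. Your extra care points are sound details the paper leaves implicit: attainment via finiteness and nonemptiness, finiteness of $K(\bm{y})$ on $D$, and the $\ell_1$-distance consistency check.
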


\vspace{-.5em}

Type II errors are not costless, though:
they reduce detection power and the associated profit opportunities.
The proposition shows only that relations left unidentified because of scale cannot manufacture a
guaranteed profit that does not exist.  We therefore seek broad coverage while requiring stronger
evidence before admitting a relation.  The next subsection describes how three complementary layers
pursue both goals.

\vspace{-.25em}

\subsection{Recovering Relations in Three Layers}
\label{sec:identify}

We build the relation system in three layers, as shown in Figure~\ref{fig:relation-pipeline-v2}.
First, we obtain \emph{venue-asserted} relations from the venue's collateral mechanism and 
the resolution rules of venue-designated groups.
Second, for sports contracts, we construct \emph{venue-derived} relations by mapping typed
fields
into a common score state.
Third, we extract \emph{semantic-inferred} relations from natural-language contract text. \linebreak
Each layer uses evidence suited to its source and 
checks relations before they enter the system. \looseness=-1

\begin{figure}[!t]
\centering
\includegraphics[width=\textwidth]{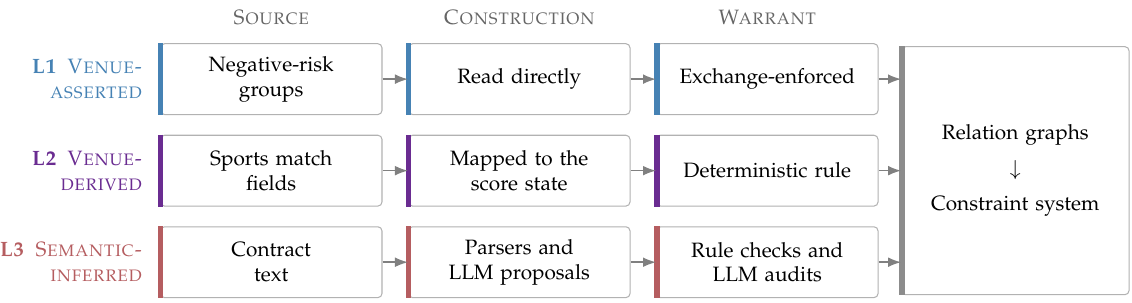}%
\vspace{-5pt}
\caption{Three relation layers and their checks.}
\label{fig:relation-pipeline-v2}
\vspace{-10pt}
\end{figure}

\enlargethispage*{1\baselineskip}

\paragraph{Layer 1: Venue-asserted relations.}
Polymarket places contracts that it treats as mutually exclusive \linebreak in the same negative-risk group,
identified by a common \texttt{negRiskMarketID}.
We check these mutual-\textls[-5]{exclusivity relations against the venue's collateral mechanism.
For example, group
\href{https://gamma-api.polymarket.com/events/slug/fifwc-alg-aut-2026-06-27}%
{\texttt{0x143fa1\ldots cc00}}}
contains three contracts for the Algeria--Austria World Cup match on 28 June 2026:
``\href{https://polymarket.com/sports/world-cup/fifwc-alg-aut-2026-06-27}{Algeria wins}'',
``\href{https://polymarket.com/sports/world-cup/fifwc-alg-aut-2026-06-27}{Draw}'',
and
``\href{https://polymarket.com/sports/world-cup/fifwc-alg-aut-2026-06-27}{Austria wins}''.
Exactly one of these three contracts resolves \YES{} (the 
``\href{https://polymarket.com/sports/world-cup/fifwc-alg-aut-2026-06-27}{Draw}''
contract would resolve to \YES{} if the match were cancelled without a make-up game),
so we encode the group as the partition
constraint $z_{\mathrm{ALG}}+z_{\mathrm{Draw}}+z_{\mathrm{AUT}}=1$.
The 20{,}866 negative-risk groups in our sample yield 20{,}118 partition and 748 mutual-exclusivity
constraints.\footnote{Negative-risk group membership alone is not sufficient to guarantee
collective exhaustiveness: the venue rules out more than one \YES{} outcome but not an all-\NO{}
outcome.  We therefore read the complete contract rules of every group. \linebreak
For 20{,}118 of the 20{,}866 groups the rules establish collective exhaustiveness,
and then we impose equality. 
The remaining \linebreak 748 groups, whose rules contradict exactly-one settlement or which we could not
assess, retain only the mutual-exclusivity inequality.
Buying or selling \YES{} baskets can yield arbitrage
in the former, whereas only selling can do so in the latter.\vspace{-2pt}}

\paragraph{Layer 2: Venue-derived relations.}
In some contract classes, constraints are not encoded directly by the venue but follow
deterministically from its typed metadata.  For sports contracts, Polymarket reports the match,
contract type, and line.  We map these fields into 
a representation of the common score state that determines each contract's payoff.
This mapping reveals the logical relations among contracts on the same match.
The venue-derived layer produces more relations than either of the other two layers
(Table~\ref{tab:relation-system-layers}). We retain only relations verified by the deterministic
score-state mapping. \looseness=-1

\paragraph{Layer 3: Semantic-inferred relations.}
For relations that are not encoded in Layers 1 or 2, we use the natural-language contract text.
Some contracts contain standardized numerical thresholds or deadlines that can be parsed
mechanically.  For example,
``\href{https://polymarket.com/event/bitcoin-above-on-july-28-2026-5am-et/bitcoin-above-65000-on-%
july-28-2026-5am-et}{Bitcoin above \$65{,}000 on July 28, 5AM ET?}'' \linebreak
implies
``\href{https://polymarket.com/event/bitcoin-above-on-july-28-2026-5am-et/bitcoin-above-64800-on-%
july-28-2026-5am-et}{Bitcoin above \$64{,}800 on July 28, 5AM ET?}''.
Our parsers first match the underlying event, measured quantity, unit, horizon, resolution source,
and operative deadline.  They then infer only the relations implied by those matched fields.

\subparagraph{LLM proposals and auditing.}
Our parsers cover standardized patterns, but other relations require interpreting free-form
resolution criteria.  We use large language models (LLMs) to search for these relations.
Because comparing every contract pair is infeasible, we first pre-screen the pairs.
We rank contract pairs by weighted word overlap across their event titles, questions,
and group titles, giving more weight to distinctive words and phrases.
This pre-screen cuts 277 billion possible pairs to 704{,}450 candidate pairs and 13{,}894 candidate
groups.  Appendix~\ref{app:candidate-retrieval} describes the retrieval algorithm and group
construction.

We first use the five leading open-weight LLMs available in September 2026 as
\emph{proposal models}: \linebreak Qwen3-32B \citep{yang2025qwen3}, Qwen3-235B-A22B \citep{qwen2025qwen235},
Llama-4-Maverick \citep{meta2025llama4}, DeepSeek-V4-Flash \citep{deepseek2026v4},
and GLM-5.2 \citep{zai2026glm52}. For each candidate, these proposal models receive the relevant
contract metadata and natural-language text.  Each model is prompted to return a relation type and direction,
exact textual evidence, and an explanation, or to abstain when the evidence is insufficient.

In addition, because proposal models are error-prone, every LLM proposal is reviewed by two
\emph{audit models}:
Claude Fable 5.1 \citep{anthropic2026fable51}
and 
GPT-5.6 Sol \citep{openai2026gpt56sol}---later supplemented by 
GPT-6 Astra \citep{openai2026gpt6astra}.
All are frontier LLMs available as of September~2026 and \linebreak are deployed as agents with max or ultra reasoning effort.
Each auditor independently checks the contract text and
dates, searches for a counterexample, and determines whether a relation exists and, if so,
what type of relation the evidence supports. We retain a proposal only when both auditors agree on
its type and direction; all other proposals are removed.
Appendices~\ref{app:prompt-templates} and~\ref{app:llm-examples} report the prompts and audit examples.

In Table~\ref{tab:relation-audit}, the percentage of proposals that both auditors confirm 
varies widely across the five proposal models and tracks model size only loosely.
GLM-5.2 has the highest confirmation rate: \linebreak it proposes 137{,}055 pairwise and 2{,}011
multivariate relations, of which 64.6\% and 21.7\%, respectively, are confirmed by both audit
models.

\begin{table}[!t]
\centering
\caption{Audit-model agreement by proposal model}
\label{tab:relation-audit}
\begin{threeparttable}
\footnotesize
\begin{tabular*}{\textwidth}{@{\extracolsep{\fill}}lrrrrrr@{}}
\toprule
& \multicolumn{2}{c}{Parameters (billions)} & \multicolumn{2}{c}{Pairwise relations} & \multicolumn{2}{c}{Multivariate units} \\
\cmidrule(lr){2-3}\cmidrule(lr){4-5}\cmidrule(lr){6-7}
Proposal source & Active & Total & Proposed & Confirmed & Proposed & Confirmed \\
\midrule
Llama-4-Maverick  & 17   & 400  & 99{,}361  & 18{,}651 (18.8\%) & 8{,}818 & 349 (4.0\%) \\
Qwen3-32B         & 32.8 & 32.8 & 90{,}398  & 23{,}084 (25.5\%) & 7{,}871 & 283 (3.6\%) \\
DeepSeek-V4-Flash & 13   & 284  & 128{,}084 & 49{,}729 (38.8\%) & 9{,}298 & 702 (7.6\%) \\
Qwen3-235B-A22B   & 22   & 235  & 135{,}202 & 58{,}963 (43.6\%) & 6{,}007 & 488 (8.1\%) \\
GLM-5.2           & 40   & 744  & 137{,}055 & 88{,}571 (64.6\%) & 2{,}011 & 437 (21.7\%) \\
\midrule
Union set         & --   & --   & 239{,}653 & 102{,}288 (42.7\%) & 23{,}519 & 1{,}153 (4.9\%) \\
\bottomrule
\end{tabular*}
\begin{tablenotes}[flushleft]\footnotesize
\item\!\emph{Notes:} Confirmed means that both auditors confirm the model's original relation.
\end{tablenotes}
\end{threeparttable}
\vspace{-5pt}
\end{table}

\paragraph{The relation system. }
Table~\ref{tab:relation-system-layers} reports the overall system. 
Panel A counts the 1{,}942{,}133 relations the three layers recover, including the ones implied by others;
Panel B counts the 573{,}024 relations of the reduced system, of which 5{,}268 appear
in two layers; these relations compile into 603{,}242 rows of $F_*$.
Each relation compiles into the rows of Table~\ref{tab:relation-templates},
and the last column of Table~\ref{tab:relation-system-layers} counts these rows by layer.
Layer 2 contributes the most because the score-state mapping is deterministic and applies to
every contract carrying the venue's typed sports fields, the largest contract class in our
sample.  Layer 3 contributes 118{,}792 relations after deduplication across 
the dual-audited LLM outputs of Table~\ref{tab:relation-audit}
and the outputs of the numeric and temporal parsers.

Coverage of semantic-inferred relations may not and need not be complete.
Candidate retrieval, model abstention, and limited computation can all omit valid relations,
creating Type II errors.  Such omissions reduce detection power, but they cannot manufacture an
opportunity that is not there.  More accurate models and greater computation could search a wider
candidate set and reduce these omissions further.  By Proposition~\ref{prop:underidentification},
provided the retained relations are valid, the guaranteed profit computed on an incomplete system is
a lower bound on the guaranteed profit under the complete one, so every portfolio we certify remains
an arbitrage once the missing relations are restored.

Figure~\ref{fig:relation-system-map1} shows how the three layers contribute differently across event
classes and logical relation types.  Layer~1 is mostly partitions:
20{,}118 of the 20{,}866 venue-designated negative-risk groups in our sample are compiled as exhaustive partitions.
Layer~2 compares the payoffs of contracts on the same match---exact scores, moneylines, spreads, totals,
both-teams-to-score, and team totals---across all possible scores.
This common underlying score state enables a single layer to recover all four logical
relation types.
Within Layer~2, implication is the most common type, with 246{,}848 relations. \linebreak
Layer~3 reaches the classes largely left uncovered by the other two layers and is dominated by
implication relations among threshold contracts in crypto and finance.

\begin{table}[!p]
\centering
\caption{Recovered and reduced relations by layer}
\label{tab:relation-system-layers}
\begin{threeparttable}
\footnotesize
\begin{tabular*}{\textwidth}{@{\extracolsep{\fill}}lrrrrrrr@{}}
\toprule
Relation layer & \!\!\!\!\!\!\!\!\!\!\!\!Mutual exclusivity & Partition & Implication & Conjunction & Other & Relations & Rows \\
\midrule
\multicolumn{8}{@{}l}{\textit{Panel A: Recovered relations in $F$}} \\
\addlinespace[2pt]
Layer 1: Venue-asserted    & 748       & 20{,}118  & 0         & 0         & 0         & 20{,}866  & 40{,}984  \\
Layer 2: Venue-derived     & 870{,}389 & 5{,}254   & 480{,}310 & 2{,}370   & 0         & 1{,}358{,}323 & 1{,}368{,}317 \\
Layer 3: Semantic-inferred & 25{,}448  & 25        & 537{,}381 & 0         & 90        & 562{,}944 & 563{,}050 \\
\midrule
Total                      & 896{,}585 & 25{,}397  & 1{,}017{,}691 & 2{,}370   & 90        & 1{,}942{,}133 & 1{,}972{,}351 \\
\midrule[0.08em]
\multicolumn{8}{@{}l}{\textit{Panel B: Reduced relations in $F_*$}} \\
\addlinespace[2pt]
Layer 1: Venue-asserted    & 748       & 20{,}118  & 0         & 0         & 0         & 20{,}866  & 40{,}984  \\
Layer 2: Venue-derived     & 178{,}894 & 5{,}254   & 246{,}848 & 2{,}370   & 0         & 433{,}366 & 443{,}360 \\
Layer 3: Semantic-inferred & 14{,}292  & 25        & 104{,}385 & 0         & 90        & 118{,}792 & 118{,}898 \\
\midrule
Total                      & 193{,}934 & 25{,}397  & 351{,}233 & 2{,}370   & 90        & 573{,}024 & 603{,}242 \\
\bottomrule
\end{tabular*}
\begin{tablenotes}[flushleft]\footnotesize
\item\!\emph{Notes:} Panel A counts recovered relations, with identical constraints merged within
each layer. Panel B reports counts for the reduced system. Relations shared by different layers are
counted in each layer. Other counts constraint rows not in the logical relation types of
Table~\ref{tab:relation-templates}.
\end{tablenotes}
\end{threeparttable}
\end{table}

\begin{figure}[!p]
\centering
\includegraphics[
  width=\textwidth,
  trim=0 0.35cm 0 0.15cm,
  clip
]{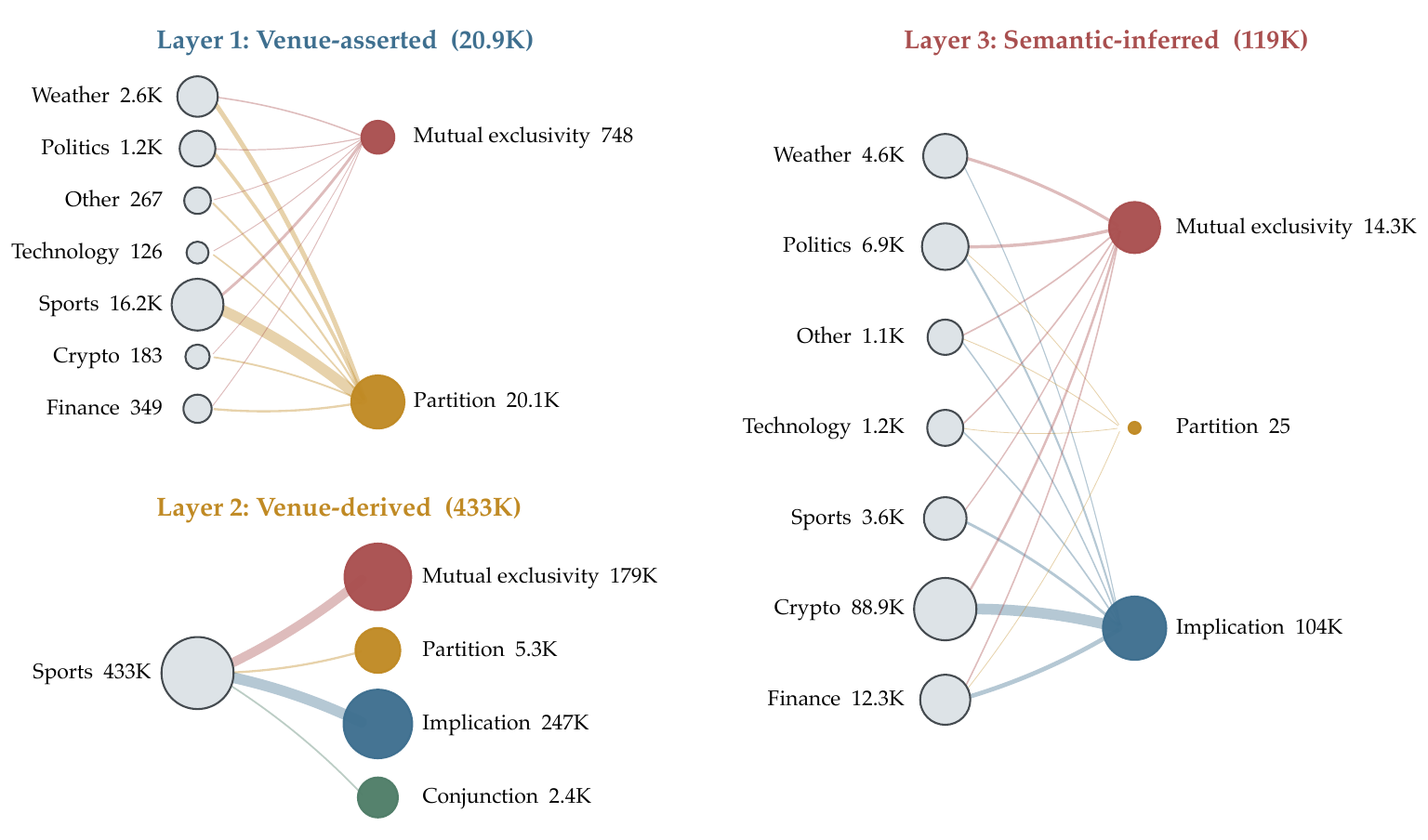}
\caption{Reduced relations in $F_*$ by layer, event class, and logical relation type.
Each panel maps one layer's reduced relations from event class to logical relation type;
node diameter is linear in the logarithm of the count, and the labels give the counts.}
\label{fig:relation-system-map1}
\vspace{-2pt}
\end{figure}

The overall relation system is highly concentrated in two contract classes:
sports accounts for 453{,}190 relations and crypto for 89{,}088, together representing 94.6\% of the
total, while no other class exceeds~3\%.  Two forces produce this imbalance.
Sports and crypto are the venue's largest classes, \linebreak accounting for 67.5\% and 22.4\% of
its listed contracts.  Additionally, their contracts are typically expressed as typed lines and numeric
thresholds, so their relations can be recovered by score mappings or structured parsers without
individual LLM evaluation.

\FloatBarrier
\begin{figure}[!t]
\centering
\includegraphics[width=\textwidth]{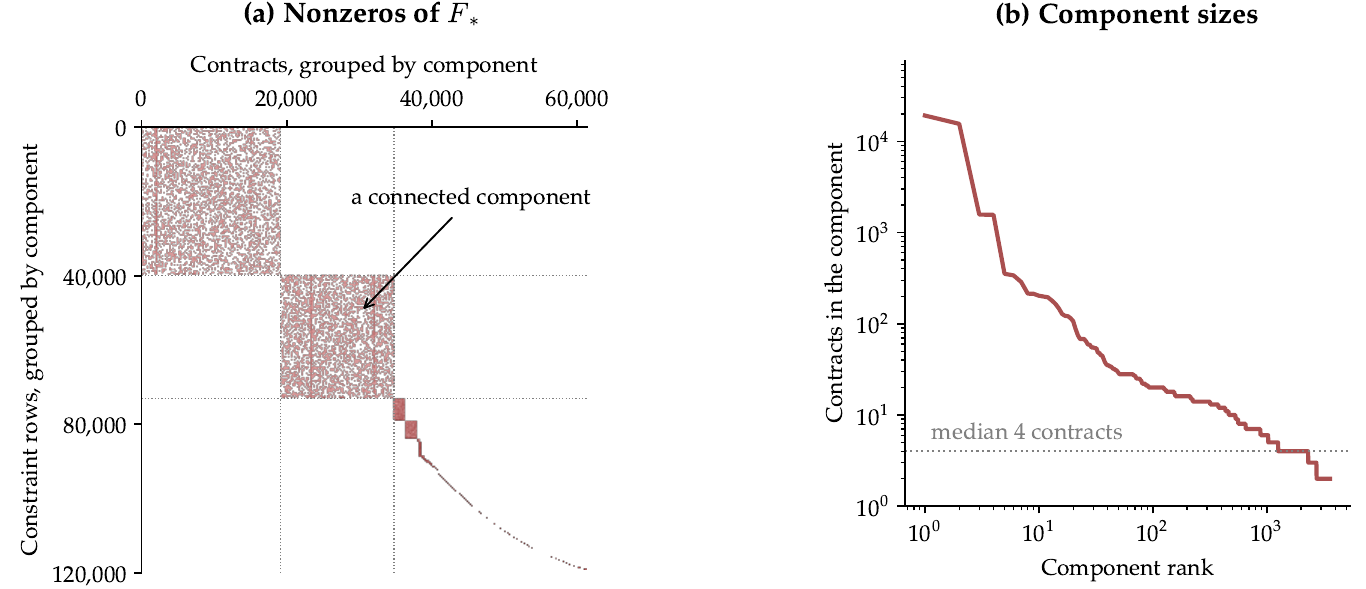}
\caption{Connected components of the semantic-inferred layer's constraint matrix \(F_*\).
Panel (a) shows the nonzeros of the semantic-inferred layer's \(F_*\)
on a $1{,}200\times1{,}200$ grid of cells, each covering about 100~rows and 50 contracts,
with rows and columns ordered by connected component from largest to smallest and each cell shaded
by its number of nonzeros on a log scale; panel (b) shows the number of contracts in each component
against its rank.}
\label{fig:relation-system-map2}
\end{figure}

\paragraph{Connected components.}
Contracts linked by a sequence of relations form a \emph{connected component}:
the smallest set that must be priced jointly, because no constraint reaches across its boundary.
We group the rows of each layer's system by the contracts they share.
For the semantic-inferred layer, whose $F_*$ has 118{,}898 rows over 61{,}349 contract columns,
the system separates into 3{,}591 connected components, with a median size of 4 contracts and a
maximum of 19{,}188.  Figure~\ref{fig:relation-system-map2} shows the matrix with rows and columns
permuted so that each component occupies a diagonal block.

The two largest consist of Bitcoin and Ethereum price-threshold contracts,
with 19{,}188 and 15{,}566 contracts.  These components are large because two types of relations
chain together almost all contracts on the same coin:
``\href{https://polymarket.com/event/what-price-will-bitcoin-hit-on-july-28/will-bitcoin-reach-65k-on-july-28}%
{Bitcoin hits \$65{,}000 on July 28}''
implies
``\href{https://polymarket.com/event/what-price-will-bitcoin-hit-on-july-28/will-bitcoin-reach-64k-on-july-28}%
{Bitcoin hits \$64{,}000 on July 28}'', 
and \linebreak the former contract, by temporal nesting, also implies
``\href{https://polymarket.com/event/what-price-will-bitcoin-hit-in-july-2026/will-bitcoin-reach-65k-in-july-2026}%
{Bitcoin hits \$65{,}000 in July}''
while being implied by
``\href{https://polymarket.com/event/bitcoin-above-on-july-28-2026-5am-et/bitcoin-above-65000-on-%
july-28-2026-5am-et}{Bitcoin above \$65{,}000 on July 28, 5AM ET?}''.
Thus, within one horizon every threshold is tied to every lower threshold for reaching
and to every higher threshold for dipping, while each contract is tied to the reach or
dip contracts of every containing horizon, causing
19{,}188 of the 19{,}843 Bitcoin 
and 15{,}566 of the 16{,}119 Ethereum price-threshold contracts
listed during the sample to fall into a single component each.
The blocks are shaded by the~number of nonzeros per cell; within a block the contracts are in no
particular order, so the nonzeros spread over the block.
The remaining 3{,}589 components form the thin diagonal;
87\% of them hold ten contracts or fewer. 

The venue-asserted and venue-derived layers do not enter this matrix: we compile each layer into its
own constraint system, and Section~\ref{sec:implementation} detects arbitrage separately within each,
so that profits supported by different layers stay distinct 
and an invalid relation in one layer cannot create a false guarantee in another.

\FloatBarrier
\section{The Anatomy of the Arbitrage Opportunity}
\label{sec:empirical}
We now combine the logical relation system of Section~\ref{sec:relations} with the full-depth order
books to measure combinatorial arbitrage.  Section~\ref{sec:implementation} explains how we detect
arbitrage windows, measure their executable size, and compare their guaranteed profit with their
profit at resolution.  Section~\ref{sec:incidence} reports the resulting opportunities and examines
their sensitivity to execution assumptions.  Section~\ref{sec:limits} examines the conditions under
which arbitrage opportunities persist longer.  In addition, Section~\ref{sec:takers} uses on-chain
fills to examine who takes these opportunities.

\subsection{Measurement: Detection, Execution, and Settlement}
\label{sec:implementation}

We measure arbitrage in three steps.  Best quotes identify arbitrage windows,
full-depth order books determine the portfolios that can be executed, and contract resolutions
determine their profit at resolution.  Appendix~\ref{app:measurement-details} details the data
construction and checks.

\paragraph{Step 1: detect arbitrage windows using only the best quotes.}
Figure~\ref{fig:relation-system-map2} shows how related contracts form connected components,
with no constraints across components.
We can therefore run the detector separately on each component of each layer's system.
A component can be large, but at any given instant only some of its contracts are quoted,
so only the relations among them that are in effect at that instant apply.
We call the quoted contracts and these constraints the component's \emph{active~block}. \looseness=-1

We detect when arbitrage appears using only the best quotes reconstructed from all messages timestamped at or before
the instant being evaluated.\footnote{We run data collectors that receive order-book messages from
Polymarket over multiple network connections.
In practice, delays differ across connections and can reach several minutes,
so we align messages by exchange timestamps rather than their receipt times.
To check reconstruction accuracy, we also compare each book's best bid and ask with the exchange's
reported quotes, which agree in 97.2\% of 13.3 billion comparisons.
\vspace{1pt}}
Whenever a best quote in the active block changes, we solve the arbitrage problem in
Section~\ref{sec:ip} using current bids and asks after taker
fees.\footnote{\label{fn:execution-caps}We cap the active block at 32 contracts in the
venue-asserted layer, 48 in the venue-derived layer, and 16 in the semantic-inferred layer.
We cap portfolios at 5 legs because coordinating trades across more order books can be difficult in
practice.  The 5-leg cap exempts sales in venue-asserted groups, but not purchases
(Appendix~\ref{app:partition-sides} explains why).
\vspace{1pt}} \linebreak
An \emph{arbitrage window} opens when the detector finds a portfolio with positive guaranteed profit
at the best quotes.  It remains open while successive evaluations return positive profit and closes
at the first evaluation that does not.
This step identifies 491{,}827 arbitrage windows.\footnote{Here and elsewhere,
counts are reported after excising
the windows whose profit guarantees relied on the six
invalid relations identified in the ex-post audit conducted on September 22, 2026
(see ``Corrective excision'' in Appendix~\ref{app:three-layers}).}
Appendix~\ref{app:three-layers} details the detection procedure for three layers.

\paragraph{Step 2: simulate arbitrage trades against full-depth order books.}

Full-depth simulation is needed only for the arbitrage windows identified in \textbf{Step 1}:
if even the best quotes are arbitrage-free, then deeper quotes,
which are no more favorable, cannot yield arbitrage
(Proposition~\ref{prop:executable}).
Within an \linebreak arbitrage window, best quotes are sufficient to identify a profitable
portfolio direction, but do not alone determine how much can be traded.
To determine executable size, we must incorporate displayed depth;
Figure~\ref{fig:step2-take-bracket}(a) illustrates an example with two contracts. 

We therefore replay the live order-book messages in exchange-time order to reconstruct the full-depth
books throughout each arbitrage window.
At every update, we reoptimize quantities on the detected
legs, retaining their trading directions, and simulate trades against displayed depth to determine
executable size and guaranteed profit. Because simulated executions do not alter the live
order-book messages, previously consumed liquidity can reappear in subsequent reconstructed books. 
To prevent double-counting, we maintain a shared liquidity ledger and deduct previously consumed
depth before each reoptimization.
A window is called an \emph{executed window} if 
at least one trade is executed within it in our simulation.
Of the 491{,}827 arbitrage windows previously detected,
we find 125{,}095 executed windows.

\begin{figure}[!t]
\centering
\includegraphics[width=\textwidth]{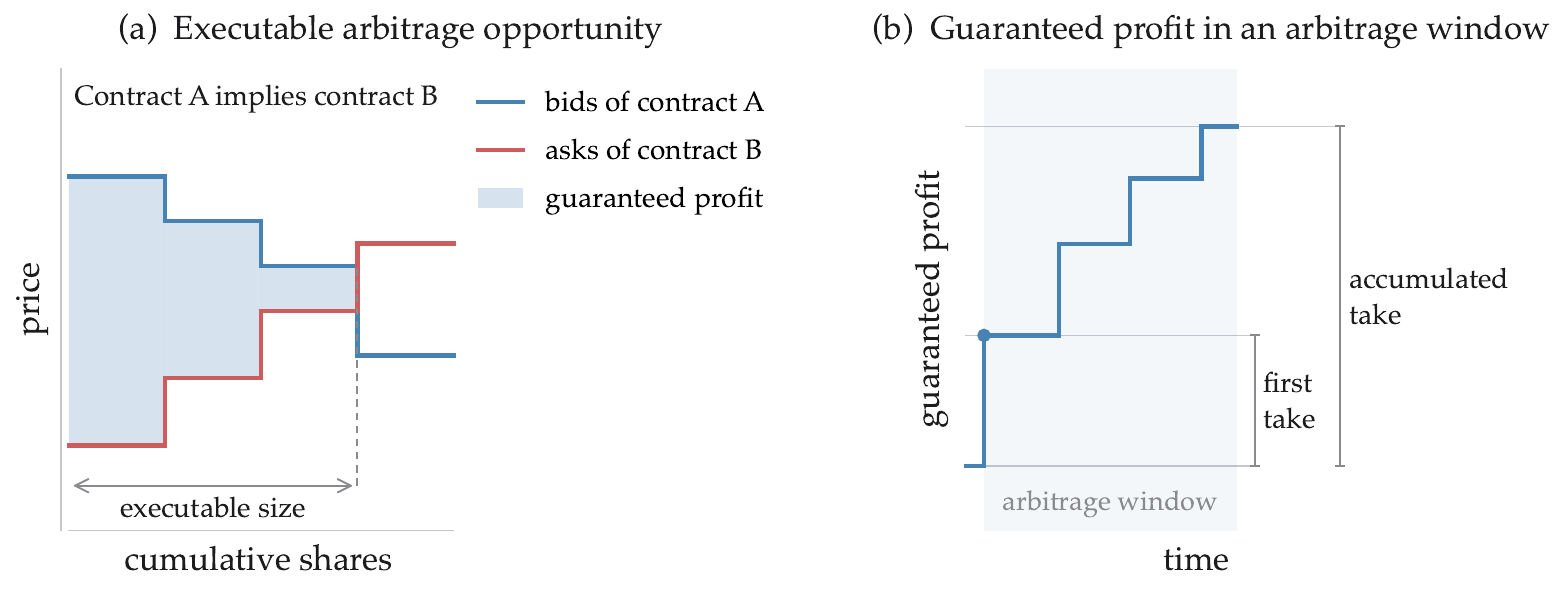}
\caption{Simulated arbitrage trades: panel (a) illustrates guaranteed profit across price levels,
and panel (b) distinguishes the first take from the accumulated take.
The illustration abstracts from fees.}
\label{fig:step2-take-bracket}
\vspace{-4.5pt}
\end{figure}

We report two measures of guaranteed profit within an executed window.
The \emph{first take} is the guaranteed profit of the first execution in each executed window.
The \emph{accumulated take} is the sum of guaranteed profit across all executions within the window.
The profit available after the first execution depends on how subsequent quotes respond to the
arbitrageur's trades.  We cannot infer these counterfactual responses from the observed data.
Conditional on execution at displayed quotes, we use the first take as a lower bound,
assuming no further execution within the window, and the accumulated take as an upper bound,
assuming subsequent quotes remain as observed.
Figure~\ref{fig:step2-take-bracket}(b) illustrates the two measures within one window.

\paragraph{Step 3: test arbitrage windows at resolution.}
We evaluate executed portfolios using their contracts' resolutions.
We classify an executed window as \emph{settled} if every filled leg's contract had resolved to zero or one
as of this writing.\footnote{Windows with an open or void leg are
classified as unsettled.  Void contracts pay half a dollar to each side.}
We define the portfolio's \emph{profit at resolution} as its~terminal~payoff less the cash outlay
and fees paid at execution, and its return as the ratio of its profit to capital committed. 
Capital is  purchase outlays plus one dollar of collateral per share sold on each leg.
The negative-risk mechanism specifically lowers the collateral of the sold legs in venue-asserted
portfolios: one dollar \linebreak mints a basket of one \YES{} token per contract in the group, so these legs
cost one dollar per basket.

\subsection{Guaranteed Profit versus Profit at Resolution}
\label{sec:incidence}

\begin{table}[!h]
\vspace{-7.5pt}
\centering
\caption{Arbitrage windows, guaranteed profit, and profit at resolution by relation layer}
\label{tab:sec5-funnel-v4}
\begin{threeparttable}
\footnotesize
\renewcommand{\arraystretch}{0.95}
\ifdefined\resolutioninset\else\newlength{\resolutioninset}\fi
\sbox0{Profit at resolution}\setlength{\resolutioninset}{0pt}\sbox2{Settled}\ifdim\wd2>\resolutioninset\setlength{\resolutioninset}{\wd2}\fi\sbox2{Capital}\ifdim\wd2>\resolutioninset\setlength{\resolutioninset}{\wd2}\fi\sbox2{112{,}376}\ifdim\wd2>\resolutioninset\setlength{\resolutioninset}{\wd2}\fi\setlength{\resolutioninset}{\dimexpr(\wd0-\resolutioninset)/2\relax}
\begin{tabular*}{\textwidth}{@{\extracolsep{\fill}}lrr<{\hspace{\resolutioninset}}rrr@{}}
\toprule
\multicolumn{6}{@{}l}{\textit{Panel A: All arbitrage windows}} \\[2pt]
 & \multicolumn{3}{c}{Windows} & \multicolumn{2}{c}{Percentage} \\
\cmidrule(lr){2-4}\cmidrule(l){5-6}
 & Detected & Executed & Settled & \sfrac
  {\scalebox{1.4}{\textrm{Executed}}}
  {\scalebox{1.4}{\textrm{Detected}}} & \sfrac
  {\scalebox{1.4}{\textrm{Settled}}}
  {\scalebox{1.4}{\textrm{Executed}}} \\
\midrule
Venue-asserted & 234{,}605 & 64{,}681 & 62{,}860 & 27.6\% & 97.2\% \\
Venue-derived & 99{,}159 & 31{,}317 & 31{,}243 & 31.6\% & 99.8\% \\
Semantic-inferred & 158{,}063 & 29{,}097 & 18{,}273 & 18.4\% & 62.8\% \\
All & 491{,}827 & 125{,}095 & 112{,}376 & 25.4\% & 89.8\% \\
\midrule[0.08em]
\multicolumn{6}{@{}l}{\textit{Panel B: Accumulated take, settled windows}} \\[2pt]
 & \multicolumn{3}{c}{Profit and capital (in thousands of dollars)} & \multicolumn{2}{c}{Return on capital} \\
\cmidrule(lr){2-4}\cmidrule(l){5-6}
 & Guaranteed profit & \multicolumn{1}{r}{Profit at resolution} & Capital & Guaranteed & At resolution \\
\midrule
Venue-asserted & 852 & 971 & 22{,}600 & 3.8\% & 4.3\% \\
Venue-derived & 1{,}467 & 2{,}291 & 79{,}027 & 1.9\% & 2.9\% \\
Semantic-inferred & 179 & 364 & 12{,}089 & 1.5\% & 3.0\% \\
All & 2{,}498 & 3{,}626 & 113{,}716 & 2.2\% & 3.2\% \\
\midrule[0.08em]
\multicolumn{6}{@{}l}{\textit{Panel C: First take, settled windows}} \\[2pt]
 & \multicolumn{3}{c}{Profit and capital (in thousands of dollars)} & \multicolumn{2}{c}{Return on capital} \\
\cmidrule(lr){2-4}\cmidrule(l){5-6}
 & Guaranteed profit & \multicolumn{1}{r}{Profit at resolution} & Capital & Guaranteed & At resolution \\
\midrule
Venue-asserted & 375 & 408 & 9{,}759 & 3.8\% & 4.2\% \\
Venue-derived & 695 & 815 & 29{,}564 & 2.4\% & 2.8\% \\
Semantic-inferred & 35 & 75 & 1{,}240 & 2.8\% & 6.0\% \\
All & 1{,}104 & 1{,}297 & 40{,}563 & 2.7\% & 3.2\% \\
\bottomrule
\end{tabular*}
\begin{tablenotes}[flushleft]
\footnotesize
\item\!\textit{Notes:} The sample runs from 23 June to 3 August 2026. Capital sums the amounts committed across executions
and can count the same dollar more than once.
\end{tablenotes}
\end{threeparttable}
\vspace{7.5pt}
\end{table}

Table~\ref{tab:sec5-funnel-v4} summarizes the scale of executable arbitrage. 
Overall, 25.4\% of the detected windows have an execution, and 89.8\% of executed windows have
settled on every filled leg.  The percentage settled is lower for semantic-inferred relations,
at 62.8\%, only because their contracts resolve later. \linebreak
Within the settled sample,
the accumulated take is \$2.50 million in profit on \$113.7 million of
capital committed---a guaranteed return of 2.2\%, and 3.5\% a week on the \$11.1 million outstanding at the peak.
The first take totals \$1.10 million of this profit on~\$40.6 million of capital;
the remaining \$1.39 million is guaranteed profit from later
executions within the same windows. At resolution, the executions in the accumulated take pay
\$3.63 million---a return of 3.2\% on the capital committed
and 4.8\% a week on the peak capital.\footnote{We retrieve resolution outcomes
only after completing arbitrage detection and trade simulation;
see Appendix~\ref{app:at-contract-resolutions}. \linebreak
 In our sample, no settled window in any layer
has a profit at resolution below its guaranteed profit.
This is consistent with the validity of identified logical relations.}

Across all layers,
much of the guaranteed profit comes from high-return windows.
The median window guarantees 0.5\% on capital, but half of the accumulated take comes from windows
with guaranteed returns above 6.5\% (Figure~\ref{fig:sec5-anatomy-v3}(a)).
The first take is also concentrated in high-return windows, with a profit-weighted
median guaranteed return of 10.4\% (Figure~\ref{fig:sec5-anatomy-v3}(b)).
%
By relation layer, venue-asserted and venue-derived relations contribute 34\%
and 59\% of the guaranteed profit of the accumulated take, respectively.
Semantic-inferred relations account for the remaining 7\%. \linebreak
The contribution of later executions also differs across layers.
They are particularly important in venue-derived relations, where the accumulated take is about 2
times the first take:
repeated quote arrivals allow the arbitrageur to execute again before the
window closes.

\begin{figure}[!t]
\centering
\includegraphics[width=\textwidth]{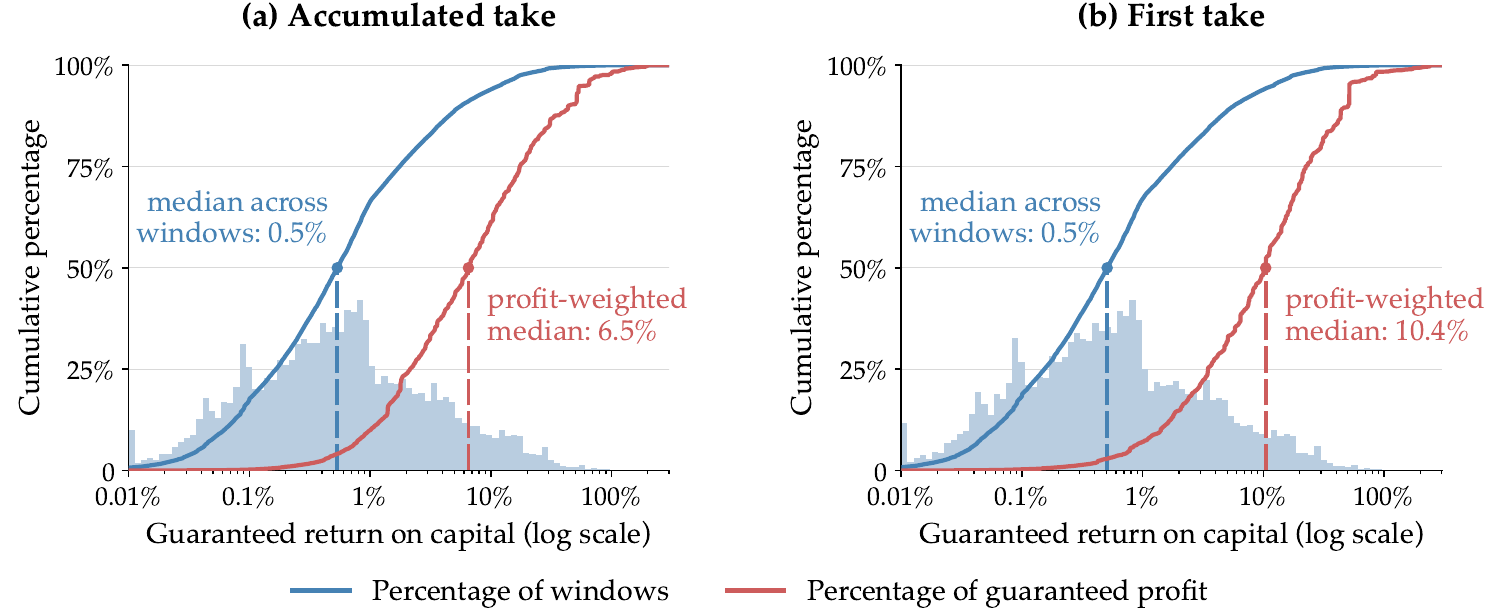}%
\caption{The distribution of executable arbitrage in settled windows.
Panels (a) and (b) show the guaranteed return on capital of the accumulated take and of the first
take on log scales.  The blue curve gives the cumulative percentage of windows,
the red curve their cumulative percentage of guaranteed profit, and the histogram counts windows in
log bins.}
\label{fig:sec5-anatomy-v3}
\end{figure}

The economic value of these returns also depends on how long capital is committed.
For venue-asserted (Layer 1) portfolios that sell the group,
the sold legs come from minted baskets rather than purchases, and the
sale proceeds exceed the minting cost and fees,
so capital need not remain committed until resolution once the legs have been sold.
Venue-asserted portfolios that buy the group, and venue-derived (Layer 2) and semantic-inferred
(Layer 3) portfolios have no equivalent conversion.
If held to resolution, they tie up capital until the latest leg resolves.
We measure the holding period from the execution to the resolution of the latest leg, using the venue's
recorded close time, and count sold venue-asserted baskets at 1~day because the conversion returns their
collateral at once.
In the settled sample, the window-level median is about 3~hours for venue-derived
portfolios and 7~days for semantic-inferred portfolios. 
Venue-derived contracts generally resolve at the end of a match, whereas semantic-inferred relations
also link elections, company valuations, and month-end rankings whose resolutions can be weeks or months away.
The capital in Table~\ref{tab:sec5-funnel-v4} sums capital \linebreak committed across executions,
so the same dollar may be counted multiple times as it is redeployed.
Under this holding-period convention, the capital-weighted average holding period is 1.6 days for
the accumulated take and 0.8 days for the first take.
Estimated capital outstanding peaks at \$11.1 million and \$3.5 million, respectively.

In summary, consider a fund that adopts the strategy: it starts with the \$11.1 million, executes all trades of the replay and takes the full displayed depth of every opportunity, holds each position to resolution, and reuses its capital (each dollar returning to the fund at resolution and being committed \linebreak again).
The fund then earns the \$2.50 million of guaranteed profit over the six weeks of the sample, while the weekly return reaches 3.5\% and the profit at resolution reaches 4.8\% per week. 

Of course, such an ideal arbitrageur does not exist, 
and a real arbitrageur is subject to frictions in liquidity and execution.
We test the measured profits under the following three checks: 
cancelled and re-posted quotes, partial participation, and order delay.
Table~\ref{tab:app-participation-v3b} reports the results.

\paragraph{Robustness check 1: cancel and re-post.}
The first concern is whether replenished depth represents new liquidity.
A maker may cancel and re-post an earlier quote \citep{hasbrouck2013lowlatency}.
If the arbitrageur has already consumed that liquidity, counting its return as new can overstate
profits.
We therefore retain the peak displayed depth at each price level as a reference after its last observation, 
with retention periods of 10 seconds, 1 minute, and 2 minutes.
Depth that disappears and returns within this period does not count as new liquidity;
only the amount above the reference does.  The first block of Table~\ref{tab:app-participation-v3b}
shows that the accumulated take remains within 1\% of the baseline.
The take can exceed the baseline because all windows draw on one liquidity ledger, so the rule
changes which window takes a quote.  In the baseline, a bid that a maker cancels and re-posts
every few seconds is taken at every re-post, and each take also consumes the best quotes on the
portfolio's other legs; under the rule, a re-post counts as new liquidity only after the retention
period, so those quotes remain for later windows.
To illustrate, for the high-profile Argentina--Cape Verde match on 3 July 2026, \linebreak
consider the contract ``\href{https://polymarket.com/sports/world-cup/fifwc-arg-cvi-2026-07-03}{Argentina wins}''.
The baseline sold 230{,}629 shares of it in 234 executions before 23:56:15~UTC,
and the window opening at that time filled 355 shares, earning \$0.99.
Under the 10-second rule, the earlier windows sold 145{,}916 shares in 96 executions, preserving liquidity for
the later window, which sold 72{,}703 shares in one of its 14 executions and earned \$3{,}529 in total.

\paragraph{Robustness check 2: partial participation.}
Another concern is that competing traders may consume part of the displayed depth before the
arbitrageur can trade.  We therefore let the arbitrageur fill only the best-priced 90\%, 70\%, or 50\% of the depth
available on each leg at each execution and book the remainder as taken by other traders.
The second block of Table~\ref{tab:app-participation-v3b} shows that, at 50\% participation,
the accumulated take retains 57\% of its baseline value, while the number of settled windows is
almost unchanged.
The retained share exceeds the participation share because the arbitrageur keeps the best-priced
depth, so the guaranteed return on capital rises from 2.2\% to 2.5\%. \looseness=-1

\paragraph{Robustness check 3: order delay.}
Sometimes quotes disappear while an arbitrageur identifies an opportunity and submits orders.
As a robustness check, we resubmit every order of the baseline after a delay of 100 milliseconds
to 1 second and fill it against the order book at that time, with the liquidity ledger as in the
baseline.
A portfolio whose legs fill unequally no longer has a guaranteed profit, so Panel~B of
Table~\ref{tab:app-participation-v3b} reports what the delayed orders fill and what they pay at
resolution.  With a 1-second delay, 54\% of the orders fill in full and the orders fill
84\% of the submitted quantity; the settled orders pay 79\% of the profit at resolution that
the same orders pay in the baseline, and 22\% of them lose money at resolution.
In the first two checks, returns on capital change little,
and no settled window has a profit at resolution below its guaranteed profit.

\begin{table}[!t]
\centering
\caption{Robustness checks for arbitrage opportunities}
\label{tab:app-participation-v3b}
\begin{threeparttable}
\footnotesize
\setlength{\tabcolsep}{2pt}
\ifdefined\profitinset\else\newlength{\profitinset}\fi
\sbox0{Percentage of baseline}\sbox2{Guaranteed profit}\setlength{\profitinset}{\dimexpr(\wd0-\wd2)/2\relax}
\ifdefined\settledheadwidth\else\newlength{\settledheadwidth}\fi
\settowidth{\settledheadwidth}{Windows settled}
\ifdefined\settledheadoverhang\else\newlength{\settledheadoverhang}\fi
\settowidth{\settledheadoverhang}{All settled windows}\addtolength{\settledheadoverhang}{-\settledheadwidth}
\begin{tabular*}{\textwidth}{@{\extracolsep{\fill}}lrrr<{\hspace{\profitinset}}rr@{}}
\toprule
\multicolumn{6}{@{}l}{\textit{Panel A: liquidity settings, settled windows}} \\[2pt]
 & \multicolumn{2}{c}{Number of} & \multicolumn{1}{c}{Percentage of baseline} & \multicolumn{2}{c}{Return on capital} \\
\cmidrule(l{-\settledheadoverhang}){2-3}\cmidrule{4-4}\cmidrule{5-6}
Setting & \makebox[\settledheadwidth][r]{All settled windows} & Executions & \multicolumn{1}{c}{Guaranteed profit} & Guaranteed & At resolution \\
\midrule
Baseline & 112{,}376 & 391{,}383 & 100.0\% & 2.2\% & 3.2\% \\
\midrule
\multicolumn{6}{@{}l}{\textit{Robustness check 1: cancel and re-post}} \\
\quad retention period of 10 s & 105{,}628 & 366{,}458 & 100.0\% & 2.3\% & 3.2\% \\
\quad retention period of 60 s & 103{,}010 & 344{,}745 & 100.5\% & 2.4\% & 3.3\% \\
\quad retention period of 120 s & 102{,}013 & 334{,}317 & 100.7\% & 2.4\% & 3.4\% \\
\midrule
\multicolumn{6}{@{}l}{\textit{Robustness check 2: partial participation}} \\
\quad 90\% & 112{,}226 & 391{,}067 & 92.4\% & 2.3\% & 3.2\% \\
\quad 70\% & 111{,}822 & 390{,}236 & 75.3\% & 2.4\% & 3.4\% \\
\quad 50\% & 111{,}102 & 388{,}737 & 56.5\% & 2.5\% & 3.5\% \\
\midrule[0.08em]
\end{tabular*}
\begin{tabular*}{\textwidth}{@{\extracolsep{\fill}}lrrrrrr@{}}
\multicolumn{7}{@{}l}{\textit{Panel B: order delay, settled orders}} \\[2pt]
 & & \multicolumn{2}{c}{Percentage filled} & \multicolumn{2}{c}{Profit at resolution} & \\
\cmidrule{3-4}\cmidrule{5-6}
Delay & Settled orders & Orders in full & Quantity & Thousand \$ & Percentage of baseline & Orders with a loss \\
\midrule
\multicolumn{7}{@{}l}{\textit{Robustness check 3: order delay}} \\
\quad 100 ms & 391{,}936 & 71.6\% & 92.4\% & 3{,}315 & 91.4\% & 14.2\% \\
\quad 250 ms & 392{,}047 & 65.1\% & 89.5\% & 3{,}270 & 90.2\% & 17.3\% \\
\quad 500 ms & 392{,}160 & 59.9\% & 86.9\% & 3{,}058 & 84.4\% & 19.6\% \\
\quad 750 ms & 392{,}221 & 57.0\% & 85.3\% & 2{,}990 & 82.5\% & 20.9\% \\
\quad 1{,}000 ms & 392{,}312 & 54.3\% & 84.0\% & 2{,}874 & 79.3\% & 22.1\% \\
\bottomrule
\end{tabular*}
\begin{tablenotes}[flushleft]
\footnotesize
\item\!\textit{Notes:} The baseline is Table~\ref{tab:sec5-funnel-v4}. Panel A reports the accumulated take.
\end{tablenotes}
\end{threeparttable}
\vspace{-2.5pt}
\end{table}

\FloatBarrier
\subsection{Persistence of Arbitrage Opportunities}
\label{sec:limits}
This section examines executability from another angle, by asking how long
arbitrage opportunities persist before they close.
Arbitrage opportunities should persist longer when their underlying payoff relations are harder to
identify.  Since venue-derived and semantic-inferred relations generally require more interpretation
than venue-asserted relations, we next compare persistence across the three layers to examine this
prediction.  We measure persistence by the elapsed time from an executed window's opening to its
close, as defined in Section~\ref{sec:implementation}.
Windows that last less than 0.3 seconds account for 47\% of executed windows but only 8\% of the
accumulated take.  Table~\ref{tab:sec5-persistence-all-v4} reports persistence for all executed
windows and for executed windows that last at least 0.3 seconds.

\enlargethispage{.25\baselineskip}

We first focus on executed windows that last at least 0.3 seconds.
The right half of \hyperref[tab:sec5-persistence-all-v4-A]{Panel~A}
shows that median duration rises from 2.9~seconds for
venue-asserted relations to 6.3~seconds for venue-derived relations and about 31~seconds for
semantic-inferred relations.  The percentage of windows that last at least 10 seconds
rises from 32\% to 44\% and 62\%.
The right half of \hyperref[tab:sec5-persistence-all-v4-B]{Panel~B}
finds that, in the \linebreak same sample,
the differences remain after controlling for portfolio characteristics 
(log first-take profit, \linebreak log first capital committed,
and portfolio direction: sales only, purchases only, or mixed), 
log number of windows the component containing the portfolio opened that day,
and day and hour fixed effects. \linebreak
Column~\ref{tab:sec5-persistence-all-v4-c6} indicates that,
relative to venue-asserted windows, venue-derived windows are
8 per-\textls[-5]{centage points more likely to last at least 10 seconds, 
parser windows 28 percentage points more likely,} \linebreak
and LLM windows 37 percentage points more likely. Exponentiating the log-duration coefficients in
Column~\ref{tab:sec5-persistence-all-v4-c7} implies that venue-derived
and semantic-inferred windows last about 2 and 10 times as long as
venue-asserted windows, respectively, conditional on the included controls.
In Column~\ref{tab:sec5-persistence-all-v4-c8}, \linebreak the corresponding
duration ratios for venue-derived, parser, and LLM windows are respectively \linebreak
about 2, 7, and 15.
The last coefficient row suggests that,
each additional leg is associated with a 0.5-percentage-point higher probability of lasting at
least 10 seconds and a 2\% increase in duration.
These patterns are consistent with the view that discovery and execution costs can limit arbitrage.
The results need to be interpreted as conditional associations,
since the layers also differ in their contracts, liquidity, and conversion mechanisms. 

\FloatBarrier
\begin{table}[!t]
\centering
\caption{Persistence of arbitrage opportunities by layer}
\label{tab:sec5-persistence-all-v4}
\begin{threeparttable}
\footnotesize
\setlength{\tabcolsep}{1pt}
\renewcommand{\arraystretch}{0.95}
\newcommand{\persistencelasts}{Lasts$\,\geq\mkern2mu$10\kern1pt{}s}
\ifdefined\persistencecolwidth\else\newlength{\persistencecolwidth}\fi
\setlength{\persistencecolwidth}{0pt}
\newcommand{\persistencewiden}[1]{\sbox0{#1}\ifdim\wd0>\persistencecolwidth\setlength{\persistencecolwidth}{\wd0}\fi}
\newcommand{\persistencewidenspan}[1]{\sbox0{#1}\dimen0=\dimexpr(\wd0-6\tabcolsep)/4\relax\ifdim\dimen0>\persistencecolwidth\setlength{\persistencecolwidth}{\dimen0}\fi}
\persistencewiden{Windows}\persistencewiden{Median (s)}\persistencewiden{p75 (s)}\persistencewiden{\persistencelasts}\persistencewiden{64{,}681}\persistencewiden{0.2}\persistencewiden{15\%}\persistencewiden{10.7}\persistencewiden{223.8}\persistencewiden{9{,}489}\persistencewiden{125{,}095}\persistencewiden{0.108***}\persistencewiden{(0.012)}\persistencewiden{Yes}\persistencewiden{0.114}\persistencewidenspan{\hspace{0.5em}All executed windows}\persistencewidenspan{\hspace{0.5em}Executed windows of at least 0.3 seconds}
\makeatletter
\newcommand{\persistencecol}[1]{\leavevmode\phantomsection\def\@currentlabel{(#1)}\label{tab:sec5-persistence-all-v4-c#1}(#1)}
\makeatother
\begin{tabular*}{\textwidth}{@{\extracolsep{\fill}}l*{8}{r}@{}}
\toprule
\multicolumn{9}{@{}l}{\phantomsection\label{tab:sec5-persistence-all-v4-A}\textit{Panel A: survival by layer}} \\
\addlinespace[2pt]
 & \multicolumn{4}{c}{\hspace{0.5em}All executed windows} & \multicolumn{4}{c}{\hspace{0.5em}Executed windows of at least 0.3 seconds} \\
\cmidrule(l){2-5}\cmidrule(l){6-9}
Layer & \makebox[\persistencecolwidth][r]{Windows} & \makebox[\persistencecolwidth][r]{Median (s)} & \makebox[\persistencecolwidth][r]{p75 (s)} & \makebox[\persistencecolwidth][r]{\persistencelasts} & \makebox[\persistencecolwidth][r]{Windows} & \makebox[\persistencecolwidth][r]{Median (s)} & \makebox[\persistencecolwidth][r]{p75 (s)} & \makebox[\persistencecolwidth][r]{\persistencelasts} \\
\midrule
Venue-asserted & 64{,}681 & 0.2 & 2.0 & 15\% & 29{,}070 & 2.9 & 10.7 & 32\% \\
Venue-derived & 31{,}317 & 1.5 & 20.6 & 31\% & 22{,}407 & 6.3 & 45.7 & 44\% \\
Semantic-inferred & 29{,}097 & 0.2 & 28.2 & 30\% & 14{,}287 & 30.8 & 223.8 & 62\% \\
\quad Parsers & 21{,}744 & 0.1 & 17.0 & 27\% & 9{,}489 & 32.0 & 214.6 & 62\% \\
\quad LLMs & 7{,}353 & 2.9 & 89.9 & 40\% & 4{,}798 & 29.0 & 270.7 & 61\% \\
\midrule
All & 125{,}095 & 0.4 & 6.8 & 22\% & 65{,}764 & 5.7 & 44.2 & 43\% \\
\midrule[0.08em]
\multicolumn{9}{@{}l}{\phantomsection\label{tab:sec5-persistence-all-v4-B}\textit{Panel B: layer and the speed of correction}} \\
\addlinespace[2pt]
 & \multicolumn{4}{c}{\hspace{0.5em}All executed windows} & \multicolumn{4}{c}{\hspace{0.5em}Executed windows of at least 0.3 seconds} \\
\cmidrule(l){2-5}\cmidrule(l){6-9}
 & \multicolumn{2}{c}{\hspace{0.5em}\persistencelasts} & \multicolumn{2}{c}{\hspace{0.5em}log duration} & \multicolumn{2}{c}{\hspace{0.5em}\persistencelasts} & \multicolumn{2}{c}{\hspace{0.5em}log duration} \\
\cmidrule(l){2-3}\cmidrule(l){4-5}\cmidrule(l){6-7}\cmidrule(l){8-9}
 & \persistencecol{1} & \persistencecol{2} & \persistencecol{3} & \persistencecol{4} & \persistencecol{5} & \persistencecol{6} & \persistencecol{7} & \persistencecol{8} \\
\midrule
Venue-derived & 0.108*** & 0.085*** & 1.496*** & 1.233*** & 0.088*** & 0.077*** & 0.668*** & 0.583*** \\
 & (0.012) & (0.011) & (0.107) & (0.108) & (0.013) & (0.012) & (0.072) & (0.069) \\\addlinespace[3pt]
Semantic-inferred & 0.277*** &  & 2.451*** &  & 0.324*** &  & 2.333*** &  \\
 & (0.023) &  & (0.215) &  & (0.022) &  & (0.111) &  \\\addlinespace[3pt]
Parsers &  & 0.199*** &  & 1.539*** &  & 0.275*** &  & 1.959*** \\
 &  & (0.035) &  & (0.301) &  & (0.034) &  & (0.174) \\\addlinespace[3pt]
LLMs &  & 0.352*** &  & 3.333*** &  & 0.372*** &  & 2.692*** \\
 &  & (0.023) &  & (0.289) &  & (0.021) &  & (0.113) \\\addlinespace[3pt]
Number of legs & 0.006*** & 0.006*** & 0.046*** & 0.038*** & 0.005*** & 0.005*** & 0.022*** & 0.018*** \\
 & (0.001) & (0.001) & (0.007) & (0.007) & (0.001) & (0.001) & (0.006) & (0.006) \\\addlinespace[3pt]
\midrule
Controls and FE & Yes & Yes & Yes & Yes & Yes & Yes & Yes & Yes \\
Observations & 125{,}095 & 125{,}095 & 125{,}095 & 125{,}095 & 65{,}764 & 65{,}764 & 65{,}764 & 65{,}764 \\
Within $R^2$ & 0.114 & 0.119 & 0.168 & 0.178 & 0.108 & 0.109 & 0.179 & 0.183 \\
\bottomrule
\end{tabular*}
\begin{tablenotes}[flushleft]
\footnotesize
\item\!\textit{Notes:} Duration is the exchange time from a window's opening to its close. Panel B
uses a linear probability model for \persistencelasts; the venue-asserted layer is the omitted
category. Controls are the log first take, log first capital committed, the portfolio's direction
(purchases only, or purchases and sales; sales only is omitted), and log number of windows the same
component opened that day, with day and hour fixed effects (FE). Standard errors in parentheses are
clustered by leg set; \linebreak
$^{*}$, $^{**}$, and $^{***}$ denote significance at the 10\%, 5\%, and 1\% levels.
\end{tablenotes}
\end{threeparttable}
\vspace{-12.5pt}
\end{table}

\enlargethispage*{.75\baselineskip}

We next consider all executed windows.
The left half of \hyperref[tab:sec5-persistence-all-v4-A]{Panel~A} shows an increase from
venue-asserted to venue-derived and then to semantic-inferred windows in the 75th percentile of
duration, and the percentage of windows that last at least 10 seconds doubles from venue-asserted
windows to the other two layers.
Median duration also increases from venue-asserted to venue-derived and then to LLM windows.
The left half of \hyperref[tab:sec5-persistence-all-v4-B]{Panel~B} reports the same regressions for the full sample; \linebreak
the estimated differences across layers retain their signs, ordering, and statistical significance. 
\clearpage

\FloatBarrier
\subsection{Who Takes the Arbitrage}
\label{sec:takers}

\begin{table}[!h]
\vspace{-7.5pt}
\centering
\caption{Matched trades in arbitrage windows}
\label{tab:sec5-fill-match-v4}
\begin{threeparttable}
\footnotesize
\setlength{\tabcolsep}{2pt}
\renewcommand{\arraystretch}{0.975}
\sisetup{group-separator={,}, group-minimum-digits=4}
\ifdefined\pairsame\else\newlength{\pairsame}\fi\setlength{\pairsame}{0pt}
\sbox0{Same direction}\ifdim\wd0>\pairsame\setlength{\pairsame}{\wd0}\fi
\sbox0{(matched trades)}\ifdim\wd0>\pairsame\setlength{\pairsame}{\wd0}\fi
\sbox0{~(44.1\%)}\addtolength{\pairsame}{-\wd0}
\sbox0{35,635}\ifdim\wd0>\pairsame\setlength{\pairsame}{\wd0}\fi
\ifdefined\pairopposite\else\newlength{\pairopposite}\fi\setlength{\pairopposite}{0pt}
\sbox0{Opposite}\ifdim\wd0>\pairopposite\setlength{\pairopposite}{\wd0}\fi
\sbox0{direction}\ifdim\wd0>\pairopposite\setlength{\pairopposite}{\wd0}\fi
\sbox0{~(6.8\%)}\addtolength{\pairopposite}{-\wd0}
\sbox0{6,230}\ifdim\wd0>\pairopposite\setlength{\pairopposite}{\wd0}\fi
\ifdefined\pairboth\else\newlength{\pairboth}\fi\setlength{\pairboth}{0pt}
\sbox0{Both}\ifdim\wd0>\pairboth\setlength{\pairboth}{\wd0}\fi
\sbox0{directions}\ifdim\wd0>\pairboth\setlength{\pairboth}{\wd0}\fi
\sbox0{~(5.6\%)}\addtolength{\pairboth}{-\wd0}
\sbox0{5,073}\ifdim\wd0>\pairboth\setlength{\pairboth}{\wd0}\fi
\ifdefined\pairtop\else\newlength{\pairtop}\fi\setlength{\pairtop}{0pt}
\sbox0{By top 10}\ifdim\wd0>\pairtop\setlength{\pairtop}{\wd0}\fi
\sbox0{~(93.7\%)}\addtolength{\pairtop}{-\wd0}
\sbox0{27,620}\ifdim\wd0>\pairtop\setlength{\pairtop}{\wd0}\fi
\ifdefined\anyaddress\else\newlength{\anyaddress}\fi
\sbox0{Number of executed windows with a matched trade}\setlength{\anyaddress}{\wd0}\addtolength{\anyaddress}{-\pairtop}\sbox0{~(93.7\%)}\addtolength{\anyaddress}{-\wd0}\addtolength{\anyaddress}{-\tabcolsep}
\sbox0{By any address}\ifdim\wd0>\anyaddress\setlength{\anyaddress}{\wd0}\fi
\sbox0{35,635}\ifdim\wd0>\anyaddress\setlength{\anyaddress}{\wd0}\fi
\ifdefined\lagcol\else\newlength{\lagcol}\fi\settowidth{\lagcol}{Lag until every leg trades}\addtolength{\lagcol}{2\tabcolsep}\setlength{\lagcol}{0.5\lagcol}\sbox0{Median (s)}\ifdim\wd0>\lagcol\setlength{\lagcol}{\wd0}\fi
\ifdefined\pwidth\else\newlength{\pwidth}\fi\settowidth{\pwidth}{p}
\begin{tabular*}{\textwidth}{@{\extracolsep{\fill}}lS[table-format=6.0]S[table-format=5.0, table-fixed-width=true, table-column-width=\pairsame, table-number-alignment=right]@{\extracolsep{0pt}~}l@{\extracolsep{\fill}\hspace{2\tabcolsep}}S[table-format=4.0, table-fixed-width=true, table-column-width=\pairopposite, table-number-alignment=right]@{\extracolsep{0pt}~}l@{\extracolsep{\fill}\hspace{2\tabcolsep}}S[table-format=4.0, table-fixed-width=true, table-column-width=\pairboth, table-number-alignment=right]@{\extracolsep{0pt}~}l@{\extracolsep{\fill}\hspace{2\tabcolsep}}>{\centering\arraybackslash\hspace*{2\pwidth}}p{\lagcol}@{\extracolsep{0pt}}>{\centering\arraybackslash\hspace*{2\pwidth}}p{\lagcol}@{}}
\toprule
\multicolumn{9}{@{}l}{\textit{Panel A: matched trades by layer}} \\
\addlinespace[2pt]
 & \multicolumn{7}{c}{Number of executed windows} & & \\
\cmidrule{2-8}
 & & \multicolumn{6}{c}{With any address trading every leg} & \multicolumn{2}{c}{Lag until every leg trades} \\
\cmidrule{3-8}\cmidrule{9-10}
Layer & {All executed} & \multicolumn{2}{c}{\shortstack{\strut Same direction\\\strut (matched trades)}} & \multicolumn{2}{c}{\shortstack{\strut Opposite\\\strut direction}} & \multicolumn{2}{c}{\shortstack{\strut Both\\\strut directions}} & {Median (s)} & {p75 (s)} \\
\midrule
Venue-asserted & 64681 & 28507 & (44.1\%) & 4366 & (6.8\%) & 3631 & (5.6\%) & 2.7 & \phantom{0}4.8 \\
Venue-derived & 31317 & 4554 & (14.5\%) & 1776 & (5.7\%) & 1371 & (4.4\%) & 1.0 & \phantom{0}3.5 \\
Semantic-inferred & 29097 & 2574 & (8.8\%) & 88 & (0.3\%) & 71 & (0.2\%) & 3.1 & 17.9 \\
\quad Parsers & 21744 & 2265 & (10.4\%) & 78 & (0.4\%) & 62 & (0.3\%) & 3.1 & 16.6 \\
\quad LLMs & 7353 & 309 & (4.2\%) & 10 & (0.1\%) & 9 & (0.1\%) & 3.1 & 74.4 \\
\midrule
All & 125095 & 35635 & (28.5\%) & 6230 & (5.0\%) & 5073 & (4.1\%) & 2.6 & \phantom{0}4.8 \\
\midrule[0.08em]
\end{tabular*}
\begin{tabular*}{\textwidth}{@{\extracolsep{\fill}}lS[table-format=3.0]S[table-format=5.0, table-fixed-width=true, table-column-width=\anyaddress]S[table-format=5.0, table-fixed-width=true, table-column-width=\pairtop, table-number-alignment=right]@{\extracolsep{0pt}~}l@{}}
\multicolumn{5}{@{}l}{\textit{Panel B: wallet addresses by layer}} \\
\addlinespace[2pt]
 & & \multicolumn{3}{c}{Number of executed windows with a matched trade} \\
\cmidrule{3-5}
Layer & {Number of distinct addresses} & {By any address} & \multicolumn{2}{c}{By top 10} \\
\midrule
Venue-asserted & 666 & 28507 & 24344 & (85.4\%) \\
Venue-derived & 390 & 4554 & 3399 & (74.6\%) \\
Semantic-inferred & 151 & 2574 & 2272 & (88.3\%) \\
\quad Parsers & 95 & 2265 & 2122 & (93.7\%) \\
\quad LLMs & 83 & 309 & 176 & (57.0\%) \\
\midrule
All & 854 & 35635 & 27620 & (77.5\%) \\
\bottomrule
\end{tabular*}
\begin{tablenotes}[flushleft]
\footnotesize
\item\!\textit{Notes:} A matched trade requires one taker address to trade every leg in the portfolio's direction. Panel A
percentages use all executed windows; Panel B percentages use windows matched by any address. Lag
runs from opening until every leg has a same-direction fill. Top 10 addresses are ranked by matched
windows within each layer and, for the two All rows, across layers.
\end{tablenotes}
\end{threeparttable}
\vspace{2.5pt}
\end{table}

The preceding analysis in Sections~\ref{sec:implementation}--\ref{sec:limits}
measures what a hypothetical arbitrageur could earn. \linebreak
We now use on-chain fills to examine whether actual trades follow the detected portfolios.
For each executed window, we collect fills on its contracts from 5 seconds before it opens to 30
seconds after it closes.  A taker trades in the portfolio's direction when it buys \YES{} or sells
\NO{} on a purchased leg, or sells \YES{} or buys \NO{} on a sold leg.
A window has a \emph{matched trade} when the same taker address has trades on every leg of the
portfolio, with each trade in the portfolio's direction on that leg.
Table~\ref{tab:sec5-fill-match-v4} (Panel~A) reports the percentage of executed windows with a
matched trade (same direction).  We provide a placebo benchmark for two-sided trading by repeating
the match with every leg's direction reversed.
Windows matched in both directions enter both counts; the two matches may involve different taker
addresses.  These matches establish directional trading activity, but do not by themselves establish
execution of the portfolio at the prices and quantities used in our calculations. \looseness=-1

The excess over the placebo is largest in venue-asserted groups.
In 44\% of these windows there is a matched trade, compared with about 7\% in the placebo.
Venue-derived relations show a much smaller difference: about 15\% of windows have a matched trade,
compared with~6\% in the placebo.  About 30\% of these same-direction matched venue-derived
windows also have an opposite-direction match.
The smaller excess over the placebo is consistent with ordinary two-sided betting flow substantially
contributing to the matches.  For semantic-inferred relations, the matching rate is lower but still
exceeds the placebo.  Only 9\% have a matched trade, compared with less than 1\% in the placebo.
Within this layer, 10.4\% of parser windows have a matched trade, but only 4.2\% of LLM windows. \looseness=-1

Participation is also concentrated (Panel~B of Table~\ref{tab:sec5-fill-match-v4}):
across all layers, about 850 addresses make matched trades, and the 10 most active trade every leg
in about 78\% of the matched windows.  Concentration is high within each layer as well:
the 10 most active addresses in each layer account for 85\% of matched windows in venue-asserted
relations, 75\% in venue-derived relations, and 88\% in semantic-inferred relations.

These trading patterns help interpret the persistence results.
Where the venue publishes the relation and, for sales, offers basket conversion, trades follow the
arbitrage direction more often and opportunities disappear quickly.
Among opportunities lasting at least 0.3 seconds, those based on semantic-inferred relations persist
longest, yet few have one address trading every leg.
The duration regressions also associate additional legs with longer windows, consistent with the
difficulty of assembling trades across separate books.
The evidence is thus consistent with both discovery and execution costs limiting arbitrage,
even when displayed prices offer a profit after spreads and fees. 

\section{Conclusion}
\label{sec:conclusion}

We develop a framework for combinatorial arbitrage among logically related prediction-market contracts,
linking constraints on their joint payoffs to executable portfolios.
We recover relations from venue mechanisms, typed contract fields, and natural-language resolution
criteria, and compile them into a common no-arbitrage program.
Full-depth order books then determine how much can be traded and the profit guaranteed after spreads
and fees, conditional on valid relations and the modeled payoff states.

The Polymarket data reveal substantial executable opportunities.
Among settled windows, an arbitrageur accumulates \$2.50 million in guaranteed profit,
assuming execution at displayed quotes and that subsequent quotes remain as observed.
Restricting the arbitrageur to the first execution in each window yields \$1.10 million.
Much of the accumulated take comes from a small proportion of high-return windows.
Persistence also differs across relation layers.
Among windows lasting at least 0.3 seconds, median duration rises from 2.9 seconds for
venue-asserted relations to about 31 seconds for semantic-inferred relations.
These differences remain after controlling for portfolio characteristics and time effects.

The results suggest that finding the payoff relation is an important part of arbitrage.
Opportunities based on groups the venue publishes disappear quickly and more often have one
address trading every leg.  Relations recovered from contract text are associated with longer
opportunities and less observed participation.
Because the layers also differ in liquidity and execution mechanisms, these comparisons do not
isolate discovery costs from execution costs.
They show how both can matter in markets where related claims trade in separate order books.
\clearpage

\begingroup
\onehalfspacing
\bibliographystyle{jf-refs}
\bibliography{references_upshape}
\endgroup
\clearpage

\appendix
\numberwithin{equation}{section}
\numberwithin{figure}{section}
\numberwithin{table}{section}
\makeatletter
\renewcommand{\@seccntformat}[1]{%
  \csname appendix@prefix@#1\endcsname \csname the#1\endcsname \csname
  appendix@suffix@#1\endcsname\quad}
\newcommand{\appendix@prefix@section}{Appendix~}
\newcommand{\appendix@suffix@section}{}
\makeatother

\section{Arbitrage Theory and Detection}
\label{app:theory}

\subsection{Proofs and Remark}
\label{app:proofs}

\begin{proof}[Proof of Footnote~\ref{fn:pairwise}]
The claim is that two sets $P$ and $Q$ are equal for every $S\subseteq I$, where
\begin{equation}
\label{eq:pairwise-exclusivity}
P:=\{\bm{z}\in\{0,1\}^{I}:\ z_i+z_j\le 1\ \text{for all } i<j \text{ in } S\}, \quad
Q:=\{\bm{z}\in\{0,1\}^{I}:\ \textstyle\sum_{i\in S} z_i\le 1\}.
\end{equation}
The inclusion $Q\subseteq P$ is immediate: if $\sum_{i\in S}z_i\le 1$, then
$z_i+z_j\le\sum_{k\in S}z_k\le 1$ for every pair $i<j$, since every $z_k\ge 0$.  For the reverse
inclusion, suppose, for contradiction, that some $\bm{z}\in P$ has $\sum_{i\in S}z_i\ge 2$.
Since each $z_k$ is $0$ or $1$, at least two coordinates $i\ne j$ in $S$ equal one, so
$z_i+z_j=2>1$, contradicting $\bm{z}\in P$.  This step requires integrality:
on $[0,1]^{I}$ the two sets $P$ and $Q$ differ, 
as $(\tfrac12,\tfrac12,\tfrac12)$ is a counterexample for $|S|=3$.
\end{proof}

\begin{proof}[Proof of Proposition~\ref{prop:coherence}]
If $\bm{p} \in M$ then $\bm{p} = \mathbb{E}_q[\bm{\phi}]$ for some belief $q$.
Hence the initial cost of every portfolio equals its expected terminal payoff,
$\bm{p}^\top\bm{y} = \mathbb{E}_q[\bm{y}^\top\bm{\phi}]$, so its expected profit under $q$ is zero.
Its profit therefore cannot be strictly positive in every state, and \eqref{eq:arb} has no solution.
If $\bm{p} \notin M$, then since $M$ is compact and convex, the separating-hyperplane theorem gives a
direction $\bm{y}$ with
$\bm{p}^\top\bm{y} < \min_{\bm{\mu} \in M} \bm{y}^\top\bm{\mu} = \min_{\omega} \bm{y}^\top
\bm{\phi}(\omega)$.  The portfolio's worst-case terminal payoff therefore exceeds its initial cost,
so its guaranteed profit is strictly positive: an arbitrage exists.
\end{proof}

\begin{proof}[Proof of Proposition~\ref{prop:l1}]
A linear function attains the same minimum over a finite point set and over its convex hull,
so the worst-case terminal payoff satisfies
$\min_{\omega}\bm{y}^\top\bm{\phi}(\omega)=\min_{\bm{\mu}\in M}\bm{y}^\top\bm{\mu}$.  Hence
\begin{equation}
\label{eq:proof-l1-duality}
\begin{aligned}
\max_{\|\bm{y}\|_\infty\le1}\;\min_{\omega\in\Omega}\big\{\bm{y}^\top\bm{\phi}(\omega)-\bm{p}^\top\bm{y}\big\}
&=\max_{\|\bm{y}\|_\infty\le1}\;\min_{\bm{\mu}\in M}\;\bm{y}^\top(\bm{\mu}-\bm{p})\\
&=\min_{\bm{\mu}\in M}\;\max_{\|\bm{y}\|_\infty\le1}\;(\bm{\mu}-\bm{p})^\top\bm{y}
=\min_{\bm{\mu}\in M}\|\bm{\mu}-\bm{p}\|_1,
\end{aligned}
\end{equation}
where the exchange of maximum and minimum follows from linear-programming duality on compact convex
sets, and coordinatewise maximization gives the $\ell_1$ norm.
Let $\bm{y}^{\medstar}$ be a maximin optimizer and $\bm{\mu}^{\medstar}$ a minimizing coherent price vector.
The minimax equality implies that they form a saddle point, so $y_i^{\medstar}=+1$ where
$p_i<\mu_i^{\medstar}$ and $y_i^{\medstar}=-1$ where $p_i>\mu_i^{\medstar}$.
When $p_i=\mu_i^{\medstar}$, the corresponding position is chosen jointly with the other positions so
that $\bm{y}^{\medstar\top}(\bm{\mu}-\bm{\mu}^{\medstar})\ge0$ for every $\bm{\mu}\in M$;
it need not be zero.
\end{proof}

\begin{proof}[Proof of Proposition~\ref{prop:executable}]
Proposition~\ref{prop:executable} is a special case of the following proposition:
\begingroup
\newtheoremstyle{tightprop}
  {\topsep}
  {\topsep}
  {\itshape}{}{\bfseries}{.}{0.5em}{}
\theoremstyle{tightprop}
\newtheorem{manualprop}{Proposition}
\renewcommand{\themanualprop}{3$^+$}
\begin{manualprop}
\label{prop:executable-general}
Let $K:\mathbb{R}^{I}\to\mathbb{R}\cup\{+\infty\}$ be a proper closed convex function, with $K(\bm{0})=0$
and a compact domain $D=\operatorname{dom}K$.  Let
$K^{\asterisk}(\bm{\mu}):=\sup_{\bm{y}\in\mathbb{R}^{I}}\{\bm{\mu}^\top\bm{y}-K(\bm{y})\}$ denote its convex
conjugate, and let \linebreak
\(\partial K(\bm{0}):=\{\bm{\mu}\in\mathbb{R}^{I}: \bm{\mu}^\top\bm{y}\le K(\bm{y})-K(\bm{0})
\text{ for every }\bm{y}\in\mathbb{R}^{I}\}\) denote its subdifferential at zero.  Then
\[
  \Pi^{\medstar}:=\max_{\bm{y}\in D}\min_{\bm{z}\in\Phi}\big\{\bm{y}^\top\bm{z}-K(\bm{y})\big\}
\]
is nonnegative, is attained, and is equal to $\min_{\bm{\mu}\in M}K^{\asterisk}(\bm{\mu})$.  Moreover,
$\Pi^{\medstar}>0$ if and only if $M\cap\partial K(\bm{0})=\emptyset$.
\end{manualprop}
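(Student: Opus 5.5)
The proof will rest on a minimax exchange. First, as in the proof of Proposition~\ref{prop:l1}, a linear function has the same minimum over the finite set $\Phi$ as over its convex hull $M$. So the objective equals
\[
f(\bm{y}) := \min_{\bm{\mu}\in M}\{\bm{y}^\top\bm{\mu}\}-K(\bm{y}).
\]
This function is concave and upper semicontinuous: it is the minimum of linear functions minus a closed convex function. It is therefore finite and upper semicontinuous on the compact set $D$, so the maximum $\Pi^{\medstar}$ is attained. Since $\bm{0}\in D$ and $f(\bm{0})=0-K(\bm{0})=0$, we also get $\Pi^{\medstar}\ge 0$.

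Next I will write
\[
\Pi^{\medstar}=\max_{\bm{y}\in D}\min_{\bm{\mu}\in M}L(\bm{y},\bm{\mu}),
\qquad
L(\bm{y},\bm{\mu}) := \bm{\mu}^\top\bm{y}-K(\bm{y}).
\]
The function $L$ is concave and upper semicontinuous in $\bm{y}$ on the compact convex set $D$, and linear (hence convex and continuous) in $\bm{\mu}$ on the compact convex set $M$. Sion's minimax theorem then gives
\[
\max_{\bm{y}}\min_{\bm{\mu}} L=\min_{\bm{\mu}}\sup_{\bm{y}\in D} L .
\]
Because $K=+\infty$ off $D$, the inner supremum over $D$ equals the supremum over $\mathbb{R}^I$, which is $K^{\asterisk}(\bm{\mu})$. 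Hence $\Pi^{\medstar}=\min_{\bm{\mu}\in M}K^{\asterisk}(\bm{\mu})$. The minimum is attained because $K^{\asterisk}$ is finite everywhere (compact domain), convex, and hence continuous, and $M$ is compact.

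The main obstacle is verifying the hypotheses of Sion's theorem, in particular semicontinuity in $\bm{y}$ on $D$ given that $K$ is only closed rather than continuous. Compactness of $D$ lets us use the version that requires compactness on only one side. I would try this version first; as a fallback, since in the application $K$ is polyhedral, LP duality suffices.

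For the characterization, note that $K^{\asterisk}(\bm{\mu})\ge \bm{\mu}^\top\bm{0}-K(\bm{0})=0$. Moreover, $K^{\asterisk}(\bm{\mu})=0$ exactly when $\bm{\mu}^\top\bm{y}-K(\bm{y})\le 0=-K(\bm{0})$ for all $\bm{y}$, that is, exactly when $\bm{\mu}\in\partial K(\bm{0})$. Since the minimum over $M$ is attained, $\Pi^{\medstar}=0$ if and only if some $\bm{\mu}\in M$ lies in $\partial K(\bm{0})$. Equivalently, $\Pi^{\medstar}>0$ if and only if $M\cap\partial K(\bm{0})=\emptyset$.

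Finally, to recover Proposition~\ref{prop:executable}, I will check that the order-book cost $K=\sum_i K_i$ is proper, closed, convex and piecewise linear with compact box domain. Convexity of each $K_i$ follows because the marginal prices are nondecreasing along the ladder: bids fall as we sell more and asks rise as we buy more. Since $K$ is separable, $\partial K(\bm{0})=\prod_i\partial K_i(0)$. The one-sided slopes of $K_i$ at $0$ are $\tilde b_{i1}$ from the left and $\tilde a_{i1}$ from the right, so $\partial K_i(0)=[\tilde b_{i1},\tilde a_{i1}]$. Substituting this product into the condition $M\cap\partial K(\bm{0})=\emptyset$ yields \eqref{eq:exec-no-arb}.
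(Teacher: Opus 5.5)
Your proposal is correct and follows the paper's proof essentially step for step. You replace $\Phi$ by $M=\operatorname{conv}(\Phi)$ and obtain attainment and $\Pi^{\star}\ge 0$ from upper semicontinuity on the compact $D$ together with $\bm{y}=\bm{0}$. You then exchange max and min by a minimax theorem (you name Sion's; the paper just says ``the minimax theorem'') and identify the inner supremum with $K^{*}$ because $K=+\infty$ off $D$. Finally, you characterize $K^{*}(\bm{\mu})=0$ as $\bm{\mu}\in\partial K(\bm{0})$, with the minimum over $M$ attained by continuity of the finite convex $K^{*}$; your recovery of Proposition~\ref{prop:executable} via separability and the one-sided slopes $\tilde b_{i1}$, $\tilde a_{i1}$ is also the paper's argument.
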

\endgroup
We first use Proposition~\ref{prop:executable-general} to prove Proposition~\ref{prop:executable},
and prove Proposition~\ref{prop:executable-general} itself later.
The order-book cost function is $K(\bm{y})=\sum_{i\in I}K_i(y_i)$, 
where $K_i$ can be formally written as
\begin{equation}
\label{eq:contract-cost}
K_i(y_i):=
\left\{
\begin{array}{@{}l@{\quad}l@{}}
\int_0^{y_i} c_i(s)\,ds
  & -Q^{\bid}_i\le y_i\le Q^{\ask}_i,\\[0.5ex]
+\infty
  & \text{otherwise}.
\end{array}
\right.
\end{equation}
Here $c_i(s)$ denotes the price of the $s$-th share traded in contract~$i$,
and is a step function:
\begin{equation}
\label{eq:marginal-price}
c_i(s):=
\left\{
\begin{array}{@{}l@{\qquad}l@{}l@{}l@{}l@{\quad}l@{}}
\tilde a_{i\ell}
  & \text{if }  & Q^{\ask}_{i,\ell-1} & <   s\le & Q^{\ask}_{i\ell}
  & \text{(buying)},\\[0.5ex]
\tilde b_{i\ell}
  & \text{if } -& Q^{\bid}_{i\ell}    & \le s<  -& Q^{\bid}_{i,\ell-1}
  & \text{(selling)},
\end{array}
\right.
\end{equation}
where \(Q^{\bid}_{i\ell}:=\medsum\nolimits_{k=1}^{\ell} q^{\bid}_{ik}\)
and \(Q^{\ask}_{i\ell}:=\medsum\nolimits_{k=1}^{\ell} q^{\ask}_{ik}\)
are the depths up to the $\ell$-th level of bids and asks, respectively,
with the convention $Q^{\bid}_{i0}=Q^{\ask}_{i0}=0$.

By~\eqref{eq:contract-cost} and~\eqref{eq:marginal-price}, $K$ is proper, closed and convex, vanishes at zero, and has the compact
domain~\eqref{eq:executable-set}, and thus Proposition~\ref{prop:executable-general} applies.  Because $K$ is a sum of functions of
separate coordinates, \linebreak $\partial K(\bm{0})$ is the product of the sets $\partial K_i(0)$.  Each
$K_i$ is piecewise linear, and at zero its right-hand and left-hand derivatives are $(K_i)^{\prime}_{+}(0)=\tilde a_{i1}$
and $(K_i)^{\prime}_{-}(0)=\tilde b_{i1}$,
so $\partial K_i(0)=[\tilde b_{i1},\tilde a_{i1}]$~\citep{rockafellar1970convex,hiriarturruty1993convex}.
The condition $M\cap\partial K(\bm{0})=\emptyset$ is therefore condition~\eqref{eq:exec-no-arb}.
\end{proof}

\begin{proof}[Proof of Proposition~\ref{prop:executable-general}]
Fix a portfolio $\bm{y}\in D$.  Since $K(\bm{y})$ is its initial cost and does not depend on the
realized state,
\begin{equation}
\label{eq:proof-executable-state}
  \min_{\bm{z}\in\Phi}\big\{\bm{y}^\top\bm{z}-K(\bm{y})\big\}
  =\min_{\bm{z}\in\Phi}\bm{y}^\top\bm{z}-K(\bm{y})
  =\min_{\bm{\mu}\in M}\{\bm{y}^\top\bm{\mu}-K(\bm{y})\}.
\end{equation}
The second equality uses $M=\operatorname{conv}(\Phi)$: a linear functional has the same minimum
over the payoff vectors and over their convex hull.  Hence
\begin{equation}
\label{eq:proof-executable-maxmin}
  \Pi^{\medstar}=\max_{\bm{y}\in D}\min_{\bm{\mu}\in M}\{\bm{y}^\top\bm{\mu}-K(\bm{y})\}.
\end{equation}
Because $K$ is closed, this objective is upper semicontinuous in $\bm{y}$ on the compact set $D$, so
its maximum is attained.  The portfolio $\bm{y}=\bm{0}$ lies in $D$ and has value
$-K(\bm{0})=0$, so $\Pi^{\medstar}\ge0$.

Now $M$ and $D$ are compact and convex, and on $M\times D$ the function
\begin{equation}
\label{eq:proof-executable-objective}
  h(\bm{\mu},\bm{y}):=\bm{\mu}^\top\bm{y}-K(\bm{y})
\end{equation}
is continuous and affine in $\bm{\mu}$, and upper semicontinuous and concave in $\bm{y}$ because $K$
is closed and convex.  The minimax theorem therefore allows the order of optimization to be
exchanged:
\begin{equation}
\label{eq:proof-executable-minimax}
  \Pi^{\medstar}=\min_{\bm{\mu}\in M}\max_{\bm{y}\in D}\{\bm{\mu}^\top\bm{y}-K(\bm{y})\}
  =\min_{\bm{\mu}\in M}K^{\asterisk}(\bm{\mu}),
\end{equation}
where the last equality holds because $K(\bm{y})=+\infty$ outside $D$.

\textls[-5]{It remains to characterize when $\Pi^{\medstar}\mkern-1mu \mkern-1mu = \mkern-1mu 0$.
Since $K(\bm{0}) \mkern-1mu = \mkern-1mu 0$, every $\bm{\mu}$ satisfies
$K^{\asterisk}\mkern-1mu(\bm{\mu})
\mkern-1mu\ge\mkern-1mu \bm{\mu}^\top\bm{0} \mkern-1mu-\mkern-1mu K(\bm{0})
\mkern-1mu = \mkern-1mu 0$.} \linebreak
Equality holds exactly when
$\sup_{\bm{y}}\{\bm{\mu}^\top\bm{y}-K(\bm{y})\}\le 0$, which is equivalent to 
$\bm{\mu}^\top\bm{y}\le K(\bm{y})$ for every $\bm{y}$, \linebreak
or equivalently, when $\bm{\mu}\in\partial K(\bm{0})$.
Because $D$ is compact, $K^{\asterisk}$ is finite everywhere; as a conjugate, $K^{\asterisk}$ is always convex. 
Hence $K^{\asterisk}$ is continuous,  so its minimum over the compact set $M$ is attained.
Therefore $\Pi^{\medstar}=0$ if and only if some $\bm{\mu}\in M$ has
$K^{\asterisk}(\bm{\mu})=0$, that is, if and only if $M\cap\partial K(\bm{0})\neq\emptyset$.
\end{proof}

\begin{remark}[The quoted-price benchmark as a special case]
For a coherent valuation $\bm{\mu}\in M$, $K^{\asterisk}(\bm{\mu})$ is the largest gain from establishing a
portfolio at the displayed order book and valuing it at $\bm{\mu}$, and
Proposition~\ref{prop:executable-general} states that $\Pi^{\medstar}$ is this gain under the coherent
valuation that makes it smallest.  The linear quoted-price model of Section~\ref{sec:coherent} is
the special case with cost
\begin{equation}
\label{eq:linear-cost}
K^{\textrm{quoted}}(\bm{y}):=\bm{p}^\top\bm{y}+\iota_{\{\bm{u}: \|\bm{u}\|_\infty\le1\}}(\bm{y}),
\end{equation}
where $\iota_C$ is $0$ on the set $C$ and $+\infty$ outside it.  This $K^{\textrm{quoted}}$ is proper, closed and convex,
vanishes at zero, and has the compact domain $\{\bm{u}: \|\bm{u}\|_\infty\le1\}$, so
Proposition~\ref{prop:executable-general} applies.

The subdifferential of $K^{\textrm{quoted}}$ at zero is
\begin{equation}
\label{eq:linear-subdifferential}
\partial K^{\textrm{quoted}}(\bm{0})=\{\bm{\mu}:(\bm{\mu}-\bm{p})^\top\bm{y}\le0\text{ for all }
\|\bm{y}\|_\infty\le1\}=\{\bm{p}\},
\end{equation}
so $\Pi^{\medstar}>0$ if and only if $\bm{p}\notin M$, which is Proposition~\ref{prop:coherence}.
The conjugate of $K^{\textrm{quoted}}$ is
\begin{equation}
\label{eq:linear-conjugate}
(K^{\textrm{quoted}})^{\asterisk}(\bm{\mu})=\sup_{\|\bm{y}\|_\infty\le1}(\bm{\mu}-\bm{p})^\top\bm{y}=\|\bm{\mu}-\bm{p}\|_1,
\end{equation}
so $\Pi^{\medstar}=\min_{\bm{\mu}\in M}\|\bm{\mu}-\bm{p}\|_1$, which is Proposition~\ref{prop:l1}.

In Proposition~\ref{prop:l1}, the normalization $\|\bm{y}\|_\infty \le 1$ fixes only the scale:
in the linear quoted-price model, an arbitrageur could otherwise increase a profitable position
without bound.  When the distance is zero, $\bm{p}\in M$, so no arbitrage exists and some belief $q$
rationalizes the quoted prices.  When the distance is positive, an arbitrageur buys contracts priced
below the minimizing coherent price vector and sells those priced above it.
This characterization is the quoted-price analogue of Proposition~2.4 in \citet{kroer2016arbitrage},
with the $\ell_1$ distance in place of their Bregman divergence.
\end{remark}

\begin{proof}[Proof of Proposition~\ref{prop:underidentification}]
Since $\Phi\subseteq\Phi^{-}$, minimizing the same net-payoff function over $\Phi^{-}$ can only give
a weakly smaller value than minimizing it over $\Phi$.
Maximizing both sides over the common executable set $D$ preserves the inequality.
Strict positivity for a particular portfolio under $\Phi^{-}$ therefore implies strict positivity
under $\Phi$.
\end{proof}

\subsection{The Cutting-Plane Detector}
\label{app:portfolio}

It is stated in Proposition~\ref{prop:coherence} that an arbitrage exists exactly when $\bm{p}$ lies
outside the convex hull~$M$, 
and in Proposition~\ref{prop:l1} that,
$V(\bm{y})=\min_{\bm{\mu}\in M}\bm{y}^\top\bm{\mu}$ is the worst-case terminal payoff,
and the normalized portfolio with the largest guaranteed profit attains
\begin{align}\label{eq:whereM}
  \max_{\|\bm{y}\|_\infty \le 1}\; \big[\, V(\bm{y}) - \bm{p}^\top\bm{y} \,\big]
  \;=\; \min_{\bm{\mu} \in M}\; \|\bm{\mu} - \bm{p}\|_1 .
\end{align}
Computing either side of \eqref{eq:whereM} by enumeration can be infeasible, though. In particular,
the right-hand side is compact in form but computationally demanding:
$M$ is the convex hull of up to $2^{|I|}$ admissible payoff vectors, and can have exponentially many extreme points,
so evaluating the minimized $\ell_1$ distance by enumeration is 
not necessarily easier than solving the maximin on the left. \looseness=-1

\paragraph{The \textsc{Master} problem.}
To avoid enumerating all admissible payoff vectors, we introduce an auxiliary variable \(\eta\) \citep{boyd2004convex} as a lower bound on the portfolio's terminal payoff. \linebreak For a working set \(\Phi_{\mathrm{work}}\subseteq\Phi\), the \textsc{Master} problem is
\[
\max_{\bm y,\eta}\ {\eta-\bm p^\top\bm y}
\quad\text{subject to}\quad
\bm z^\top\bm y\geq\eta
\quad\text{for all }\bm z\in\Phi_{\mathrm{work}},
\qquad \lVert\bm y\rVert_\infty\leq1.
\]
The restricted \textsc{Master} problem provides an upper bound on~\eqref{eq:whereM}.
Its solution is globally optimal once the \textsc{Oracle} certifies that no payoff constraint is violated.
Replacing the linear cost and unit box \linebreak with the order-book cost $\sum_{i\in I}K_i(y_i)$ preserves linearity:
since each \(K_i\) is convex and piecewise linear, \linebreak
its execution cost can be represented by quantity variables at individual price levels,
each bounded by the displayed depth and assigned the corresponding price and fee.
The optimum fills these levels in ladder order.
Notably, the exception is that the 5-leg cap of Section~\ref{sec:implementation} introduces binary indicators \linebreak
\(x_i\in\{0,1\}\), with \(-Q_i^b x_i\leq y_i\leq Q_i^a x_i\) and \(\sum_{i\in I}x_i\leq5\),
making the \textsc{Master} problem a mixed-integer linear program (MILP).
These indicators are needed only for the detection in \textbf{Step~1} of Section~\ref{sec:implementation};
the reoptimization in \textbf{Step~2} fixes the selected legs and their directions,
and therefore remains a linear program over the full-depth ladders. \looseness=-1

\paragraph{The \textsc{Oracle} problem.}\!\!
Given a \textsc{Master} solution \((\bm y,\eta)\), the \textsc{Oracle} solves the integer program~\eqref{eq:worst-case-payoff} \linebreak to find the minimum terminal payoff over \(\Phi\). If this minimum is below \(\eta\), the minimizing payoff vector violates a \textsc{Master} constraint and is added to \(\Phi_{\mathrm{work}}\). Otherwise, the candidate is feasible for the full problem and hence globally optimal. This constraint-generation procedure \citep{kelley1960cutting,grotschel1988geometric} terminates finitely, since each violated constraint contributes a previously unseen payoff vector. Moreover, a zero \textsc{Master} optimum already rules out arbitrage, without requiring an \textsc{Oracle} call.

For exceptionally large active components, one could trade detection power for computational efficiency by replacing the integer \textsc{Oracle} with its LP relaxation \citep{nemhauser1988integer,wolsey2020integer}---$\min_{\bm{z}\in\Phi_{\mathrm{frac}}}\bm{y}^\top\bm{z}$, where $\Phi_{\mathrm{frac}}=\{\bm{z}\in[0,1]^{I}:F\bm{z}\leq\bm{g}\}$ is the LP-relaxation polytope---an alternative approach not pursued in this study.
Since \(\Phi\subseteq\Phi_{\mathrm{frac}}\), the relaxed minimum provides a lower bound on the true worst-case payoff. As in Proposition~\ref{prop:underidentification}, this relaxation may cause genuine arbitrage opportunities to be missed, but any opportunities detected remain valid. 

\enlargethispage{.25\baselineskip}

\subsection{Early Liquidation}
\label{app:early-liquidation}

The quoted-price benchmark also permits the arbitrageur to exit before resolution:
for the optimal portfolio of Section~\ref{sec:coherent}, any future coherent price vector
$\bm{p}_\tau\in M$ yields
\begin{equation}
\label{eq:early-liquidation}
\bm{y}^{\medstar\top}(\bm{p}_\tau-\bm{p})
\ge
\min_{\bm{\mu}\in M}\bm{y}^{\medstar\top}(\bm{\mu}-\bm{p}),
\end{equation}
so an arbitrageur who liquidates at these prices earns at least the profit guaranteed at resolution.

Extending this conclusion to a finite-depth order book requires the future liquidation order book. \linebreak
If $K_0(\bm{y})$ is the entry cost and $K_\tau$ the execution-cost function of the time-$\tau$ order
book, closing at $\tau$ produces liquidation profit $-K_0(\bm{y})-K_\tau(-\bm{y})$,
which depends on future spreads, depth, and fees that are unknown at entry.
Resolution avoids this uncertainty because the contracts pay directly according to $\bm{z}\in\Phi$,
requiring no further market transaction.
We therefore treat resolution as the guaranteed exit; earlier liquidation realizes the arbitrage
sooner only when the future order book supports it. \looseness=-1

\subsection{Additional Logical Relations}
\label{app:logical-examples}

In addition to Figure~\ref{fig:sell-buy-asymmetry}, Figure~\ref{fig:additional-relations}
illustrates partition and conjunction relations in three dimensions.

\begin{figure}[!h]
\centering
\includegraphics[width=\textwidth]{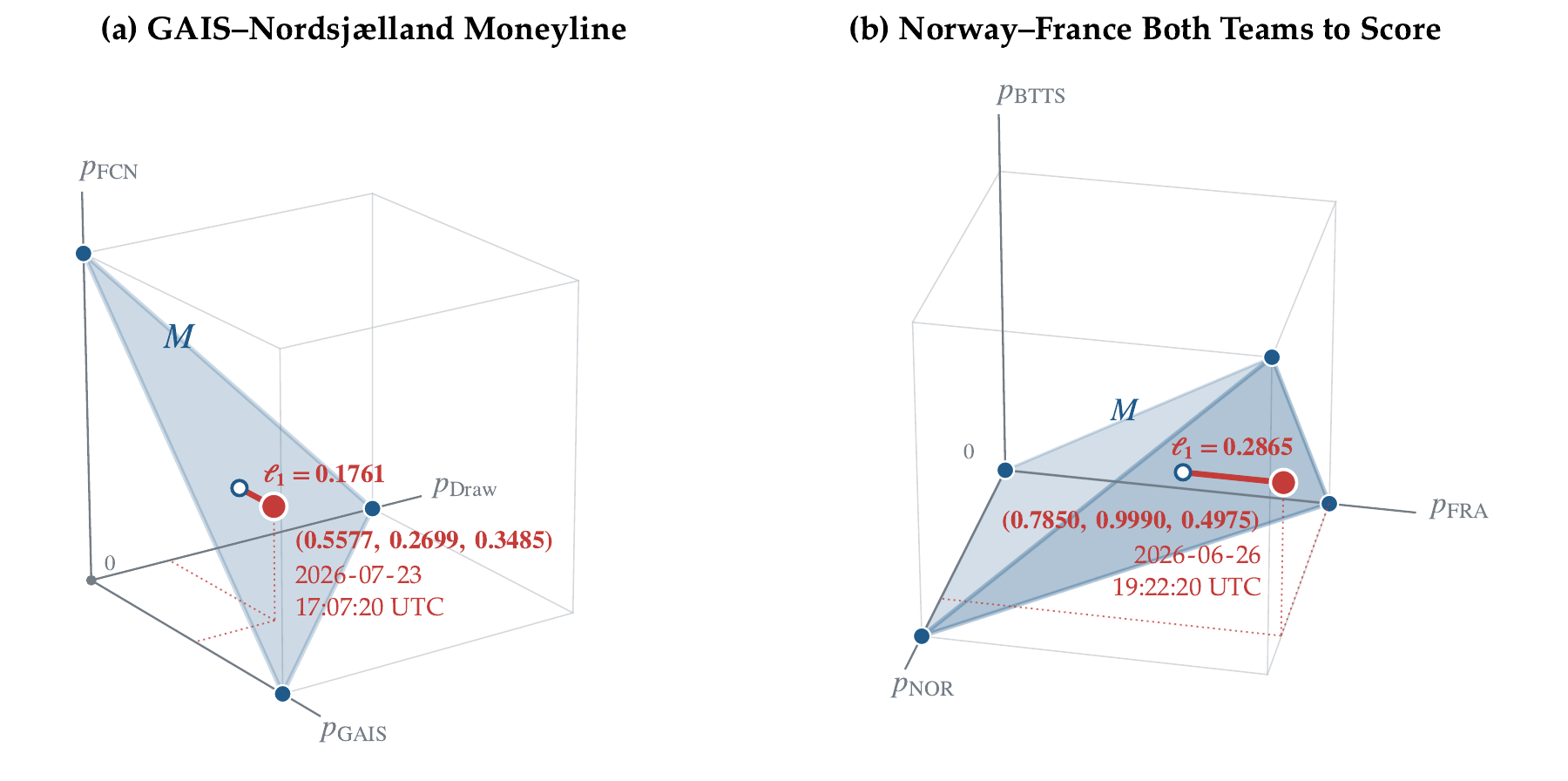}
\caption{Price coherence for partition and conjunction relations.
Panel (a) shows the GAIS--Nordsj\ae{}lland win and draw contracts;
panel (b) shows the Norway--France both-teams-to-score relation.
Red segments mark the $\ell_1$ distance to coherence.}
\label{fig:additional-relations}
\end{figure}

Panel (a) of Figure~\ref{fig:additional-relations} shows three \YES{} contracts
``\href{https://polymarket.com/event/col-gai-nor-2026-07-23}{GAIS win},''
``\href{https://polymarket.com/event/col-gai-nor-2026-07-23}{Draw},''
and
``\href{https://polymarket.com/event/col-gai-nor-2026-07-23}{FC Nordsj\ae{}lland win},'' 
with Polymarket market IDs 
\href{https://gamma-api.polymarket.com/markets/2974928}{2974928}, 
\href{https://gamma-api.polymarket.com/markets/2974929}{2974929}, and 
\href{https://gamma-api.polymarket.com/markets/2974930}{2974930}.
The contracts form an exhaustive partition, so their joint payoff vector can take the values
$(1,0,0)$, $(0,1,0)$, and $(0,0,1)$, corresponding to a GAIS win, a draw, and a Nordsj\ae{}lland win, respectively.
Their convex hull is the blue triangular coherent set~$M$. 
At 17:07:20 UTC on 23 July 2026, the fee-adjusted bid vector $(0.5577,0.2699,0.3485)$ lies outside
$M$ at $\ell_1$ distance $0.1761$, the red segment along the GAIS axis:
an arbitrageur who sells one share of each contract receives \$1.1761 initially and pays at most one
dollar at resolution, guaranteeing \$0.1761 after fees.

Panel (b) considers the contracts
``\href{https://polymarket.com/sports/world-cup/fifwc-nor-fra-2026-06-26}{Norway Over 0.5 goals},''
``\href{https://polymarket.com/sports/world-cup/fifwc-nor-fra-2026-06-26}{France Over 0.5 goals},''
and
``\href{https://polymarket.com/sports/world-cup/fifwc-nor-fra-2026-06-26}{Both Teams to Score},''
with Polymarket market IDs 
\href{https://gamma-api.polymarket.com/markets/2481818}{2481818}, 
\href{https://gamma-api.polymarket.com/markets/2481821}{2481821}, and 
\href{https://gamma-api.polymarket.com/markets/2323839}{2323839}.
The third contract is the conjunction of the first two, so their joint payoff vector can take the
values $(0,0,0)$, $(1,0,0)$, $(0,1,0)$, and $(1,1,1)$, whose convex hull is the blue tetrahedral
coherent set $M$.  At 19:22:20 UTC on 26 June 2026, the fee-adjusted bids for the two team totals
and the fee-adjusted ask for the both-teams-to-score contract form the execution quote vector
$(0.7850,0.9990,0.4975)$, which violates the Fr\'echet lower bound $0.7850+0.9990-1=0.7840$ on the
third coordinate; the red segment marks the $\ell_1$ distance $0.2865$ to $M$.
At the best quotes, an arbitrageur can guarantee \$0.2865 by selling one share of each of the first
two contracts and buying one share of the both-teams-to-score contract.
Trading against the 32 shares displayed yields total guaranteed profit of \$9.17 after fees.

\subsection{Two Sides of Negative-Risk Partitions}
\label{app:partition-sides}

\enlargethispage*{.5\baselineskip}

Polymarket describes negative risk as 
``\href{https://docs.polymarket.com/concepts/negative-risk}%
{a mechanism for multi-outcome events where only one outcome can win},''
and its adapter converts a \NO{} token of any contract into one \YES{} token of every other
contract in the same negative-risk group of $n$ contracts;
it neither converts the resulting bundle of \linebreak \YES{} tokens back into the original \NO{} token,
nor does it convert a \YES{} token---together with $(n-2)$ dollars in collateral---into one \NO{} token
of every other contract.\footnote{In general, Polymarket's negative-risk adapter
allows a trader to convert $m$ \NO{} tokens with $0\le m \le n$
(by convention, $m=0$ mints one \YES{} token per contract;
see also Footnote~\ref{fn:one-exception}) into $(m-1)$ dollars in
collateral and one \YES{} token for each of the
remaining contracts, denoted by \NOtoYES{},
but not the reverse, that is, it does not convert $m$ \YES{} tokens \linebreak
\textls[-5]{($m\ge0$) together with $(n-m-1)$ dollars in collateral
into the remaining $(n-m)$ \NO{} tokens, denoted \YEStoNO{}} \linebreak
(\href{https://github.com/Polymarket/neg-risk-ctf-adapter/blob/%
main/docs/NegRiskAdapter.md\#convertpositions}{Polymarket (GitHub), \texttt{NegRiskAdapter.md}}).
By the split--merge identity $\YES{}_j+\NO{}_j=\$1$,
the two unsupported $m$-\YES-\linebreak{}to-$(n\!-\!m)$-\NO{}
conversions ($m$ takes the values $n-1$ and $1$ here, respectively)
would yield the same payoff surplus:
\vspace{-5pt}\begin{equation}\vspace{-5pt}
\label{eq:why-only-NotoYES}
\NO{}_i-\sum\nolimits_{j\ne i}\YES{}_j
=\sum\nolimits_{j\ne i}\NO{}_j-\big[\YES{}_i+\$(n-2)\big]
=\$1-\sum\nolimits_{j=1}^{n}\YES{}_j.
\end{equation}
This surplus arises precisely when no contract wins,
reflecting the same failure of exhaustiveness.}

The asymmetry follows from what the adapter guarantees: at most one contract resolves \YES{}, \linebreak
possibly none.
A \NO{} token therefore pays  
no less than the other contracts' \YES{} tokens combined
in every admissible state,
with the surplus when no contract wins,
so it can be exchanged for them; \linebreak
\textls[-5]{the reverse exchange would pay out more than it takes in when no contract wins,
so it is not offered.}\looseness=-1\footnote{The forward exchange is safe for the adapter, 
because in no admissible state does it pay out more than
it takes in, and the reverse exchange is not.
Both claims follow directly from~\eqref{eq:why-only-NotoYES}.
Verbally, if the \NO{} token pays zero, its \YES{} counterpart pays one, 
forcing all the other \YES{} tokens to pay zero;
if it pays one, the other \YES{} tokens sum to at most one, 
and to zero when no contract wins, 
which implies a one-dollar payoff loss for the trader.
The reverse exchange is therefore unsafe for the adapter.}

Trades against resting orders were subject to the applicable taker fees during our sample period,
whereas split, merge, and negative-risk \NOtoYES{} conversion
incurred no protocol fees.\looseness=-1\footnote{Polymarket charges taker fees:
a taker who trades $C$ shares at price $p$ pays $C\,r\,p(1-p)$,
where $r$ is the contract's fee rate. \linebreak
By contrast, 
none of the three operations---split, merge, or
\NOtoYES{} conversion (for negative-risk groups only)---carried a fee 
during the period, as three pieces of evidence show.
First, Polymarket's fee schedule charges a fee only to the taker of a trade, and it lists no fee
for split, merge or conversion
(\href{https://docs.polymarket.com/trading/fees}{Polymarket, ``Fees''}).
Second, the contracts that carry out these operations contain no fee for split or merge,
and the contract that performs the conversion allows only a conversion fee, which is fixed when a
group is created
(\href{https://github.com/Polymarket/neg-risk-ctf-adapter/blob/%
main/src/NegRiskAdapter.sol\#L330-L331}{Polymarket (GitHub),
\texttt{NegRiskAdapter.sol}}); Polygonscan verifies the source code of all three smart contracts
against those deployed on Polygon
(\href{https://polygonscan.com/}{Polygonscan}:
\href{https://polygonscan.com/address/0xAdA100Db00Ca00073811820692005400218FcE1f\#code}{\texttt{CtfCollateralAdapter}},
\href{https://polygonscan.com/address/0xadA2005600Dec949baf300f4C6120000bDB6eAab\#code}{\texttt{NegRiskCtfCollateralAdapter}},
\href{https://polygonscan.com/address/0xd91E80cF2E7be2e162c6513ceD06f1dD0dA35296\#code}{\texttt{NegRiskAdapter}}).
We read this conversion fee from the smart contract
\href{https://polygonscan.com/address/0xd91E80cF2E7be2e162c6513ceD06f1dD0dA35296\#code}{\texttt{NegRiskAdapter}}
on Polygon for each of the 20{,}866 negative-risk groups in
our sample,  and it is zero for every one of them.
Third, in on-chain transactions sampled across the period, every split, merge and conversion
paid out in full, with nothing deducted. \looseness=-1}

A seller of a \YES{} basket can therefore proceed as follows:
split one dollar into one \YES{} and one \NO{} token of any contract,
and then convert the \NO{} token into one \YES{} token of every other contract \linebreak
to mint the basket of one \YES{} token of every contract in the group.
The seller can sell these tokens and collect the proceeds at once,
whereas a buyer of the basket collects its payoff only at resolution.
In practice, the seller needs only a few liquid members whose bids carry the overpricing
to secure (part of) a guaranteed profit.
Alternatively, the seller can buy and hold the corresponding \NO{} tokens until resolution,
earning more than the guaranteed profit if no contract in the non-exhaustive group wins. \looseness=-1

A buyer of a \YES{} basket, by contrast, is covered only after buying every member, one leg each, \linebreak
and only in a group whose rules make exactly one contract win; selling needs only that at most one wins, 
so the seller's profit is guaranteed in every negative-risk group, not only in the partitions.
\clearpage

In the Algeria--Austria group (three contracts:
``\href{https://polymarket.com/sports/world-cup/fifwc-alg-aut-2026-06-27}{Algeria wins}'',
``\href{https://polymarket.com/sports/world-cup/fifwc-alg-aut-2026-06-27}{Draw}'',
and
``\href{https://polymarket.com/sports/world-cup/fifwc-alg-aut-2026-06-27}{Austria wins}'', \linebreak
of which exactly one is \YES{})
of Section~\ref{sec:identify}, 
the two moments below---both in the first half of the match on 28 June 2026 (kick-off at 02:00 UTC)---make the asymmetry concrete.
Purely to keep the arithmetic simple, we use bids and asks before taker fees instead of the
fee-adjusted ones used elsewhere in the paper
(the corresponding Figure~\ref{fig:partition-sides-v2} nonetheless includes these fees).

\emph{The sell side.}  At 02:44:45 UTC the bids of the three \YES{} contracts were $0.16$, $0.31$ and
$0.57$, which sum to $1.04$, more than one dollar.
\begin{itemize}[itemsep=-0.1em, topsep=0em]
\item With the mechanism: split one dollar on
  ``\href{https://polymarket.com/sports/world-cup/fifwc-alg-aut-2026-06-27}{Algeria wins}''
  into one \YES{} and one \NO{} of it, and convert that \NO{} into one \YES{} each of
  ``\href{https://polymarket.com/sports/world-cup/fifwc-alg-aut-2026-06-27}{draw}'' and
  ``\href{https://polymarket.com/sports/world-cup/fifwc-alg-aut-2026-06-27}{Austria wins}'';
  this mints a basket of three \YES{} tokens for one dollar.
  Sell all three \YES{} at the bids for $1.04$ and net $0.04$;
  the dollar comes back on the spot.
\item The same result can be reached without touching the \YES{} tokens: buy the three \NO{}
  directly at their asks of $0.84$, $0.69$ and $0.43$, paying $1.96$ in all, and convert the three
  \NO{} into two dollars in one call, on the spot.  The net gain is also $0.04$, and again nothing
  is held.
\item Without the mechanism: buy the three \NO{} for $1.96$ and hold them to
  resolution, where they pay two dollars.  The gain is the same $0.04$, but the $1.96$ dollars stay
  locked until the match ends. \linebreak
  What the negative-risk mechanism does is to return those $1.96$ dollars early. Splitting one
  dollar on each contract and selling the three \YES{} at the bids comes to exactly the same thing:
  it leaves the same three \NO{} in hand and the same $1.96$ dollars locked.
\end{itemize}

\emph{The buy side.}  At 02:28:24 UTC the asks of the three \YES{} contracts were $0.20$, $0.36$
and $0.35$, which sum to $0.91$, less than one dollar; the bids at the same moment were $0.08$,
$0.32$ and $0.33$.
\begin{itemize}[itemsep=-0.1em, topsep=0em]
\item Buy the three \YES{} for $0.91$ and hold the basket.  At resolution exactly
  one of them pays one dollar, so the gain is $0.09$, but the $0.91$ dollars stay locked until the
  match ends.
\item Selling a basket of \NO{} instead leads to the same position: split one
  dollar on each of the three contracts, paying three dollars for three \YES{} and three \NO{}, and
  sell the three \NO{} at their bids of $0.80$, $0.64$ and $0.65$ for $2.09$ in total.  The net cash outlay is
  still $0.91$, and the holding is still the basket of \YES{}: it is the same trade under another
  name.
\item The only way to cash out early is to sell the three \YES{} back at their bids, which brings
  only $0.73$, a loss of $0.18$; conversion cannot help, because it consumes \NO{} and there is no
  \NO{} in hand.  So the buy side's $0.09$ waits for resolution.
\end{itemize}

\enlargethispage{.25\baselineskip}

Figure~\ref{fig:partition-quotes-v2} draws the two moments (fee excluded): 
the bid and ask prices of each \YES{} and \NO{} token---the 
\NO{} prices mirroring the \YES{} prices---and 
the sums of the \YES{} prices and of the \NO{} prices
against the coherent values one and two, respectively.
Figure~\ref{fig:partition-sides-v2} shows the same two moments with
fees included---the sports fee rate was still $0.03$ on 28 June 2026; Polymarket raised it to $0.05$ on 10 July 2026
(\href{https://docs.polymarket.com/changelog/predictions\#jul-10-2026}{Polymarket, ``Changelog''}).
At 02:44:45 UTC, as Algeria's equalizer reached the books, \linebreak
the fee-adjusted bids summed to $1.0222$; 
the replay minted 68 baskets, sold them for \$69.51 and earned \$1.51 at once.
At 02:28:24 UTC, as Austria scored, the fee-adjusted asks summed to $0.9285$; \linebreak
the replay bought 136 shares of each contract for \$126.28 and earned \$9.72 at resolution. 
\clearpage

\begin{figure}[!p]
\centering
\includegraphics[width=\textwidth]{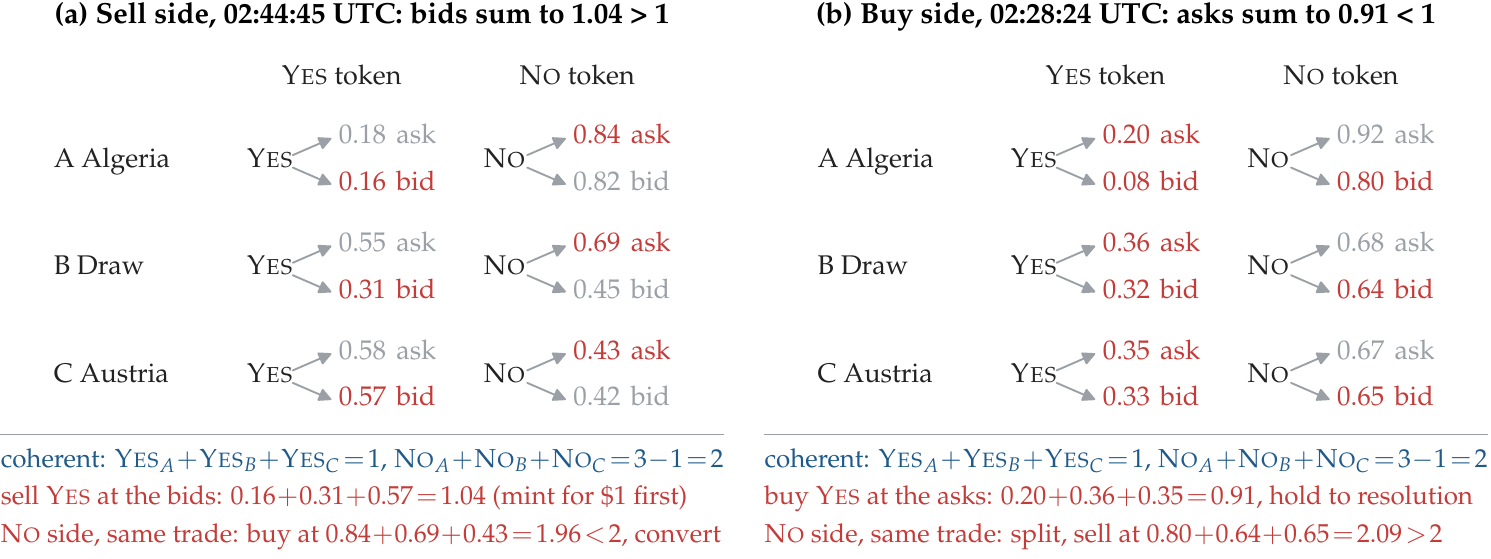}
\caption{The quotes (fees excluded) of three contracts at two moments of the Algeria--Austria 
World Cup match on 28 June 2026: panel (a) at 02:44:45 UTC and panel (b) at 02:28:24 UTC.
Each contract's \YES{} and \NO{} tokens branch into a bid (sell) and an ask (buy); a \NO{} bid is one minus the \YES{} ask
and a \NO{} ask is one minus the \YES{} bid.  The prices the exposition on the previous page uses are in red, the others in
gray.  Coherent prices sum to one over the \YES{} tokens and to two over the \NO{} tokens. \linebreak (a) The
bids sum to $1.04$: sell the three \YES{} after minting, or buy the three \NO{} for $1.96$ and
convert.  (b) The asks sum to $0.91$: buy the three \YES{} and hold, or split and sell the three
\NO{} for $2.09$, the same trade.}
\label{fig:partition-quotes-v2}
\vspace{-10pt}
\end{figure}

\begin{figure}[!p]
\centering
\includegraphics[width=\textwidth]{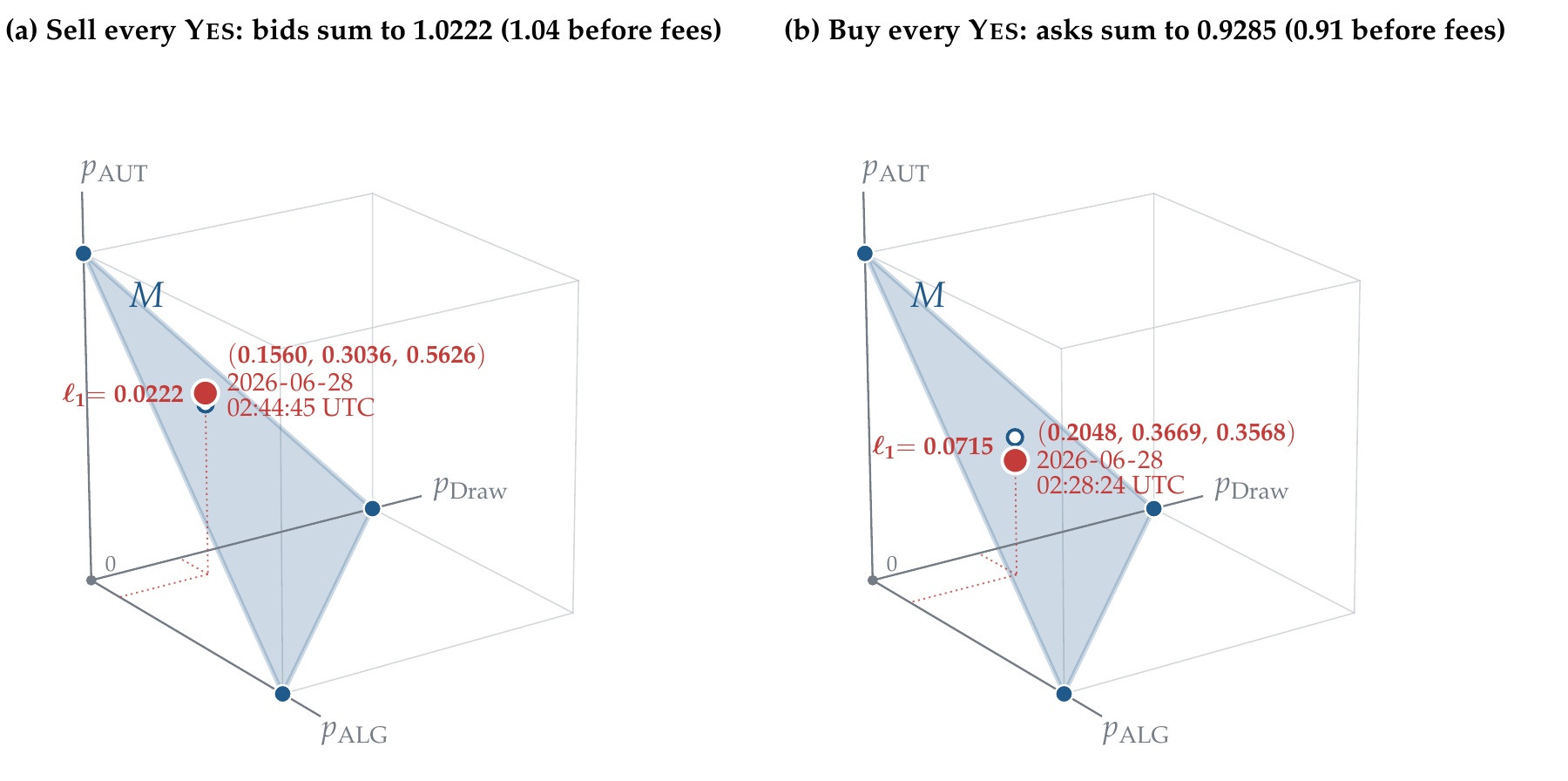}
\caption{Two sides of a partition: the Algeria--Austria match-result
contracts on 28 June 2026, with the coherent set $M$ in blue and the red segment marking the
$\ell_1$ distance.  (a) At 02:44:45 UTC the fee-adjusted bids sum to $1.0222$ ($1.04$ before fees), above $M$: minting
a basket for one dollar and selling every \YES{} earns $0.0222$ per basket at once.  (b) At
02:28:24 UTC the fee-adjusted asks sum to $0.9285$ ($0.91$ before fees), below $M$: buying every \YES{} earns $0.0715$
per basket at resolution.}
\label{fig:partition-sides-v2}
\end{figure}

\FloatBarrier
\section{Data, Relation Extraction, and Layered Detection}
\label{app:data}

\subsection{Data Construction}
\label{app:dataconstruction}

\subsubsection{Metadata and Resolution Criteria}
\label{app:metadata}

We retrieve the market catalog from Polymarket's Gamma API, following its pagination to
collect the event, condition, and outcome-token records.
We retain the raw responses before constructing \linebreak the dataset.
The catalog supplies the event hierarchy, condition and outcome-token identifiers, trading status,
fee parameters, negative-risk identifiers, and the displayed resolution text.
Because displayed catalog text can change after listing, it is not by itself a listing-time record
of the rules under which a contract was introduced.

Polymarket's UMA adapters, both regular and negative-risk, emit \texttt{QuestionInitialized} events whose
ancillary data record the resolution question when a condition is initialized,
so the listing-time rule is recoverable on chain through the question and condition identifiers.
The relation audit reads the catalog text preserved at retrieval.
We do not compare that text with the listing-time record, so the checks assess relations under the
displayed rules at retrieval.

\subsubsection{Live Full-Depth Order-Book Collection}
\label{app:collector}

Polymarket does not archive historical full-depth order books, so we collect its live order-book
messages.  Contracts can begin and finish trading between refreshes of the contract list.
To capture these contracts, we first retrieve the list of contracts accepting orders and subscribe to the
messages of every outcome token, spreading the tokens over as many WebSocket connections as needed,
with at most 600 on each to avoid exceeding the token limits per connection that Polymarket's
server accepted.
We then monitor \texttt{new\_market} messages and add newly listed contracts within about 30
seconds.  Every three hours, we refresh the contract list, remove contracts no longer accepting orders,
and rebuild the connections.  A \texttt{book} message reports bid and ask quantities at every
displayed price level; a subsequent \texttt{price\_change} message replaces the quantity at a
specified level.  Reconnecting supplies a new snapshot.
After verifying that updates for the \YES{} and \NO{} tokens are complementary,
we retain the canonical \YES{}-side book described in Section~\ref{sec:institution}.

\subsubsection{Offline Order-Book Reconstruction}
\label{app:reconstruction}

For detection, we initialize each order book from its latest valid snapshot and apply subsequent
price-level updates in exchange-timestamp order.
We apply all messages sharing a timestamp before evaluating the books and retain both synchronized
best quotes and the full-depth messages.  Changes in quotes or contract eligibility determine when
we evaluate each connected component.  Appendix~\ref{app:measurement-details} explains the ordering
and checks for gaps.

For each detected arbitrage window, we reconstruct the full-depth books from its opening to closing, 
starting from the latest valid snapshots within the preceding six hours.
The arbitrageur buys at the asks and sells at the bids, pays the contract-specific taker fee at each
price level, and reoptimizes quantities over the depth remaining after earlier executions.
The arbitrageur cannot reuse consumed depth, even if an unchanged snapshot still displays it.
These books supply the execution-cost function in Section~\ref{sec:executable}.

\subsubsection{On-Chain Fills}
\label{app:onchain-fills}

We decode every \texttt{OrderFilled} event of the CTF Exchange V2 and Neg Risk CTF Exchange V2
smart contracts over the sample period into outcome token, side, price, quantity, fee,
maker and taker identifiers, transaction hash and log index, and block timestamp.
The token identifiers link each fill to the catalog and the reconstructed order books,
and shared transaction hashes identify fills settled in the same transaction.
Recorded maker and taker fields identify addresses rather than ultimate economic owners;
Section~\ref{sec:takers} matches fills to the detected portfolios by contract, direction, and time. \looseness=-1

\subsubsection{Contract Resolutions}
\label{app:at-contract-resolutions}

We retrieve contract resolution outcomes from Polymarket's Gamma API and query its CLOB API
when a confirmed outcome is unavailable. We identify payoffs from explicit resolution or
winner records. We save the outcomes and retrieval timestamps in a frozen snapshot for the
settlement analysis in Section~\ref{sec:implementation} and Appendix~\ref{app:measurement-details}.
This collection takes place after the sample period ends and arbitrage detection is complete.

\subsection{LLM Relation Extraction}
\label{sec:prompt}

\subsubsection{Candidate Retrieval}
\label{app:candidate-retrieval}

We represent each contract by its event title, question, and group title.
Words and adjacent word pairs receive inverse-document-frequency weights,
and we rank contract pairs by cosine similarity or weighted containment.
The search runs separately within four broad topic groups.
Contracts with the venue's typed sports fields are handled by venue-derived relations;
other contracts concerning sports remain eligible for semantic inference.
For each contract, we keep up to eight candidates from other events and three from the same event
above a fixed similarity threshold.  We exclude pairs within the same negative-risk group and pairs
with nonoverlapping trading intervals.  Candidate groups contain between three and twelve contracts,
drawn either from all contracts listed under an event or from contracts linked by the pairwise
search.

\subsubsection{Prompt Templates}
\label{app:prompt-templates}

The two prompts beginning on the next page are reproduced verbatim.
The first asks the model to identify pairwise relations,
and the second to identify multivariate relations;
both treat the model's output as a proposal requiring further review.
The audits described in Section~\ref{sec:identify} and detailed in Section~\ref{app:llm-examples}
use the catalog text preserved at retrieval. 

\newpage
\begin{Verbatim}[
  fontsize=\scriptsize,
  frame=single,
  framesep=4pt,
  vspace=0pt,
  numbers=left
]
SYSTEM_PROMPT = """You extract candidate logical relations between prediction-market YES outcomes.
Judge from the supplied as-of-retrieval question and resolution text only. A relation must hold in every
possible world allowed by the rules; matching topics or one realized outcome is not enough.

Labels:
- equivalent: A and B resolve Yes in exactly the same worlds.
- a_implies_b: whenever A resolves Yes, B must resolve Yes, but not necessarily conversely.
- b_implies_a: whenever B resolves Yes, A must resolve Yes, but not necessarily conversely.
- disjoint: A and B cannot both resolve Yes.
- overlap_uncertain: related outcomes, but no listed relation is certified by the supplied rules.
- unrelated: not the same underlying proposition or no meaningful logical connection.

Be conservative about deadline, measurement source, jurisdiction, office/control versus election win,
temporary versus permanent events, and market-creation lower bounds. Evidence strings must be exact
substrings copied from the supplied A or B fields. Model output is only a candidate for manual audit."""
\end{Verbatim}

\begin{Verbatim}[
  fontsize=\scriptsize,
  frame=single,
  framesep=4pt,
  vspace=0pt,
  numbers=left
]
SYSTEM_PROMPT = """You extract candidate multivariate logical relations among prediction-market YES outcomes.
Judge only from the supplied as-of-retrieval question and resolution text. A relation must hold in every
possible world allowed by those rules. The result is a candidate for listing-time manual audit, not authority.

Return zero or more irreducibly group-level relations:
- at_most_one: no two listed members can both resolve Yes; sum(members) <= 1.
- at_least_one: at least one listed member must resolve Yes; sum(members) >= 1.
- exactly_one: exactly one listed member resolves Yes; sum(members) = 1.
- complete_partition_of_parent: mutually exclusive children completely partition parent;
  sum(members) = parent.
- incomplete_partition_of_parent: mutually exclusive children are only a subset of parent;
  sum(members) <= parent. Do not use merely because each child implies parent unless mutual
  exclusivity is also certified.
- union_equivalent: parent resolves Yes iff at least one member resolves Yes; members may overlap.
- intersection_equivalent: parent resolves Yes iff every member resolves Yes.

Do not emit ordinary pairwise implication/equivalence/disjoint; the pairwise branch handles those.
Every condition id must come from the supplied group. Parent must be empty for at_most_one,
at_least_one, and exactly_one; otherwise it must be a supplied condition not repeated among members.
Use the smallest sufficient member set. Missing markets matter: when group_complete_view is false,
do not infer exhaustiveness merely because the displayed list looks complete. An exhaustive relation
may still be emitted if the supplied rule text itself proves it. Evidence quotes must be exact
substrings from each involved market's supplied fields. Be conservative about deadlines, measurement
sources, fallback outcomes, other/none categories, cancellation rules, and changed resolution rules."""
\end{Verbatim}

\subsubsection{Illustrative Model Outputs and Audit Responses}
\label{app:llm-examples}

The following three examples show how the proposal models classify relations and how the audit models
assess them.  We reproduce here the substantive fields of their pairwise proposal outputs and audit verdicts.
The examples cover sports, politics, and finance and include both retained and rejected proposals.
In each case, the two audit models independently read the resolution criteria and assess the
proposed relation.  The proposal model's label, confidence, and explanation do not serve as
evidence.
All three examples apply the same rule: we retain a proposal only when both audit models
independently support the same relation from the contract criteria.
A shared topic does not establish a logical relation, whereas an admissible counterexample suffices to reject it.

\newpage
\paragraph{Sports: Qwen3-235B-A22B (retained).}
The input pairs contract A, ``Will the announcers say `Goal' 20+ times during the Czechia vs Mexico
FIFA World Cup Match?,'' with contract B, the otherwise identical 40+ threshold.\vphantom{g}
\begin{Verbatim}[
  fontsize=\scriptsize,
  frame=single,
  framesep=4pt,
  vspace=-3pt,
  breaklines,
  numbers=left
]
relation_type: b_implies_a
confidence: high
a_evidence: ["Will the announcers say 'Goal' 20+ times during the Czechia vs Mexico FIFA World Cup Match?"]
b_evidence: ["Will the announcers say 'Goal' 40+ times during the Czechia vs Mexico FIFA World Cup Match?"]
rationale: "Both markets refer to the same event (Czechia vs Mexico World Cup match on June 24, 2026) and the same term ('Goal'). If the announcers say 'Goal' 40+ times (B), then they must have said it 20+ times (A). So B implies A. However, A does not imply B (could be 25 times). Therefore, the correct relation is b_implies_a."
\end{Verbatim}
\vspace{-5pt}
The final audit outcome is as follows.
\auditbox{
\textbf{Claude Fable 5.1 audit:}
\texttt{b\_implies\_a}, high confidence.\par
\textbf{GPT-5.6 Sol audit:}
\texttt{b\_implies\_a}, high confidence.\par
\textbf{Decision:}
Both verify the same event, term, broadcast, and measurement window;
the 40+ threshold is nested within the 20+ threshold.  Retained.
}

\vspace{.75em}

\paragraph{Politics: Llama-4-Maverick (rejected).}
The input pairs contract A, ``Will Donald Trump visit Virginia in 2026?,'' with contract B,
``Will Donald Trump visit Michigan in 2026?''\vphantom{g}
\begin{Verbatim}[
  fontsize=\scriptsize,
  frame=single,
  framesep=4pt,
  vspace=-3pt,
  breaklines,
  numbers=left
]
relation_type: disjoint
confidence: high
a_evidence: ["Virginia", "2026"]
b_evidence: ["Michigan", "2026"]
rationale: "The markets are about visiting different states."
\end{Verbatim}
\vspace{-5pt}
The final audit outcome is as follows.
\auditbox{
\textbf{Claude Fable 5.1 audit:}
No relation, high confidence.\par
\textbf{GPT-5.6 Sol audit:}
No relation, high confidence.\par
\textbf{Decision:} Visiting one state does not preclude visiting
the other; $A=B=1$ is feasible.  Rejected.
}

\vspace{.75em}

\paragraph{Finance: GLM-5.2 (retained).}
The input pairs contract A, ``Will Databricks' valuation hit (HIGH) \$160B by December 31?,''
with contract B, ``Will Databricks' valuation hit (HIGH) \$190B by July 31?'' The output shown is
the third independent GLM-5.2 evaluation of this pair.\vphantom{g}
\begin{Verbatim}[
  fontsize=\scriptsize,
  frame=single,
  framesep=4pt,
  vspace=-3pt,
  breaklines,
  numbers=left
]
relation_type: b_implies_a
confidence: high
a_evidence: ["Will Databricks' valuation hit (HIGH) $160B by December 31?"]
b_evidence: ["Will Databricks' valuation hit (HIGH) $190B by July 31?"]
rationale: "If Databricks' valuation reaches $190B by July 31, it necessarily reaches or exceeds $160B by December 31 (since July 31 is before December 31). The converse does not hold."
\end{Verbatim}
\vspace{-5pt}
The final audit outcome is as follows.
\auditbox{
\textbf{Claude Fable 5.1 audit:}
\texttt{b\_implies\_a}, high confidence.\par
\textbf{GPT-6 Astra audit:}
\texttt{b\_implies\_a}, high confidence.\par
\textbf{Decision:}
Both verify the same Nasdaq Private Market series, a higher threshold for B,
and a narrower window for B.  Retained.
}
\clearpage

\subsection{Detecting in Three Layers}
\label{app:three-layers}

\begin{table}[!h]
\vspace{-12.5pt}
\centering
\caption{Arbitrage detection by relation layer}
\label{tab:layer-detection-v3}
\begin{threeparttable}
\footnotesize
\begin{tabular*}{\textwidth}{@{\extracolsep{\fill}}
                             >{\raggedright\arraybackslash}p{2.7cm}
                             >{\raggedright\arraybackslash}p{4.5cm}
                             >{\raggedright\arraybackslash}p{4.8cm}
                             >{\raggedright\arraybackslash}p{2.55cm}@{}}
\toprule
Layer & Admissible payoff set & What detection solves & Size of the system \\
\midrule
Venue-asserted
& Group constraints $\sum_{i\in G}z_i\leq1$, with equality for the 20{,}118 groups treated as exhaustive
& Sell when $\sum_{i\in G}\tilde b_i>1$ over the quoted members; buy when $\sum_{i\in G}\tilde a_i<1$ only for exhaustive groups with every member quoted
& 20{,}866 groups \\
\addlinespace[7.5pt]
Venue-derived
& Binary payoff vectors satisfying constraints derived from the common score state of a match
& Enumerate feasible payoff vectors within active blocks and maximize guaranteed profit at best quotes, subject to the leg cap
& 3{,}291 matches \\
\addlinespace[7.5pt]
Semantic-inferred
& $\Phi=\{\bm{z}\in\{0,1\}^{I}:F\bm{z}\leq\bm{g}\}$, with rows from parsed and model-inferred relations
& Enumerate feasible payoff vectors within active blocks and maximize guaranteed profit at best quotes, subject to the leg cap
& 118{,}898 rows, 61{,}349 contracts, 3{,}591 components \\
\bottomrule
\end{tabular*}
\begin{tablenotes}[flushleft]\footnotesize
\item\!\textit{Notes:} $\tilde b_i$ and $\tilde a_i$ are fee-adjusted best bids and asks, as in
Table~\ref{tab:relation-templates}.
\end{tablenotes}
\end{threeparttable}
\vspace{-5pt}
\end{table}

\paragraph{What each detector solves.}
We detect arbitrage separately within each relation layer.
Each layer supplies its own payoff constraints, which determine the worst-case payoff of a proposed
portfolio.  Table~\ref{tab:layer-detection-v3} summarizes these constraints and how we use them in
detection.

\paragraph{Venue-asserted and venue-derived relations.}
A negative-risk group of $n$ contracts admits $n+1$ payoff vectors:
$n$ where exactly one contract resolves \YES{}, and one where all contracts resolve \NO{}. \linebreak
For the 20{,}118 groups compiled as exhaustive partitions,
the all-zero vector is no longer admissible.
These constraints yield the bid and ask sum tests in Table~\ref{tab:layer-detection-v3}.
Selling uses whichever members are quoted,
whereas buying the basket requires a quote on every member of an exhaustive group. \linebreak
For venue-derived relations, we first map contracts on the same match to an $11 \times 11$ score grid
of $121$ states.
We use these states to derive exclusivity, partition, implication, 
and conjunction constraints. \linebreak Detection then minimizes over the binary payoff vectors admitted by
the compiled constraints, using the same constraints in the full-depth calculation.
This set can include payoff vectors that no score state produces, so its minimum payoff can be lower
than the minimum over the score grid.

\enlargethispage{.25\baselineskip}

\paragraph{Compiling the semantic-inferred layer.}
Semantic-inferred relations link contracts across events without a common observed variable,
such as a final score, that determines all their payoffs.
The compiled rows describe a payoff set in $\{0,1\}^{61{,}349}$ without requiring us to list every
vector.  Algorithm~\ref{alg:detector} provides a general solution by adding payoff vectors as
needed.  Our implementation instead enumerates the feasible vectors within each active block and
reuses them across quote states while the constraints remain unchanged.
For a given block and constraint system, this gives the same worst-case payoff as the \textsc{Oracle} step.
We do not evaluate active blocks containing more than 16 contracts in the semantic-inferred layer;
the corresponding caps are 32 for venue-asserted and 48 for venue-derived relations.
Doubling the semantic-inferred cap to 32 contracts changes detection little: \linebreak
the detector finds 160{,}891 arbitrage windows instead of 158{,}063, and 157{,}750 windows are
identical under both caps.  The 5-leg cap separately restricts venue-derived and semantic-inferred
portfolios; sales in venue-asserted groups are exempt, while purchases are not, as described in Section~\ref{sec:implementation}.

\paragraph{Three separate systems.}
We keep the three systems separate so that a portfolio's guaranteed profit remains attributable to
the relations that support it.  Combining them would allow portfolios to rely on constraints from
several sources.  As Section~\ref{sec:classification-errors} shows, an invalid constraint can
create a false guarantee, whereas an omitted valid constraint can only reduce the calculated
guaranteed profit.  We therefore assess each layer using its own evidence:
the venue's collateral mechanism and group rules, typed metadata, or contract text.

Some semantic-inferred relations repeat or overlap with venue-asserted (Layer~1) relations.
Among the 118{,}898 semantic-inferred rows, 133 involve only contracts within one
negative-risk group.  Of these 133 rows,
26 encode 13 partitions that coincide with partitions also imposed in Layer~1,
106 exclusivity constraints are strictly weaker than the corresponding constraints in Layer~1,
and the remaining row imposes an implication between members of the
``\href{https://polymarket.com/event/%
will-mrbeast-hit-billion-views-by-july-31-20260622172639870}%
{MrBeast YouTube Views}'', \linebreak whose cumulative-threshold wording
conflicts with its negative-risk designation.
On 23 June 2026, Polymarket announced immediate resolution to ``134 billion,''
ahead of the July~31 deadline, and refunds for losing shares held
at the time of the announcement.

Another 27{,}276 rows, or 22.9\% of the semantic-inferred system, include at least one negative-risk group member:
11{,}453 connect two or more groups, and 15{,}823 connect a group member with an ungrouped contract.
Keeping the systems separate excludes portfolios that require combining these relations with a negative-risk
group's mutual-exclusivity constraint.  No row includes a contract in Layer~2,
since those contracts are handled by score mappings and excluded from semantic inference. 

Following Section~\ref{sec:empirical}, we simulate one arbitrageur trading across all three relation layers.
The arbitrageur fills displayed depth only once and records consumed quantities in a shared
liquidity ledger.  This depth is unavailable for later trades, including those in other windows,
layers, or days.  If the same portfolio becomes profitable again, the arbitrageur can trade only
against depth not already consumed.  In the baseline, the arbitrageur can trade a quote re-posted at
a previously consumed price as new depth; Section~\ref{sec:incidence} examines this assumption.
The 491{,}827 detected windows contain 39{,}254 distinct portfolios, of which 26 appear in more than
one layer.  These portfolios account for 1{,}461 windows, including 350 executed windows with a total
first take of \$340.

\paragraph{Corrective excision.}
An audit on September 22, 2026 identified six parser-generated relations that incorrectly imposed
exhaustiveness.  The contracts permitted cases where all members resolve~\NO{}, 
including no offering, no prison time, and a rate increase between two contracts' cutoff dates. \linebreak
As an ex-post correction to the completed replay, we relax the affected equality constraints to
impose only mutual exclusivity and recalculate the guaranteed profit of the affected windows. 
We excise 112 detected windows and 97 executed windows
having negative guaranteed profits under the corrected constraints.
We also remove the corresponding first take and accumulated take, \linebreak
which reduces the first take by \$2{,}201 and the accumulated take by \$17{,}404,
all within the semantic-inferred layer.  Of these 97 executed windows, 82 had settled at a profit
(but these 82 settled windows were excised anyway),
illustrating why profitable settlement does not validate an ex-ante guarantee. 
We do not rerun the ledger or make depth consumed by excised trades available to other
trades. 
\clearpage

\subsection{From Raw Messages to Settlement: Measurement Details}
\label{app:measurement-details}

\noindent This section expands Section~\ref{sec:implementation}.
Figure~\ref{fig:measurement-pipeline} and Table~\ref{tab:measurement-validation}
show inputs, outputs, and key statistics. 

\begin{figure}[!h]
\vspace{5pt}
\centering
\includegraphics[width=\textwidth]{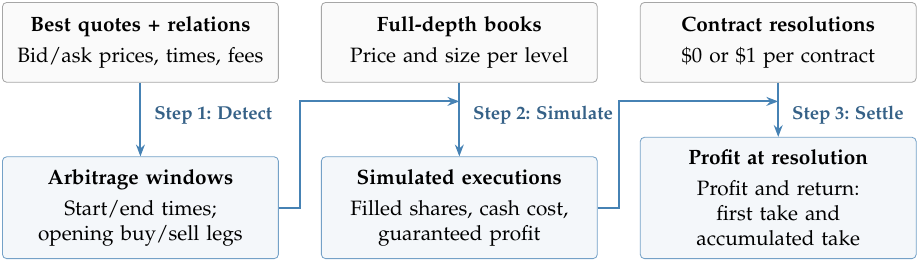}
\caption{Inputs (top) and outputs (bottom) of the three measurement steps.
Within each window, the arbitrageur trades against the displayed depth;
contract resolutions then determine profit.  Best quotes and full-depth books use the same raw
messages.}
\label{fig:measurement-pipeline}
\vspace{-3pt}
\end{figure}

\paragraph{Preparation: order-book reconstruction and quote checks.}
We reconstruct best quotes from raw snap\-shots and price-level updates.
We store the messages in a common format, preserving exchange and receipt timestamps, source order,
token identifiers, and quantities without sampling.  We also retain the raw archive.
We order messages by exchange timestamp, including delayed messages received in later files,
to align quotes across connections.  Messages with the same timestamp retain their source-file, row,
and within-message order.  We remove exact duplicates and apply all messages sharing a timestamp
before evaluating the books, so partial updates do not create artificial windows. \looseness=-1

A snapshot replaces both sides of the book.  A price-level update reports the new resting quantity,
not a change in quantity; zero removes the level.
We evaluate a book only after receiving a snapshot.
After applying each timestamp batch, we compare the reconstructed best quotes with those reported in
the final price-level update.  If they disagree, we exclude the book until agreement resumes or a
new snapshot restores it.  We record these gaps and separately count missing reported quotes.
This check verifies best quotes, but cannot establish the completeness of depth at other price
levels. \linebreak Both detection and full-depth reconstruction use the same exchange timestamps.

\paragraph{Step 1: detect arbitrage windows from best quotes.}
Detection uses quoted contracts and valid constraints for which all required contracts are
available.  Active blocks exceeding the respective layer caps of 32, 48,
and 16 contracts are recorded and excluded from evaluation.
Portfolios are capped at 5 legs, except for sales in venue-asserted groups
(see also Section~\ref{sec:implementation} and Appendix~\ref{app:partition-sides}).
We evaluate every relevant update, including changes in fees,
trading eligibility, and the validity of relations.
A previous solution is reused only when the objective, constraints, variable domains, bounds,
and solver settings are unchanged; quantities or fees can change the problem even when prices do not
move.  We continue checking book and window states when reusing a solution and retry failed
optimizations.  For each window, we save the opening portfolio, supporting relations,
profit and capital measures, and the half-open interval $[t_{\mathrm{open}},t_{\mathrm{close}})$.

We retain previously discovered or enumerated payoff vectors as \textsc{Master} constraints
across quote states as long as the constraint system remains unchanged,
since \(\Phi\) depends only on \((F,\bm g)\).
This caching can reduce the number of \textsc{Oracle} calls in Algorithm~\ref{alg:detector} or avoid repeated enumeration.
It should be possible to adapt our detector for use with a live quote feed rather than the replay: 
while the logical relations can be recovered in advance,
to keep pace with the feed would require handling delayed and out-of-order messages 
as well as solving the arbitrage problem fast enough.
We do not assess real-time performance.

\paragraph{Step 2: simulate arbitrage trades against full-depth order books.}
Within each detected window, the arbitrageur trades the opening portfolio's legs in their original
directions and reoptimizes quantities at each eligible update.
We retain the window's original opening and closing times, including when the arbitrageur has
exhausted its displayed depth.  We initialize each required book from the latest valid snapshot
within the preceding six hours.  Without such a snapshot, or during a recorded gap,
the arbitrageur cannot trade that contract.  The arbitrageur buys at the asks and sells at the bids,
paying the applicable taker fee at each price level.
If a leg is unavailable or depleted, its quantity is set to zero;
the arbitrageur may still trade a profitable feasible subset.

The arbitrageur records consumed depth by token, book side, and price in a shared liquidity ledger
across days, windows, and relation layers.  Let $O_{\ell,t}$ be the quantity displayed at price
level~$\ell$ after the update at time $t$, $O_{\ell,t^-}$ the quantity displayed just before that update
(zero for a level not seen before), $R_{\ell,t^-}$ the quantity remaining after previous executions,
and $q_{\ell,t}$ the quantity the arbitrageur trades at time $t$.
After an order-book update, the arbitrageur can trade the available quantity
$R^{\mathrm{available}}_{\ell,t}$; after execution, the remaining quantity is $R_{\ell,t}$:
\vspace{-2.5pt}\[\vspace{-2.5pt}
 R^{\mathrm{available}}_{\ell,t} =\max\{0,R_{\ell,t^-}+O_{\ell,t}-O_{\ell,t^-}\},
 \qquad R_{\ell,t}=R^{\mathrm{available}}_{\ell,t}-q_{\ell,t}.
\]
For example, an arbitrageur who has filled 100 displayed shares cannot fill them again while the
displayed quantity at that level stays at 100, since the ledger compares quantities, not orders.
If displayed depth rises to 130, the arbitrageur can fill 30 additional shares.
When opportunities overlap, the arbitrageur trades first in the window with the
highest guaranteed profit at displayed depth, breaking ties by window identifier.
Each trade reduces the depth available for subsequent trades.
This sequential allocation need not maximize profit jointly across windows.
We process windows sharing liquidity together in exchange-time order, preserving the allocation when
calculations run in parallel.

\enlargethispage*{.75\baselineskip}

The first take is the guaranteed profit from the arbitrageur's first execution,
which may occur after the window opens; the accumulated take sums guaranteed profit from all
executions in the window.  Windows without an execution remain in the count of detected windows.
Price-level data do not identify individual orders, so cancellation followed by re-posting can
supply new depth in the baseline.  The robustness checks vary the treatment of re-posted depth,
the participation rate, and order delay separately.
With order delay, the arbitrageur waits the specified delay after the relevant quote
information becomes available and then trades against the order book at execution time.
For every execution, we save the quantities filled at each price level, cash outlay, fees,
guaranteed profit, and committed capital.
\clearpage

\paragraph{Step 3: test arbitrage windows at resolution.}
We match the arbitrageur's simulated executions to contract resolutions.
Before calculating profit, we check that all outputs are present without duplication,
use the same inputs and settings, and preserve every detected window.
We also verify that execution-level amounts sum to window totals.
We assign resolution payoffs only to legs with positive filled quantities.
The saved resolution data link each contract to the token listed first in the catalog,
which we use as its canonical \YES{} side even when its displayed name differs.
We use explicit resolution or winner records to determine payoffs, not a closed market's last price.
Missing, unresolved, and void outcomes remain distinct and are never replaced by zero.
A window is settled only when every filled leg of every execution has a binary payoff.
We also record whether the first execution's legs have settled.

For each settled execution, profit at resolution is
$\bm y^{\top}\bm z^{\mathrm{resolved}}-K(\bm y)$, using the same filled shares, \linebreak cash flows,
and fees that produced its guaranteed profit.
Comparisons of the first take and the accumulated take use the same settled windows and the capital
committed to their respective executions.  For venue-asserted negative-risk portfolios,
the arbitrageur needs one dollar of collateral for each minted basket;
the number of baskets equals the largest quantity sold on any leg.
For other portfolios, the arbitrageur needs one dollar for each share sold on each leg.
Both measures also include cash paid for purchases.
We compare profit at resolution with guaranteed profit and record any shortfall.
Section~\ref{sec:takers} separately compares the simulated portfolios with actual trades.

\begin{table}[!t]
\centering
\caption{Data reconstruction, execution, and settlement}
\label{tab:measurement-validation}
\begin{threeparttable}
\footnotesize
\renewcommand{\arraystretch}{0.975}
\begin{tabular*}{\textwidth}{@{\extracolsep{\fill}}lr@{}}
\toprule
Statistic & Result \\
\midrule
\multicolumn{2}{@{}l}{\textit{Step 1: reconstruction and detection}} \\
Order-book messages (billions) & 73.22 \\
Reconstructed best-quote comparisons (billions) & 13.3 \\
Agreement with venue-reported best quotes & 97.2\% \\
Arbitrage windows detected & 491{,}827 \\
\midrule
\multicolumn{2}{@{}l}{\textit{Step 2: simulated arbitrage trades}} \\
Percentage of detected windows with an execution & 25.4\% \\
Executions & 462{,}151 \\
Percentage of executed windows first filled after opening & 12.2\% \\
Opening-to-first-execution delay, 99th percentile (seconds) & 231.9 \\
\midrule
\multicolumn{2}{@{}l}{\textit{Step 3: settlement}} \\
Percentage of executed windows settled & 89.8\% \\
Executions in settled windows & 391{,}383 \\
Settled windows earning less than guaranteed profit & 0 \\
\bottomrule
\end{tabular*}
\begin{tablenotes}[flushleft]
\footnotesize
\item\!\textit{Notes:} First-execution delays use exchange timestamps and cover executed windows. The final row checks both
the first take and the accumulated take.
\end{tablenotes}
\end{threeparttable}
\vspace{-7.5pt}
\end{table}

\enlargethispage*{.5\baselineskip}

\paragraph{Reproducibility checks.}
For each calculation, we record the code, software and solver versions, input checksums, settings,
and output counts.  The saved resolution data allow us to repeat the analysis at the same cutoff
date.  We check that raw and reformatted messages produce the same books,
that reusing unchanged optimization problems preserves the detected windows,
and that the arbitrageur cannot reuse depth across days or equivalent token representations.
We further check that delayed first executions are recorded correctly and
ensure that windows with unresolved or void filled legs are excluded from settlement comparisons.

\end{document}